\documentclass[mnsc,nonblindrev]{informs3} 

\RequirePackage{bm}
\RequirePackage{endnotes}

\OneAndAHalfSpacedXI 

\usepackage{amsmath,amssymb,amsfonts}

{\end{list}}

\def\b1{{\bf 1}}

\def\blot{\quad {$\vcenter{\vbox{\hrule height.4pt
             \hbox{\vrule width.4pt height.9ex \kern.9ex \vrule
width.4pt}
             \hrule height.4pt}}$}}

\usepackage{graphics}
\usepackage{graphicx}
\usepackage{epstopdf}
\usepackage{mathrsfs}
\usepackage{enumerate}
\usepackage{algorithm}
\usepackage{algpseudocodex}
\usepackage{booktabs}
\usepackage{float}
\usepackage{color}
\usepackage{xcolor}
\usepackage{multirow}
\usepackage{soul}
\usepackage{bibunits}
\usepackage{makecell}
\usepackage{pifont}

\usepackage{tikz}
\usetikzlibrary{calc,fit}
\usepackage[colorlinks=true, linkcolor=blue, citecolor=blue, urlcolor=blue]{hyperref}


\makeatletter
\newcommand{\StatexIndent}{\Statex\hspace{\ALG@thistlm}}
\makeatother

\usepackage{cases}
\usepackage{subfig}

\usepackage{natbib}
 \bibpunct[, ]{(}{)}{,}{a}{}{,}%
 \def\bibfont{\small}%

\algdef{SE}[DOWHILE]{Do}{doWhile}{\algorithmicdo}[1]{\algorithmicwhile\ #1}

\EquationsNumberedThrough    

\TheoremsNumberedThrough     
\ECRepeatTheorems  %

\renewcommand{\theARTICLETOP}{}
\begin{document}
\graphicspath{{figures/}}


\RUNAUTHOR{Sun et al.}

\RUNTITLE{Graph Generation Methods under Partial Information}

\TITLE{Graph Generation Methods under Partial Information}

\ARTICLEAUTHORS{%
\AUTHOR{Tong Sun}
\AFF{School of Management, Harbin Institute of Technology, Harbin, Heilongjiang 150001, China,\\ \EMAIL{24b910002@stu.hit.edu.cn}}

\AUTHOR{Jianshu Hao}
\AFF{Corporate Banking Department, China Construction Bank, Beijing 100033, China,\\ \EMAIL{haojianshu1.zh@ccb.com}}

\AUTHOR{Michael C. Fu}
\AFF{The Robert H. Smith School of Business, Institute for Systems Research, University of Maryland, College Park, MD 20742, USA, \EMAIL{mfu@umd.edu}}

\AUTHOR{Guangxin Jiang\footnotemark}  \footnotetext{Corresponding author}
\AFF{School of Management, Harbin Institute of Technology, Harbin, Heilongjiang 150001, China,\\ \EMAIL{gxjiang@hit.edu.cn}}
} 

\ABSTRACT{%
We study the problem of generating graphs with prescribed degree sequences for bipartite, directed, and undirected networks. We first propose a sequential method for bipartite graph generation and establish a necessary and sufficient interval condition that characterizes the admissible number of connections at each step, thereby guaranteeing global feasibility. Based on this result, we develop bipartite graph enumeration and sampling algorithms suitable for different problem sizes. We then extend these bipartite graph algorithms to the directed and undirected cases by incorporating additional connection constraints, as well as feasibility verification and symmetric connection steps, while preserving the same algorithmic principles. Finally, numerical experiments demonstrate the performance of the proposed algorithms, particularly their scalability to large instances where existing methods become computationally prohibitive.
}%


\KEYWORDS{Graph generation, Degree sequences, Enumeration and sampling, Systemic risk} 

\maketitle

%

\section{Introduction}\label{sec:Intro}

In many risk management problems, the network structure of interconnected entities is a key determinant of how shocks propagate and whether localized disturbances evolve into systemic events. Certain network structures may magnify contagion, allowing small shocks to cascade broadly across the system, whereas other structures may dampen propagation and support systemic resilience \citep{Acemoglu2012, FU2025}. These mechanisms are central not only in financial systems, where interconnections among institutions can channel distress across markets \citep{Eisenberg2001systemic, Adrian2016CoVaR, Huang2024monte}, but also in supply chains, where production bottlenecks may spread disruptions downstream \citep{Barrot2016, Carvalho2021, Birge2023disruption}. In both settings, the network structure critically shapes systemic risk, making its accurate characterization essential for risk assessment and the design of effective mitigation strategies.

In practice, however, the complete structure of a network is rarely observable. Typically, only partial information is available, from which the possible network structure could be inferred. In this paper, we focus on one of the most common and practically tractable forms of such information, namely the number of connections associated with each node, while the specific counterparts are unobserved. This type of degree information arises naturally in many settings. In financial systems, regulators may observe the number of assets held by each investment firm and the number of firms holding each asset, but not the precise ownership relations \citep{glassman2018}. In supply chains, statistical information such as the number of suppliers and customers for firms may be available, while the identities of individual trading partners are unobserved \citep{Mungo2024}.
In social networks, survey-based data may provide information on the number of social contacts individuals have, while the identities of their specific partners are unobserved \citep{Marsden2014}. This motivates the problem of reconstructing plausible network structures consistent with such partial information, thereby enabling meaningful risk analysis in settings where complete data are unavailable.

When the available partial information is limited to the number of connections associated with each node, the network reconstruction problem can be abstracted as graph generation with prescribed degree sequences. A degree is the number of edges incident to a node, and the degree sequence is the collection of degrees for all nodes in the network. The goal is to generate graphs that exactly satisfy a given degree sequence, subject to the restriction that the graphs are simple, that is, without self-loops or multiple edges. This abstraction directly captures the examples discussed above: asset-firm relationships in financial systems correspond to bipartite graphs in which assets and firms form two disjoint node sets; supply chains can be represented as directed graphs, with edges oriented from suppliers to customers to describe input--output flows; and social networks are commonly modeled as undirected graphs, where edges represent reciprocal social ties between individuals. In this way, generating graphs consistent with prescribed degree sequences provides a unifying framework for analyzing network reconstruction across diverse application domains. 

Previous studies have proposed a variety of methods to address the challenge of network reconstruction under partial information, which can be broadly divided into bipartite, directed, and undirected graphs according to the type of network. For bipartite graphs, \cite{verhelst2008efficient} developed a Markov chain Monte Carlo (MCMC) algorithm for sampling binary matrices with fixed marginals, while \cite{Barvinok2010} established an asymptotic link between uniformly sampled binary matrices and a corresponding maximum entropy solution. \citet{Miller2013} further proposed an exact sampling algorithm for binary contingency tables, achieving provably uniform sampling. For directed graphs,  \cite{gandy2017bayesian} developed a Bayesian MCMC framework to infer and sample weighted directed financial networks from aggregate balance-sheet data, while \cite{squartini2017network} proposed a maximum entropy based method to reconstruct weighted directed networks from partial information, inferring the most likely topology and link weights consistent with known node strengths and sampled link densities. For the undirected graph, \cite{wormald1999models} introduced the switching method for generating random $d$-regular undirected graphs, which preserve degree constraints and uniformly explore the space of feasible configurations, and \cite{fosdick2018configuring} investigated MCMC approaches for generating simple graphs with prescribed degree sequences. Recently, \cite{Glasserman2023} proposed a sequential algorithm using maximum entropy and applied the method to reconstruct bipartite, directed, and undirected graphs.


Existing methods are generally sampling-based and typically rely on MCMC procedures or repeated optimization, which face computational scalability challenges as the problem size increases. In small-scale settings, an efficient exact enumeration method may be preferred, since it generates the full set of feasible graphs and enables exact evaluation of graph statistics and systemic-risk measures. As the problem scale grows, the number of feasible graphs can grow exponentially or even super-exponentially, making exhaustive enumeration  computationally infeasible and motivating the need for scalable sampling algorithms.

In this paper, we develop both enumeration and sampling algorithms of bipartite, directed, and undirected graphs with prescribed degree sequences across different problem sizes. Our framework unifies the treatment of different network types by exploiting the fact that directed and undirected graphs admit equivalent bipartite representations. We therefore begin by focusing on bipartite graphs. Specifically, we first propose a sequential method and derive, using the Gale--Ryser theorem \citep{gale1957theorem}, the necessary and sufficient interval condition for the admissible number of connections at each step in a way that guarantees the feasibility of completing the graph. Based on this method, we design an exact enumeration algorithm that systematically generates all feasible graphs without duplication, suitable for small-scale networks. 

For larger-scale problems, in which the number of feasible graphs may exceed $10^{20}$, exhaustive enumeration is infeasible, and sampling methods are required. In these settings, efficiency entails not only fast generation but also approximate uniformity, ensuring that samples are not unduly concentrated on a small subset of graphs. In the absence of additional prior information beyond the degree sequences, uniform sampling over the feasible set provides a natural criterion. Building on the sequential method, we design two sampling algorithms. The first, the uniform bipartite graph sampling (UBGS) algorithm, determines exact weights at each step by computing the number of distinct completions associated with each feasible value. Although this method achieves uniform sampling, it incurs additional computation and relatively high memory usage due to the need to obtain exact weights. To address this, we further propose the efficient bipartite graph sampling (EBGS) algorithm, which replaces exact weights with approximate ones, thereby improving scalability while maintaining approximately uniform performance.

Having developed enumeration and sampling algorithms for bipartite graphs, we extend our methods to directed and undirected graphs. The key idea is to construct bipartite representations that incorporate additional self-loop and single-degree connection constraints, as well as feasibility verification
and symmetric connection steps, thereby capturing directionality or symmetry and allowing the same algorithmic principles to be applied across different graph types. For directed graphs, each node is split into an “in” and “out” copy, reducing the problem to a bipartite matching between sources and targets subject to prescribed in-degrees and out-degrees. For undirected graphs, we employ a symmetric bipartite representation in which feasibility requires that edges be mirrored across the partition. Within these representations, our approach yields both enumeration and sampling procedures. For small-scale directed or undirected networks, the enumeration algorithm exhaustively generates all feasible graphs without duplication, guaranteeing complete coverage of the solution space. For larger-scale instances of such networks, we adapt the sampling algorithms, using exact or approximate counts of feasible completions to guide node connections. In this way, our framework provides parallel solutions for bipartite, directed, and undirected graphs: exact enumeration for small instances, and either uniform sampling or efficient, albeit slightly biased, sampling for larger-scale instances.

The work most closely related to ours is \cite{Glasserman2023}, which proposes a sequential algorithm for graph reconstruction with prescribed degree sequences based on a maximum entropy distribution. Their approximately uniform sampling method applies to bipartite, directed, and undirected graphs, but it requires repeatedly solving complex optimization problems, thereby causing the computational cost to grow rapidly with problem size. 
Another closely related work is \citet{Miller2013}, which develops an exact sampling algorithm for two-way contingency tables with binary or nonnegative integer entries. Their uniform sampling method does not address either directed or undirected graphs and does not scale to large instances. Moreover, neither of these works considers enumeration algorithms.
In sum, our work makes the following contributions:
\begin{enumerate}[(i)]
    \item We propose a sequential method for bipartite graph generation and establish a necessary and sufficient interval condition that characterizes the admissible number of connections at each step, thereby guaranteeing global feasibility. 

    \item We develop a duplication-free bipartite graph enumeration algorithm that exhaustively generates all feasible graphs, which is practical for small-scale problems, and bipartite graph sampling algorithms that enable uniform sampling or efficient, albeit slightly biased, sampling for larger-scale instances. To the best of our knowledge, this work is the first to develop sampling algorithms that can handle large-scale problems.

    \item By incorporating additional self-loop and single-degree connection constraints, as well as feasibility verification and symmetric connection steps, we extend the bipartite graph enumeration and sampling algorithms to the directed and undirected cases, while preserving the same algorithmic principles and applicability across different problem sizes.

    \item We demonstrate the performance of the proposed algorithms through numerical experiments, particularly their scalability to large instances where existing methods become computationally prohibitive.
\end{enumerate}

The rest of the paper is organized as follows. Section~\ref{sec:problem}
defines the three types of graphs and graph space generation problem. In Section~\ref{sec:BGA}, we present the sequential method for generating bipartite graphs with prescribed degree sequences and establish the necessary and sufficient interval condition required for this method, based on which we develop the bipartite graph enumeration and sampling algorithms. In Section~\ref{sec:Extend}, we extend these algorithms to the directed and undirected cases. Section~\ref{sec:numerical-experiemnts} evaluates the proposed algorithms through numerical experiments. Finally, Section~\ref{sec:conclsion} concludes this paper.

\section{Problem Formulation}{\label{sec:problem}}

Let $\mathcal{G} \!\left(\mathbf{a}, \mathbf{b}\right)$ denote the {\em graph space}, defined as the set of all bipartite, directed, or undirected graphs with prescribed degree sequences $\mathbf{a}$ and $\mathbf{b}$, where the interpretation of $\mathbf{a}$ and $\mathbf{b}$ depends on the type of graph:

{\bf Bipartite graph}. In this case, $\mathbf{a} \in \mathbb{N}_{+}^{m}$ and $\mathbf{b} \in \mathbb{N}_{+}^{n}$ are the degree sequences of nodes in the two disjoint vertex sets, respectively. For instance, consider a financial network consisting of $m$ financial assets and $n$ investment firms, where each firm holds a subset of the available assets. Such a network can be modeled as a bipartite graph, in which one vertex nodes corresponds to assets and the other to firms, with each edge indicating that a given firm holds a particular asset. Here, the degree of an asset node equals the number of firms that hold the asset, while the degree of a firm node corresponds to the number of distinct assets owned by that firm.

{\bf Directed graph}. In this case, $\mathbf{a} \in \mathbb{N}_{0}^{m}$ and $\mathbf{b} \in \mathbb{N}_{0}^{n}$ correspond to the out-degree and in-degree sequences of nodes, respectively, and hence $m = n$. For example, consider a 
supply chain network consisting of $m$ firms, where each firm may act as a supplier or a customer to a subset of other firms. Such a network can be represented as a directed graph, in which each node corresponds to a firm and each directed edge points from a supplier to a customer. In this representation, the out-degree of a firm node equals the number of distinct firms that it supplies to, while the in-degree of a firm node corresponds to the number of distinct suppliers from which it purchases.

{\bf Undirected graph}. In this case, $\mathbf{a} = \mathbf{b} \in \mathbb{N}_{+}^{m}$ is the degree sequence of nodes. For instance, consider a social network consisting of $m$ individuals, where each individual interacts with a subset of other individuals. Such a network can be represented as an undirected graph, in which each node corresponds to an individual and each undirected edge indicates a mutual social connection between two individuals. Here, the degree of an individual node equals the number of distinct connections that the individual maintains within the network.

The following example illustrates how a pair of degree sequences maps to the corresponding bipartite, directed, and undirected graph spaces.

\begin{example}\label{example1}
    For the degree sequences $\mathbf{a} = \mathbf{b} = (2, 1, 1)$, Figure~\ref{fig:all_bigraphs} depicts the five bipartite graphs, one directed graph, and one undirected graph consistent with these degree sequences, i.e., the bipartite graph space has five elements, whereas the directed and undirected spaces each have a single element.

    \begin{figure}[h!]
    (1) Five bipartite graphs:
    \vskip 1em
    \centering
    \begin{minipage}{0.3\textwidth}
        \centering
        \begin{tikzpicture}[baseline=(current bounding box.center), scale=0.7]
            \node (a1) at (0,0) [circle,draw,minimum size=0.6cm] {2};
            \node (a2) at (2,0) [circle,draw,minimum size=0.6cm] {1};
            \node (a3) at (4,0) [circle,draw,minimum size=0.6cm] {1};
            
            \node (b1) at (0,-2) [circle,draw,minimum size=0.6cm] {2};
            \node (b2) at (2,-2) [circle,draw,minimum size=0.6cm] {1};
            \node (b3) at (4,-2) [circle,draw,minimum size=0.6cm] {1};
            
            \draw[thick] (a1) -- (b1);
            \draw[thick] (a1) -- (b2);
            \draw[thick] (a2) -- (b1);
            \draw[thick] (a3) -- (b3);
        \end{tikzpicture}
    \end{minipage}%
    \begin{minipage}{0.3\textwidth}
        \centering
        \begin{tikzpicture}[baseline=(current bounding box.center), scale=0.7]
            \node (a1) at (0,0) [circle,draw,minimum size=0.6cm] {2};
            \node (a2) at (2,0) [circle,draw,minimum size=0.6cm] {1};
            \node (a3) at (4,0) [circle,draw,minimum size=0.6cm] {1};
            
            \node (b1) at (0,-2) [circle,draw,minimum size=0.6cm] {2};
            \node (b2) at (2,-2) [circle,draw,minimum size=0.6cm] {1};
            \node (b3) at (4,-2) [circle,draw,minimum size=0.6cm] {1};
            
            \draw[thick] (a1) -- (b1);
            \draw[thick] (a1) -- (b3);
            \draw[thick] (a2) -- (b1);
            \draw[thick] (a3) -- (b2);
        \end{tikzpicture}
    \end{minipage}
    \begin{minipage}{0.3\textwidth}
        \centering
        \begin{tikzpicture}[baseline=(current bounding box.center), scale=0.7]
            \node (a1) at (0,0) [circle,draw,minimum size=0.6cm] {2};
            \node (a2) at (2,0) [circle,draw,minimum size=0.6cm] {1};
            \node (a3) at (4,0) [circle,draw,minimum size=0.6cm] {1};
            
            \node (b1) at (0,-2) [circle,draw,minimum size=0.6cm] {2};
            \node (b2) at (2,-2) [circle,draw,minimum size=0.6cm] {1};
            \node (b3) at (4,-2) [circle,draw,minimum size=0.6cm] {1};
            
            \draw[thick] (a1) -- (b1);
            \draw[thick] (a1) -- (b2);
            \draw[thick] (a2) -- (b3);
            \draw[thick] (a3) -- (b1);
        \end{tikzpicture}
    \end{minipage}
    
    \vskip 0.75em
    
    \begin{minipage}{0.3\textwidth}
        \centering
        \begin{tikzpicture}[baseline=(current bounding box.center), scale=0.7]
            \node (a1) at (0,0) [circle,draw,minimum size=0.6cm] {2};
            \node (a2) at (2,0) [circle,draw,minimum size=0.6cm] {1};
            \node (a3) at (4,0) [circle,draw,minimum size=0.6cm] {1};
            
            \node (b1) at (0,-2) [circle,draw,minimum size=0.6cm] {2};
            \node (b2) at (2,-2) [circle,draw,minimum size=0.6cm] {1};
            \node (b3) at (4,-2) [circle,draw,minimum size=0.6cm] {1};
            
            \draw[thick] (a1) -- (b1);
            \draw[thick] (a1) -- (b3);
            \draw[thick] (a2) -- (b2);
            \draw[thick] (a3) -- (b1);
        \end{tikzpicture}
    \end{minipage}%
    \begin{minipage}{0.3\textwidth}
        \centering
        \begin{tikzpicture}[baseline=(current bounding box.center), scale=0.7]
            \node (a1) at (0,0) [circle,draw,minimum size=0.6cm] {2};
            \node (a2) at (2,0) [circle,draw,minimum size=0.6cm] {1};
            \node (a3) at (4,0) [circle,draw,minimum size=0.6cm] {1};
            
            \node (b1) at (0,-2) [circle,draw,minimum size=0.6cm] {2};
            \node (b2) at (2,-2) [circle,draw,minimum size=0.6cm] {1};
            \node (b3) at (4,-2) [circle,draw,minimum size=0.6cm] {1};
            
            \draw[thick] (a1) -- (b2);
            \draw[thick] (a1) -- (b3);
            \draw[thick] (a2) -- (b1);
            \draw[thick] (a3) -- (b1);
        \end{tikzpicture}
    \end{minipage}

    \vskip 1.25em

    \raggedright
    (2) One directed graph:
    \vskip 1em
    \centering
    \begin{minipage}{0.35\textwidth}
        \centering
        \begin{tikzpicture}[baseline=(current bounding box.center), scale=0.7]
            \node (a2) at (2,0) [circle,draw,fill=green,minimum size=0.6cm] {};
            \node (b1) at (0,-2) [circle,draw,fill=blue!50,minimum size=0.6cm] {};
            \node (b3) at (4,-2) [circle,draw,fill=yellow,minimum size=0.6cm] {};
            
            \draw[->, thick] (a2) to[bend left=15] (b1);
            \draw[->, thick] (b1) to[bend left=15] (a2);
            \draw[->, thick] (b1) to[bend left=15] (b3);
            \draw[->, thick] (b3) to[bend left=15] (b1);
        \end{tikzpicture}
    \end{minipage}
    \begin{minipage}{0.1\textwidth}
        \centering
        \begin{tikzpicture}[scale=0.4]
            \draw[line width=0.5pt] 
                (0,0.3) -- (1.5,0.3) -- (1.5,1) -- (3,0) -- (1.5,-1) -- (1.5,-0.3) -- (0,-0.3) -- cycle;
        \end{tikzpicture}
    \end{minipage}
    \begin{minipage}{0.35\textwidth}
        \centering
        \begin{tikzpicture}[baseline=(current bounding box.center), scale=0.7]
            \node (a1) at (0,0) [circle,draw,fill=blue!50,minimum size=0.6cm] {2};
            \node (a2) at (2,0) [circle,draw,fill=green,minimum size=0.6cm] {1};
            \node (a3) at (4,0) [circle,draw,fill=yellow,minimum size=0.6cm] {1};
            
            \node (b1) at (0,-2) [circle,draw,fill=blue!50,minimum size=0.6cm] {2};
            \node (b2) at (2,-2) [circle,draw,fill=green,minimum size=0.6cm] {1};
            \node (b3) at (4,-2) [circle,draw,fill=yellow,minimum size=0.6cm] {1};
            
            \draw[thick] (b1) -- (a2);
            \draw[thick] (b1) -- (a3);
            \draw[thick] (b2) -- (a1);
            \draw[thick] (b3) -- (a1);
        \end{tikzpicture}
    \end{minipage}

    \vskip 1.5em

    \raggedright
    (3) One undirected graph:
    \vskip 1em
    \centering
    \begin{minipage}{0.35\textwidth}
        \centering
        \begin{tikzpicture}[baseline=(current bounding box.center), scale=0.7]
            \node (a2) at (2,0) [circle,draw,fill=green,minimum size=0.6cm] {1};
            \node (b1) at (0,-2) [circle,draw,fill=blue!50,minimum size=0.6cm] {2};
            \node (b3) at (4,-2) [circle,draw,fill=yellow,minimum size=0.6cm] {1};
            
            \draw[thick] (b1) -- (a2);
            \draw[thick] (b1) -- (b3);
        \end{tikzpicture}
    \end{minipage}
    \begin{minipage}{0.1\textwidth}
        \centering
        \begin{tikzpicture}[scale=0.4]
            \draw[line width=0.5pt] 
                (0,0.3) -- (1.5,0.3) -- (1.5,1) -- (3,0) -- (1.5,-1) -- (1.5,-0.3) -- (0,-0.3) -- cycle;
        \end{tikzpicture}
    \end{minipage}
    \begin{minipage}{0.35\textwidth}
        \centering
        \begin{tikzpicture}[baseline=(current bounding box.center), scale=0.7]
            \node (a1) at (0,0) [circle,draw,fill=blue!50,minimum size=0.6cm] {2};
            \node (a2) at (2,0) [circle,draw,fill=green,minimum size=0.6cm] {1};
            \node (a3) at (4,0) [circle,draw,fill=yellow,minimum size=0.6cm] {1};
            
            \node (b1) at (0,-2) [circle,draw,fill=blue!50,minimum size=0.6cm] {2};
            \node (b2) at (2,-2) [circle,draw,fill=green,minimum size=0.6cm] {1};
            \node (b3) at (4,-2) [circle,draw,fill=yellow,minimum size=0.6cm] {1};
            
            \draw[thick] (b1) -- (a2);
            \draw[thick] (b1) -- (a3);
            \draw[thick] (b2) -- (a1);
            \draw[thick] (b3) -- (a1);
        \end{tikzpicture}
    \end{minipage}

    \vskip 1.5em
    \caption{\baselineskip10pt Five bipartite graphs, one directed graph, and one undirected graph consistent with $\mathbf{a} = \mathbf{b} = (2, 1, 1)$.}
    \label{fig:all_bigraphs}
    \end{figure}
    
\end{example}

The graph space generation problem is to generate all elements or a substantial subset of $\mathcal{G}\!\left(\mathbf{a},\mathbf{b}\right)$, where $\mathcal{G}\!\left(\mathbf{a},\mathbf{b}\right)$ may represent a bipartite, directed, or undirected graph space. In this paper, we develop efficient type-specific algorithms: enumeration algorithms for exhaustively generating the entire graph space $\mathcal{G} \!\left(\mathbf{a}, \mathbf{b}\right)$, practical for small-scale problems, and sampling algorithms for randomly generating graph instances, which can be applied to larger-scale instances.

\section{Bipartite Graph Algorithms}\label{sec:BGA}


We begin by considering the problem of generating bipartite graphs that satisfy a prescribed pair of degree sequences with $\mathbf{a} \in \mathbb{N}_{+}^{m}$ and $\mathbf{b} \in \mathbb{N}_{+}^{n}$. Without loss of generality, we assume that the degree sequence $\mathbf{b}$ is sorted in non-increasing order $\left[\mathbf{b}\right]_1 \geq \left[\mathbf{b}\right]_2 \geq \dots \geq \left[\mathbf{b}\right]_n$, where $\left[\mathbf{x}\right]_i$ denotes the $i$-th entry of vector $\mathbf{x}$. With slight abuse of notations, $\left[\mathbf{b}\right]_i$ also refers to the node itself. The key steps of the generating mechanism are as follows. We first partition the nodes in $\mathbf{a}$ into $p$ groups according to their distinct degree values: group $k$ contains $m_k$ nodes of degree $\alpha_k$, where $\alpha_1 < \alpha_2 < \dots < \alpha_p$ and $\sum_{k=1}^{p} m_k = m$. The first node in $\mathbf{b}$, denoted $\left[\mathbf{b}\right]_1$, then selects the number of nodes in $\mathbf{a}$ to connect to, proceeding sequentially from group $1$ to group $p$. After establishing these connections, any nodes whose degrees are reduced to zero are removed. The updated degree sequences are denoted as $\mathbf{a}^{[2]}\in \mathbb{N}_{+}^{m^{[2]}}$ and $\mathbf{b}^{[2]}\in \mathbb{N}_{+}^{n-1}$, where $m^{[2]}$ is the number of remaining nodes in $\mathbf{a}$ (depends on the number of nodes removed), and the size of $\mathbf{b}$ decreases to $n-1$. More generally, we denote by $\mathbf{a}^{[j+1]}$ and $\mathbf{b}^{[j+1]}$ the updated degree sequences obtained after processing node $\left[\mathbf{b}\right]_j$, where $\mathbf{a}^{[1]} = \mathbf{a}$ and $\mathbf{b}^{[1]} = \mathbf{b}$. These grouping and connection steps are then applied iteratively until all nodes in $\mathbf{b}$ have been processed. Because each iteration can be regarded as solving a new instance of the same problem with updated prescribed degree sequences, it is sufficient to focus on the connection process for the first node $\left[\mathbf{b}\right]_1$.

Returning to Example~\ref{example1}, we use Figure~\ref{fig:process_of_b1} to illustrate the generating mechanism. First, the nodes in $\mathbf{a}$ are partitioned into two groups ($p=2$): group 1 consists of two nodes of degree 1 ($\alpha_1 = 1$, $m_1 = 2$), and group 2 consists of one node of degree 2 ($\alpha_2 = 2$, $m_2 = 1$). Next, the node $\left[\mathbf{b}\right]_1$ of degree 2 sequentially connects to nodes in $\mathbf{a}$, starting from group 1 and proceeding to group 2. For example, the node $\left[\mathbf{b}\right]_1$ connects to the two nodes in group 1, reducing their degree from 1 to 0. After these connections are made, $\left[\mathbf{b}\right]_1$ and the two nodes of degree 0 in group 1 are removed, resulting in the updated degree sequences $\mathbf{a}^{[2]} = (2)$ and $\mathbf{b}^{[2]} = (1, 1)$. This connection process is then repeated for the next node in $\mathbf{b}$ using the updated degree sequences, and continued until all nodes have been processed.

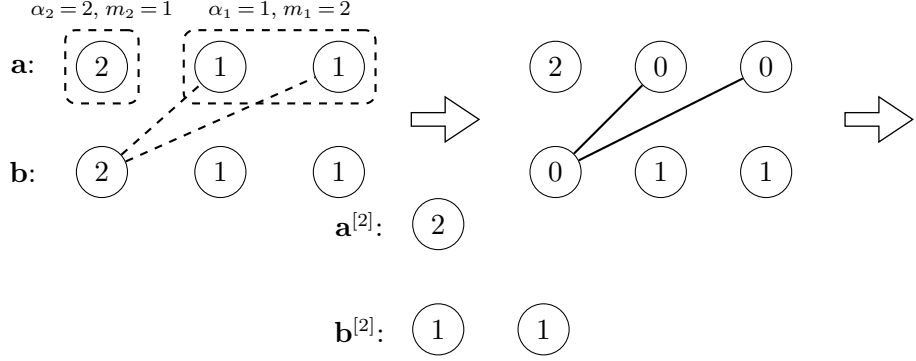
\begin{figure}[htpb!]
    \vspace{1em}
    \centering
    \begin{minipage}{0.28\textwidth}
        \centering
        \begin{tikzpicture}[scale=0.7, baseline=(a1.base)]
            \node (a_label) at (-1.5,0) {$\mathbf{a}$:};
            \node (a1) at (0,0) [circle,draw,minimum size=0.6cm] {2};
            \node (a2) at (2.25,0) [circle,draw,minimum size=0.6cm] {1};
            \node (a3) at (4.5,0) [circle,draw,minimum size=0.6cm] {1};

            \node (b_label) at (-1.5,-2) {$\mathbf{b}$:};
            \node (b1) at (0,-2) [circle,draw,minimum size=0.6cm] {2};
            \node (b2) at (2.25,-2) [circle,draw,minimum size=0.6cm] {1};
            \node (b3) at (4.5,-2) [circle,draw,minimum size=0.6cm] {1};
            
            \draw[dashed, thick] (b1) -- (a2);
            \draw[dashed, thick] (b1) -- (a3);

            \node[
                draw,
                dashed,
                thick,
                rounded corners,
                inner sep=4pt,
                fit={(a1)},
                overlay
            ] (box2) {};

            \node[
                overlay,
                align=center
            ] at ($(box2.north)+(0,0.35)$) {\scriptsize $\alpha_2 = 2,\,m_2 = 1$};

            \node[
                draw,
                dashed,
                thick,
                rounded corners,
                inner sep=4pt,
                fit={(a2) (a3)},
                overlay
            ] (box1) {};

            \node[
                overlay,
                align=center
            ] at ($(box1.north)+(0,0.35)$) {\scriptsize $\alpha_1 = 1,\,m_1 = 2$};
            
        \end{tikzpicture}
    \end{minipage}
    \hspace{1.5em}
    \begin{minipage}{0.04\textwidth}
        \centering
        \begin{tikzpicture}[scale=0.3]
            \draw[line width=0.5pt] 
                (0,0.3) -- (1.5,0.3) -- (1.5,1) -- (3,0) -- (1.5,-1) -- (1.5,-0.3) -- (0,-0.3) -- cycle;
        \end{tikzpicture}
    \end{minipage}
    \hspace{0.25em}
    \begin{minipage}{0.28\textwidth}
        \centering
        \begin{tikzpicture}[scale=0.7, baseline=(a1.base)]
            \node (a1) at (0,0) [circle,draw,minimum size=0.6cm] {2};
            \node (a2) at (2,0) [circle,draw,minimum size=0.6cm] {0};
            \node (a3) at (4,0) [circle,draw,minimum size=0.6cm] {0};
            
            \node (b1) at (0,-2) [circle,draw,minimum size=0.6cm] {0};
            \node (b2) at (2,-2) [circle,draw,minimum size=0.6cm] {1};
            \node (b3) at (4,-2) [circle,draw,minimum size=0.6cm] {1};

            \draw[thick] (b1) -- (a2);
            \draw[thick] (b1) -- (a3);
        \end{tikzpicture}
    \end{minipage}
    \begin{minipage}{0.04\textwidth}
        \centering
        \begin{tikzpicture}[scale=0.3]
            \draw[line width=0.5pt] 
                (0,0.3) -- (1.5,0.3) -- (1.5,1) -- (3,0) -- (1.5,-1) -- (1.5,-0.3) -- (0,-0.3) -- cycle;
        \end{tikzpicture}
    \end{minipage}
    \begin{minipage}{0.28\textwidth}
        \centering
        \begin{tikzpicture}[scale=0.7, baseline=(a1.base)]
            \node (a_label) at (-1.5,0) {$\mathbf{a}^{[2]}$:};
            \node (a1) at (0,0) [circle,draw,minimum size=0.6cm] {2};

            \node (b_label) at (-1.5,-2) {$\mathbf{b}^{[2]}$:};
            \node (b1) at (0,-2) [circle,draw,minimum size=0.6cm] {1};
            \node (b2) at (2,-2) [circle,draw,minimum size=0.6cm] {1};
        \end{tikzpicture}
    \end{minipage}

    \vskip 1em
    \caption{\baselineskip10pt Connection process for node $\left[\mathbf{b}\right]_1$ in Example~\ref{example1} with degree sequences $\mathbf{a} = \mathbf{b} = (2,\,1,\,1)$.}
    \label{fig:process_of_b1}
\end{figure}

The key challenge in this generating mechanism is to determine, at each step, how many nodes can be connected so as to guarantee the generation of a bipartite graph consistent with the prescribed degree sequences.

\subsection{Node Connection Interval}\label{ssec:interval}

In this subsection, we present a theorem that establishes a necessary and sufficient interval condition for the admissible number of connections at each  step, under which global feasibility is guaranteed.
We first introduce several essential mathematical definitions required for its formulation.

\begin{definition}
    $\mathscr{Z}: (\mathbb{N}^{p}, \mathbb{N}_{+})\to \left\{0,1\right\}^{p\times c}$ is a matrix operation that maps a vector $\mathbf{x}\in \mathbb{N}^{p}$ and a positive integer $c\in \mathbb{N}_{+}$ to a binary matrix of size $p\times c$. The operation is defined entry-wise as follows:
\begin{equation}\label{eq:Z matrix}
    \left[\mathscr{Z} \left(\mathbf{x}, c\right)\right]_{ij} = 
        \begin{cases}
        1, & \text{if } 1 \leq j \leq \left[\mathbf{x}\right]_i, \\
        0, & \text{otherwise}.
        \end{cases}
\end{equation}
Notice that the resulting binary matrix $\mathscr{Z} \!\left(\mathbf{x}, c\right)$ is such that for each $i\in \left\{1,2,\dots,p\right\}$, the first $\mathbf{x}_i$ columns in row $i$ are set to 1, while the remaining entries are set to 0.
\end{definition}

Let $\mathbf{z}\in \mathbb{N}^{|\mathbf{b}|}$ denote a vector derived from the matrix $\mathscr{Z} \!\left(\mathbf{a}, |\mathbf{b}|\right)$, defined as follows:
\begin{equation}\label{eq:z vector}
    \left[\mathbf{z}\right]_j = \sum_{i=1}^{m} \left[\mathscr{Z} \!\left(\mathbf{\mathbf{a}}, |\mathbf{b}|\right)\right]_{ij},
\end{equation}
representing the sum of the $j$-th column of $\mathscr{Z} \left(\mathbf{a}, |\mathbf{b}|\right)$.

To illustrate the above definitions, we revisit Example~\ref{example1}, where $\mathbf{a} = \mathbf{b} =(2,1,1)$. By \eqref{eq:Z matrix} and \eqref{eq:z vector}, we obtain:
\setlength{\arraycolsep}{6pt}
\begin{equation*}
    \mathscr{Z} \!\left(\mathbf{\mathbf{a}}, |\mathbf{b}|\right) =
    {\renewcommand{\arraystretch}{0.6}
        \begin{bmatrix}
        1 & 1 & 0 \\
        1 & 0 & 0 \\
        1 & 0 & 0 \\
        \end{bmatrix}}
    ,
\end{equation*}
and $\mathbf{z}=(3,1,0)$.

Using the definitions introduced above, we present the following lemma, which provides an equivalent formulation of the Gale-Ryser theorem \citep{gale1957theorem} characterizing the necessary and sufficient conditions for the existence of a bipartite graph with the prescribed degree sequences. The proof is included in Section \ref{appx:L1-proof} of the e-companion.

\begin{lemma}\label{lemma:necessary and sufficient condition}
    For the prescribed degree sequences $\mathbf{a}$ and $\mathbf{b}$, a bipartite graph exists if and only if $\sum_{i=1}^{m} \left [ \mathbf{a} \right ]_i = \sum_{j=1}^{n} \left [ \mathbf{b} \right ]_j$, and the following condition holds for all $r\in \left\{1, 2, \dots, n\right\}$:
    \begin{equation}\label{eq:ns-conditions}
        \sum_{j=1}^{r} \left [ \mathbf{z} \right ]_j \geq \sum_{j=1}^{r} \left [ \mathbf{b} \right ]_j.
    \end{equation}
\end{lemma}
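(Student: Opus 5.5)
The plan is to reduce the statement to the classical Gale--Ryser theorem \citep{gale1957theorem}. The main work is identifying the vector $\mathbf{z}$ with the conjugate of $\mathbf{a}$. In its standard form, Gale--Ryser says: for $\mathbf{b}$ sorted non-increasingly, a $0$--$1$ matrix with row sums $\mathbf{a}$ and column sums $\mathbf{b}$ exists if and only if $\sum_i [\mathbf{a}]_i=\sum_j[\mathbf{b}]_j$ and
\[
\sum_{j=1}^{r}[\mathbf{b}]_j\le \sum_{i=1}^{m}\min\{[\mathbf{a}]_i,r\}
\]
for all $r$. Equivalently, $\mathbf{b}$ is majorized by the conjugate sequence $\mathbf{a}^*$, where $[\mathbf{a}^*]_j=|\{i:[\mathbf{a}]_i\ge j\}|$. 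Bipartite graphs with degree sequences $(\mathbf{a},\mathbf{b})$ correspond bijectively to such biadjacency matrices, so it suffices to rewrite the Gale--Ryser condition in terms of $\mathbf{z}$.

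First I will compute $\mathbf{z}$ directly from the definition. Since $[\mathscr{Z}(\mathbf{a},|\mathbf{b}|)]_{ij}=1$ exactly when $j\le[\mathbf{a}]_i$, we get $[\mathbf{z}]_j=|\{i:[\mathbf{a}]_i\ge j\}|=[\mathbf{a}^*]_j$ for $1\le j\le n$. Next I will sum over columns and swap the order of summation:
\[
\sum_{j=1}^{r}[\mathbf{z}]_j=\sum_{i=1}^m\sum_{j=1}^{r}\mathbf{1}\{j\le[\mathbf{a}]_i\}=\sum_{i=1}^m\min\{[\mathbf{a}]_i,r\}.
\]
With this identity, inequality \eqref{eq:ns-conditions} becomes exactly the Gale--Ryser inequality for each $r\le n$. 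Together with the equal-sum condition, this gives both directions of the lemma.

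The one subtlety is the range of $r$. The $\mathscr{Z}$ matrix is truncated at $|\mathbf{b}|=n$ columns, so \eqref{eq:ns-conditions} is only imposed for $r\le n$. The classical condition is sometimes stated for all $r$, but for $r>n$ the left side $\sum_{j\le r}[\mathbf{b}]_j$ equals the total sum and does not grow, while $\sum_i\min\{[\mathbf{a}]_i,r\}$ is non-decreasing in $r$. Hence the cases $r>n$ follow from $r=n$, and restricting to $r\le n$ loses nothing.

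If a self-contained argument is wanted instead of citing Gale--Ryser, I would prove it in two parts. For necessity, in any realizing graph the first $r$ columns contain at most $\min\{[\mathbf{a}]_i,r\}$ ones in row $i$; summing over $i$ gives the inequality. For sufficiency, I would use a greedy or Ryser-interchange argument, or max-flow/min-cut on the network source $\to$ $\mathbf{a}$-nodes $\to$ $\mathbf{b}$-nodes $\to$ sink with unit middle capacities, where the min-cut condition reduces to the displayed inequalities using the sortedness of $\mathbf{b}$. I expect that sufficiency step to be the only nontrivial part, and citing the theorem avoids it.
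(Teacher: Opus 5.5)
Your proposal is correct and is essentially the paper's proof. The paper likewise writes $\min\{[\mathbf{a}]_i,r\}$ as the partial row sum $\sum_{j=1}^{r}[\mathscr{Z}(\mathbf{a},|\mathbf{b}|)]_{ij}$, interchanges the order of summation to obtain $\sum_{j=1}^{r}[\mathbf{z}]_j$, and then invokes Gale--Ryser. The paper states Gale--Ryser only for $r\in\{1,\dots,n\}$, so your remark about $r>n$ is harmless but unnecessary.
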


Next, we introduce the notation for the number of nodes in each group connected to $\left[\mathbf{b}\right]_1$. Let $\mathscr{F} \!\left(k\right)\in \left[0,m_k\right] \cap \mathbb{N}$ (for $k=1,2,\dots,p$) denote the number of nodes from the $m_k$ nodes in group $k$ of $\mathbf{a}$ that are connected to $\left[\mathbf{b}\right]_1$. After connecting the nodes in group $k-1$ to $\left[\mathbf{b}\right]_1$, the following definitions are introduced:
\begin{enumerate}[(i)]
    \item The updated \(\mathbf{a}\), denoted by \(\mathbf{a}^{(k)}\), where $\mathbf{a}^{(1)} = \mathbf{a}$;
    \item The updated $\mathbf{z}$ derived from \(\mathscr{Z} \!\left(\mathbf{a}^{(k)}, |\mathbf{b}|\right)\), denoted by \(\mathbf{z}^{(k)}\), where $\mathbf{z}^{(1)} = \mathbf{z}$;
    \item The remaining degree of \(\left[\mathbf{b}\right]_1\), denoted by \(\left[\mathbf{b}\right]_{1}^{(k)}\), where $\left[\mathbf{b}\right]_{1}^{(k)} = \left[\mathbf{b}\right]_{1} - \sum_{k'=1}^{k-1} \mathscr{F} \!\left(k'\right)$ and $\left[\mathbf{b}\right]_{1}^{(1)} = \left[\mathbf{b}\right]_1$;
    \item The total number of nodes in groups $k+1$ to $p$ of $\mathbf{a}^{(k)}$, denoted by $M^{(k)}$, where $M^{(k)} = \sum_{k'=k+1}^{p} m_{k'}$ and $M^{(p)} = 0$.
\end{enumerate}

To illustrate the above definitions, we revisit Example~\ref{example1} with $\mathbf{a} = \mathbf{b} =(2,1,1)$. Consider the case $k = 2$ and suppose that $\mathscr{F}\!\left(1\right) = 1$, i.e., the node $\left[\mathbf{b}\right]_1$ is connected to one node in group $1$ of $\mathbf{a}$. In this case, the updated $\mathbf{a}$ is $\mathbf{a}^{(2)} = (2,0,1)$; The corresponding matrix is
    \begin{equation*}
    \mathscr{Z} \!\left(\mathbf{a}^{(2)}, |\mathbf{b}|\right) =
    {\renewcommand{\arraystretch}{0.6}
        \begin{bmatrix}
        1 & 1 & 0 \\
        0 & 0 & 0 \\
        1 & 0 & 0 \\
        \end{bmatrix}}
    ,
    \end{equation*}
    which derives $\mathbf{z}^{(2)} = (2,1,0)$; The remaining degree of $\left[\mathbf{b}\right]_1$ is $\left[\mathbf{b}\right]_{1}^{(2)} = \left[\mathbf{b}\right]_1 - \mathscr{F}\!\left(1\right) = 2 - 1 = 1$; The total number of nodes in groups $k+1$ to $p$ is $M^{(2)} = 0$, since $M^{(k)} = 0$ when $k = p = 2$.

\begin{remark}\label{F(K)-note}
    $\mathscr{F} \!\left(k\right)$ represents the number of entries in the $\alpha_k$-th column of $\mathscr{Z}\!\left(\mathbf{a},|\mathbf{b}|\right)$ that change from 1 to 0, i.e., the reduction in the $\alpha_k$-th component of $\mathbf{z}$. Consider again the case $k=2$ with $\mathscr{F}\!\left(1\right) = 1$ in Example~\ref{example1}, where $\mathbf{a}=\mathbf{b}=(2,1,1)$. Here, $\mathscr{F}\!\left(1\right) = 1$ means that one entry in the $\alpha_1 = 1$-st column of the corresponding matrix changes form 1 to 0, namely,
\[
\mathscr{Z}\!\left(\mathbf{a},|\mathbf{b}|\right)=
{\renewcommand{\arraystretch}{0.6}
    \begin{bmatrix}
    1 & 1 & 0 \\
    1 & 0 & 0 \\
    1 & 0 & 0 \\
    \end{bmatrix}}
\quad \longrightarrow \quad
\mathscr{Z}\!\left(\mathbf{a}^{(2)},|\mathbf{b}|\right)=
{\renewcommand{\arraystretch}{0.6}
    \begin{bmatrix}
    1 & 1 & 0 \\
    0 & 0 & 0 \\
    1 & 0 & 0 \\
    \end{bmatrix}} .
\]
Correspondingly, the $\alpha_1 = 1$-st component of $\mathbf{a}$ decreases from $\mathbf{z} = (3,1,0)$ to $\mathbf{z}^{(2)} = (2,1,0)$.
\end{remark}



We next introduce a generalized vector subtraction operation, which will be used to define several quantities appearing in the theorem.
\begin{definition}\label{def:D-vector}
    $\mathscr{D}:(\mathbb{N}^{p},\mathbb{N}^{q})\to \mathbb{Z}^{\max \left\{p,q\right\}}$ is a vector operation that maps two vectors $\mathbf{x}\in \mathbb{N}^{p}$ and $\mathbf{y}\in \mathbb{N}^{q}$ to another vector of length $\max \left\{p,q\right\}$. The operation is defined entry-wise as follows:
\begin{equation}\label{eq:D def}
    \left[\mathscr{D} \!\left(\mathbf{x}, \mathbf{y}\right)\right]_h =
        \begin{cases}
            \left[\mathbf{x}\right]_h - \left[\mathbf{y}\right]_h, & \text{if } h \leq \min \left\{p,q\right\}, \\
            \left[\mathbf{x}\right]_h, & \text{if } h > \min \left\{p,q\right\} \text{and } p > q, \\
            - \left[\mathbf{y}\right]_h, & \text{otherwise.}
        \end{cases}
\end{equation}
By this definition, the resulting vector $\mathscr{D} \!\left(\mathbf{x}, \mathbf{y}\right)$ is such that for each $h\in \left\{1,2,\dots,\max \left\{p,q\right\}\right\}$, the value is given by $\left[\mathbf{x}\right]_h - \left[\mathbf{y}\right]_h$, with zeros appended to either $\mathbf{x}$ or $\mathbf{y}$ if their lengths are less than $\max \left\{p,q\right\}$.
\end{definition}

By combining $\mathbf{z}^{(k)}$, $\mathbf{b}^{[2]}$, and \eqref{eq:D def}, we define the following quantities appearing in the theorem:
\begin{equation}\label{eq:definition of left and now}
    D_{\mathrm{left}}^{(k)} = \sum_{h=1}^{{\alpha_k}-1} \left[\mathscr{D} \!\left(\mathbf{z}^{(k)}, \mathbf{b}^{[2]}\right)\right]_h, \quad D_{\mathrm{now}}^{(k)}= \left[\mathscr{D} \!\left(\mathbf{z}^{(k)}, \mathbf{b}^{[2]}\right)\right]_{\alpha_k},
\end{equation}
and
\begin{equation}\label{eq:definition of right}
    D_{\mathrm{right}}^{(k)}= \min\left\{\min_{r\in \{\alpha_k+1,\, \alpha_k+2,\, \dots,\, n\}} \sum_{h=\alpha_k + 1}^{r} \left[\mathscr{D} \!\left(\mathbf{z}^{(k)}, \mathbf{b}^{[2]}\right)\right]_h, 0\right\},
\end{equation}
where $\mathscr{D} \!\left(\mathbf{z}^{(k)}, \mathbf{b}^{[2]}\right)\in \mathbb{Z}^n$, and recall that $\alpha_k$ is the degree of the group $k$ in $\mathbf{a}^{(k)}$. Notice that the vector $\mathbf{z}^{(k)}$, derived from the matrix $\mathscr{Z} \!\left(\mathbf{a}^{(k)}, |\mathbf{b}|\right)$, and the vector $\mathbf{b}^{[2]}$ are not in the same state. Nevertheless, we apply the $\mathscr{D}$-operation to both vectors for the theorem.

In this paper, we adopt the following notational conventions for summation operator. The summation operator $\sum$ is defined to be zero whenever its upper limit is less than its lower limit. Additionally, any term whose index falls outside the summation range is treated as zero.

Building on the preceding definitions and Lemma~\ref{lemma:necessary and sufficient condition}, we establish the following theorem, with the proof included in Section \ref{appx:T1-proof} of the e-companion. 

\begin{theorem}\label{thm:interval}
    Suppose that a bipartite graph exists for the prescribed degree sequences $\mathbf{a}$ and $\mathbf{b}$. Then a bipartite graph exists for the updated degree sequences $\mathbf{a}^{[2]}$ and $\mathbf{b}^{[2]}$ if and only if for each group $k$ of $\mathbf{a}$, the number of connected nodes $\mathscr{F} \!\left(k\right)$ satisfies
    \begin{equation}\label{eq:Fk}
        \mathscr{F} \!\left(k\right) \in 
        \left[\max \bigl\{ \bigl[\mathbf{b}\bigr]_1^{(k)} - M^{(k)},\, 0 \bigr\},\, \min \bigl\{m_k,\, D_{\mathrm{left}}^{(k)} + D_{\mathrm{now}}^{(k)} + D_{\mathrm{right}}^{(k)} \bigr\}\right].
    \end{equation}
\end{theorem}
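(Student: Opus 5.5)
The plan is to reduce everything to Lemma~\ref{lemma:necessary and sufficient condition}, applied to the pair $(\mathbf{a}^{[2]},\mathbf{b}^{[2]})$. The first step is a reformulation of the Gale--Ryser inequalities in terms of the final vector $\mathbf{z}^{(p+1)}$, obtained after all groups have been processed. Removing zero-degree nodes of $\mathbf{a}$ only deletes zero rows of $\mathscr{Z}$, so $\mathbf{z}^{[2]}$ coincides with $\mathbf{z}^{(p+1)}$ truncated to length $n-1$. Hence feasibility of $(\mathbf{a}^{[2]},\mathbf{b}^{[2]})$ is equivalent to two things:
\begin{itemize}
\item the sum condition, which holds automatically once $\sum_k\mathscr{F}(k)=[\mathbf{b}]_1$;
\item the prefix inequalities $\sum_{h=1}^{r}[\mathscr{D}(\mathbf{z}^{(p+1)},\mathbf{b}^{[2]})]_h\ge 0$ for all $r$.
\end{itemize}
By Remark~\ref{F(K)-note}, the choice $\mathscr{F}(k)$ lowers only coordinate $\alpha_k$ of $\mathbf{z}$, and by exactly $\mathscr{F}(k)$, provided the nodes chosen within a group are interchangeable. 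They are, since they have equal degree, and removing $\mathscr{F}(k)$ of them lowers column $\alpha_k$ and nothing else. Thus $\mathbf{z}^{(p+1)}=\mathbf{z}-\sum_k \mathscr{F}(k)\mathbf{e}_{\alpha_k}$, and the problem becomes a set of linear constraints on $(\mathscr{F}(1),\dots,\mathscr{F}(p))$.

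Next I handle the lower bound and the requirement $\sum_k\mathscr{F}(k)=[\mathbf{b}]_1$. At stage $k$, node $[\mathbf{b}]_1$ still needs $[\mathbf{b}]_1^{(k)}-\mathscr{F}(k)$ connections, and these must come from the $M^{(k)}$ nodes in later groups. This gives $\mathscr{F}(k)\ge [\mathbf{b}]_1^{(k)}-M^{(k)}$, together with the trivial bound $\mathscr{F}(k)\ge 0$. At $k=p$ this forces the degree of $[\mathbf{b}]_1$ to be exactly exhausted, so the interval condition encodes the sum condition as well.

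For the upper bound, fix the stage $k$. The prefix inequalities with $r<\alpha_k$ are unaffected by $\mathscr{F}(k)$ and by all later choices, since later groups have larger $\alpha$. The inequalities with $r\ge \alpha_k$ each lose exactly $\mathscr{F}(k)$. Later groups subtract only at positions greater than $\alpha_k$, so the binding constraint at stage $k$ is
\[
\mathscr{F}(k)\le \min_{r\ge \alpha_k}\sum_{h=1}^{r}[\mathscr{D}(\mathbf{z}^{(k)},\mathbf{b}^{[2]})]_h = D_{\mathrm{left}}^{(k)}+D_{\mathrm{now}}^{(k)}+D_{\mathrm{right}}^{(k)}.
\]
Here the $\min\{\cdot,0\}$ in $D_{\mathrm{right}}^{(k)}$ accounts for the term $r=\alpha_k$ itself, and $\mathscr{F}(k)\le m_k$ is trivial.

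\emph{Necessity} then follows directly: if the final $\mathbf{z}^{(p+1)}$ satisfies all prefix inequalities, then so does every intermediate $\mathbf{z}^{(k+1)}$ on the indices up to $\alpha_k$, because later subtractions only make the inequalities harder to satisfy.

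\emph{Sufficiency} is the main obstacle. Choosing $\mathscr{F}(k)$ in the interval at stage $k$ must not make a later stage's interval empty. In particular, the upper bound at stage $k'>k$ could fall below its lower bound $[\mathbf{b}]_1^{(k')}-M^{(k')}$. To rule this out I would use the hypothesis that $(\mathbf{a},\mathbf{b})$ is feasible: it gives $\sum_{h\le r}[\mathbf{z}]_h\ge\sum_{h\le r}[\mathbf{b}]_h$. Because $\mathbf{b}$ is non-increasing, $\sum_{h\le r}[\mathbf{b}^{[2]}]_h+[\mathbf{b}]_1\le\sum_{h\le r+1}[\mathbf{b}]_h$. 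Combining these with $[\mathbf{z}]_h\ge$ the number of nodes of degree at least $h$, I would show that sending all remaining demand to the highest-degree groups (the greedy Havel--Hakimi-type choice) always satisfies the stage-$k'$ upper bound. By induction over $k$, the intervals are then nonempty and every admissible sequence of choices yields a feasible pair. The delicate point is the bookkeeping with the padding zeros in $\mathscr{D}$ when $\alpha_k>n-1$; there I would check separately that $D_{\mathrm{right}}^{(k)}=0$ and the constraint reduces correctly.
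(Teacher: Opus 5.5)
Your argument is correct in outline and takes a genuinely different route from the paper. The paper proves the stage-wise statement $\mathcal{P}(k)$ by induction, driven by the $k$-step greedy lemma. For fixed $\mathscr{F}(1),\dots,\mathscr{F}(k)$, completing greedily from group $p$ downward maximizes every prefix sum of $\mathbf{z}^{[2]}$. So ``some completion is feasible'' reduces to ``this one completion is feasible,'' and the paper checks that by comparing $D_1=\mathscr{D}(\overline{\mathbf{z}}^{[2]},\mathbf{b}^{[2]})$ with $D_2=\mathscr{D}(\mathbf{z}^{(1)},\mathbf{b}^{[2]})$.

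You never need that exchange lemma. Necessity comes from the fact that intermediate prefix sums dominate the final ones, and sufficiency splits into a tiling step and a nonemptiness step.

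\emph{Tiling.} State these three facts explicitly:
\begin{enumerate}
\item[(i)] For $r\in[\alpha_k,\alpha_{k+1})$ the final prefix sum equals the one after stage $k$, so the stage-$k$ bound handles exactly those $r$.
\item[(ii)] For $r<\alpha_1$, the original Gale--Ryser inequalities together with $[\mathbf{b}]_{h+1}\le[\mathbf{b}]_h$ give the inequality. This is where monotonicity is really used; the relation $\sum_{h\le r}[\mathbf{b}^{[2]}]_h+[\mathbf{b}]_1=\sum_{h\le r+1}[\mathbf{b}]_h$ that you attribute to monotonicity is just an identity.
\item[(iii)] The $r=n$ term gives $D_{\mathrm{left}}^{(p)}+D_{\mathrm{now}}^{(p)}+D_{\mathrm{right}}^{(p)}\le[\mathbf{b}]_1^{(p)}$, which forces exhaustion at $k=p$.
\end{enumerate}
This alone proves the literal ``if'' direction, since there every interval already contains $\mathscr{F}(k)$.

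\emph{Nonemptiness.} This is what the stronger no-dead-ends content of $\mathcal{P}(k)$ requires. Its crux, which you should state, is that earlier choices cancel: for $r\in[\alpha_{k'},n-1]$,
\[
\sum_{h=1}^{r}\bigl[\mathscr{D}(\mathbf{z}^{(k')},\mathbf{b}^{[2]})\bigr]_h-\bigl([\mathbf{b}]_1^{(k')}-M^{(k')}\bigr)
=\sum_{h=1}^{r}[\mathbf{z}]_h+M^{(k')}-\sum_{h=1}^{r+1}[\mathbf{b}]_h
\ \ge\ \sum_{h=1}^{r+1}[\mathbf{z}]_h-\sum_{h=1}^{r+1}[\mathbf{b}]_h\ \ge\ 0 .
\]
The middle inequality holds because $[\mathbf{z}]_{r+1}$, the number of nodes of degree at least $r+1>\alpha_{k'}$, is at most $M^{(k')}$. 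That is the direction you need; your ``$[\mathbf{z}]_h\ge$'' is in fact an equality. The $r=n$ term is trivial. The remaining requirements $D_{\mathrm{left}}^{(k')}+D_{\mathrm{now}}^{(k')}+D_{\mathrm{right}}^{(k')}\ge 0$ and $[\mathbf{b}]_1^{(k')}\le M^{(k'-1)}$ follow from the stage-$(k'-1)$ bounds, or from the original data when $k'=1$.

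\emph{What your route buys.} In the paper's second case, $\sum_{h=\alpha_1+1}^{\mu}[D_1]_h\le 0$, the displayed chain only shows that feasibility forces $\overline{\mathscr{F}}(1)\le D_{\mathrm{left}}^{(1)}+D_{\mathrm{now}}^{(1)}+D_{\mathrm{right}}^{(1)}$; the converse is asserted rather than derived. It holds because the greedy completion's extra deficit beyond position $\alpha_1$, namely $\max\{0,\,[\mathbf{b}]_1-\overline{\mathscr{F}}(1)-[\mathbf{z}]_{r+1}\}$, cancels $\overline{\mathscr{F}}(1)$ whenever it is positive. It is then absorbed by the Gale--Ryser inequality at index $r+1$, which is exactly your computation. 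The paper's route, in exchange, singles out an explicit extremal completion. That makes it transparent why checking one completion certifies feasibility, and ties the result to the classical Havel--Hakimi/Ryser exchange argument.
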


Theorem~\ref{thm:interval} characterizes the necessary and sufficient interval condition for the admissible number of nodes from the $m_k$ nodes in group $k$ of $\mathbf{a}$ that are connected to $\left[\mathbf{b}\right]_1$. Specifically, for the prescribed degree sequences, the number of connected nodes $\mathscr{F} \!\left(k\right)$ must satisfy the interval condition given in \eqref{eq:Fk} in order for a bipartite graph to be generated. The first term $\left[\mathbf{b}\right]_{1}^{(k)} - M^{(k)}$ in the lower bound captures the situation where the remaining degree of $\left[\mathbf{b}\right]_1$ exceeds the total number of nodes in groups $k+1$ to $p$ of $\mathbf{a}^{(k)}$, in which case $\left[\mathbf{b}\right]_1$ must connect to at least $\left[\mathbf{b}\right]_{1}^{(k)} - M^{(k)}$ nodes from group~$k$ in order to exhaust its remaining degree. The inclusion of zero ensures the nonnegativity of $\mathscr{F} \!\left(k\right)$. The first term $m_k$ of the upper bound reflects the trivial constraint that no more than $m_k$ nodes from group~$k$ can be connected. The inclusion of $D_{\mathrm{left}}^{(k)} + D_{\mathrm{now}}^{(k)} + D_{\mathrm{right}}^{(k)}$ guarantees that the updated degree sequences $\mathbf{a}^{[2]}$ and $\mathbf{b}^{[2]}$ satisfy \eqref{eq:ns-conditions} in Lemma~\ref{lemma:necessary and sufficient condition}, thereby remain feasible for bipartite graph generation after the connections of $\left[\mathbf{b}\right]_1$ are determined.

\subsection{Bipartite Graph Enumeration Algorithm}\label{ssec:enumeration}

Building on Theorem \ref{thm:interval}, we develop a two-layer breadth-first search algorithm to enumerate all bipartite graphs that satisfy the prescribed degree sequences. The basic idea is to iterate over all connection configurations corresponding to each feasible value $\mathscr{F} \!\left(k\right)$ within the interval given by \eqref{eq:Fk}, thereby systematically exploring the entire state space. Here, a connection configuration refers to one distinct way of selecting $\mathscr{F} \!\left(k\right)$ from the $m_k$ nodes in group $k$ to connect to $\left[\mathbf{b}\right]_j$. The algorithm traverses the state space via two nested breadth-first search procedures. That is, the outer search iterates over the nodes of $\mathbf{b}$, while the inner search explores the degree groups associated with each node. This construction guarantees that all feasible bipartite graphs are enumerated without duplication. 

Before presenting the pseudocode, we briefly review the queue data structure and the associated operations \textit{enqueue} and \textit{dequeue}, which are fundamental to the breadth-first search procedures. A queue follows the first-in--first-out discipline: the \textit{enqueue} operation inserts an element at the end of the queue, whereas the \textit{dequeue} operation removes and returns the element at the front of the queue. These operations allow the algorithm to systematically explore all intermediate graph states in the exact order in which they are generated. Based on the queue data structure, the bipartite graph enumeration (BGE) algorithm is presented in Algorithm~\ref{alg:BGE}.
\begin{algorithm}[htpb]
\caption{Bipartite Graph Enumeration (BGE) Algorithm}
\label{alg:BGE}
{\renewcommand{\baselinestretch}{0.75}\selectfont
\begin{algorithmic}[1]
\Require Degree sequences $\mathbf{a} \in \mathbb{N}_{+}^{m}$ and $\mathbf{b} \in \mathbb{N}_{+}^{n}$
\Ensure The set $\mathcal{G} \!\left(\mathbf{a}, \mathbf{b}\right)$ of all bipartite graphs consistent with the degree sequences $\mathbf{a}$ and $\mathbf{b}$

\State Initialize $\mathcal{G}(\mathbf{a},\mathbf{b}) \gets \emptyset$, an empty bipartite graph $\mathbf{G}(\mathbf{a}, \mathbf{b})$, two empty queues $\mathcal{Q}_{\mathrm{outer}}$ and $\mathcal{Q}_{\mathrm{inner}}$
\State Enqueue the pair $\left(\mathbf{G}(\mathbf{a}, \mathbf{b}), 1\right)$ into $\mathcal{Q}_{\mathrm{outer}}$ \Comment{Node index $j=1$}

\While{$\mathcal{Q}_{\mathrm{outer}} \neq \emptyset$} \Comment{Outer breadth-first search over $n$ nodes}
  \State Dequeue the first pair $(\mathbf{G}(\mathbf{a}, \mathbf{b}), j)$ from $\mathcal{Q}_{\mathrm{outer}}$ and record the node index $j$

  \If{$j > n$}
    \State Add $\mathbf{G}\!\left(\mathbf{a}, \mathbf{b}\right)$ to $\mathcal{G} \!\left(\mathbf{a}, \mathbf{b}\right)$
    \State \textbf{continue}
  \EndIf
  \State Group $\mathbf{a}^{[j]}$ into $p$ groups $(\alpha_k, m_k)$ with $\alpha_1 < \cdots < \alpha_p$
  \State Enqueue the pair $\left(\mathbf{G}(\mathbf{a}, \mathbf{b}), 1\right)$ into $\mathcal{Q}_{\mathrm{inner}}$ \Comment{Group index $k=1$}

  \While{$\mathcal{Q}_{\mathrm{inner}} \neq \emptyset$} \Comment{Inner breadth-first search over $p$ groups for each node}
    \State Dequeue the first pair $(\mathbf{G}\left(\mathbf{a}, \mathbf{b}), k\right)$ from $\mathcal{Q}_{\mathrm{inner}}$ and record the group index $k$
    \If{$k > p$}
      \State Enqueue the pair $\left(\mathbf{G}\!\left(\mathbf{a}, \mathbf{b}\right), j + 1\right)$ into $\mathcal{Q}_{\mathrm{outer}}$
        \State \textbf{continue}
    \EndIf
    
    \State Compute the interval condition $[\ell_k, u_k]$ via \eqref{eq:Fk}
    \For{$\mathscr{F}\!\left(k\right) = \ell_k,\,\dots,u_k$}
      \State Generate all $\binom{m_k}{\mathscr{F}\!\left(k\right)}$ connection configurations
      \For{each connection configuration}
        \State Copy $\mathbf{G}\!\left(\mathbf{a}, \mathbf{b}\right)$ and denote it by $\mathbf{G}'\!\left(\mathbf{a}, \mathbf{b}\right)$
        \State Append the connection configuration to $\mathbf{G}'\!\left(\mathbf{a}, \mathbf{b}\right)$
        \State Enqueue the pair $\left(\mathbf{G}'\!\left(\mathbf{a}, \mathbf{b}\right), k+1\right)$ into $\mathcal{Q}_{\mathrm{inner}}$
      \EndFor
    \EndFor
  \EndWhile
\EndWhile

\State \textbf{return} $\mathcal{G} \!\left(\mathbf{a}, \mathbf{b}\right)$
\end{algorithmic}
}
\end{algorithm}

To illustrate the mechanics of the BGE algorithm, we revisit Example~\ref{example1} with $\mathbf{a} = \mathbf{b} = (2,1,1)$. Figure~\ref{fig:enumeration_process_of_b1} depicts the enumeration process for the first node $\left[\mathbf{b}\right]_1$, highlighting the interaction between the outer and inner queues. The algorithm begins by enqueuing the empty bipartite graph $\mathbf{G}(\mathbf{a}, \mathbf{b})$ with node index $j=1$ into the empty outer queue~$\mathcal{Q}_{\mathrm{outer}}$, as shown in panel~1 to panel~2 of Figure~\ref{fig:enumeration_process_of_b1}. When the state $\bigl(\mathbf{G}(\mathbf{a}, \mathbf{b}), j=1\bigr)$ is dequeued from the outer queue, the algorithm starts processing the first node $\left[\mathbf{b}\right]_1$. At this stage, the degree sequence $\mathbf{a}^{[1]}$ is partitioned into two groups ($p=2$): group~1 consists of two nodes of degree 1 ($\alpha_1 = 1$, $m_1 = 2$), and group~2 consists of one node of degree 2 ($\alpha_2 = 2$, $m_2 = 1$). The algorithm then enqueues the empty bipartite graph $\mathbf{G}(\mathbf{a}, \mathbf{b})$ with group index $k=1$ into the empty inner queue~$\mathcal{Q}_{\mathrm{inner}}$, indicating that group~1 should be processed next; see panel~2 to panel~3 in Figure~\ref{fig:enumeration_process_of_b1}. When the state $\bigl(\mathbf{G}(\mathbf{a}, \mathbf{b}), k=1\bigr)$ is dequeued from the inner queue, the algorithm starts processing group~1.
According to Theorem~\ref{thm:interval}, the interval condition for group~1 is $\mathscr{F}(1)\in\left[1,2\right]$. For each feasible integer $\mathscr{F}(1)$, the algorithm generates all $\binom{2}{\mathscr{F}(1)}$ connection configurations by selecting $\mathscr{F}(1)$ nodes from the two nodes in group~1. For example, when $\mathscr{F}(1) = 1$, two distinct connection configurations are generated. For each such connection configuration, a new partial graph $\mathbf{G}'(\mathbf{a}, \mathbf{b})$ is obtained by appending the connection configuration to a copy of the current partial graph $\mathbf{G}(\mathbf{a}, \mathbf{b})$. The resulting state $\bigl(\mathbf{G}'(\mathbf{a}, \mathbf{b}), k=2\bigr)$ is then enqueued into the inner queue~$\mathcal{Q}_{\mathrm{inner}}$, indicating that group~2 should be processed next for node $\left[\mathbf{b}\right]_1$; see panel~3 to panel~4 in Figure~\ref{fig:enumeration_process_of_b1}. The inner breadth-first search proceeds by repeatedly dequeuing states from the inner queue~$\mathcal{Q}_{\mathrm{inner}}$ and processing group~2, as shown in panel~4 to panel~5 of Figure~\ref{fig:enumeration_process_of_b1}. Since $p=2$ in the present example, once group~2 has been processed, the connection step for node $\left[\mathbf{b}\right]_1$ is complete. Each resulting partial graph is therefore associated with the next node index and enqueued into the outer queue~$\mathcal{Q}_{\mathrm{outer}}$ as a state $\bigl(\mathbf{G}'(\mathbf{a},\mathbf{b}), j=2\bigr)$; see panel~5 to panel~6 in Figure~\ref{fig:enumeration_process_of_b1}. Subsequently, each state $\bigl(\mathbf{G}'(\mathbf{a}, \mathbf{b}), j=2\bigr)$ is dequeued from the outer queue~$\mathcal{Q}_{\mathrm{outer}}$, and enqueued into the inner queue~$\mathcal{Q}_{\mathrm{inner}}$, indicating that the algorithm proceeds to process the second node $\left[\mathbf{b}\right]_2$; see panel~6 to panel~7 in Figure~\ref{fig:enumeration_process_of_b1}.

The same enumeration process is then repeated for $j=2$ and $j=3$. For each node $\left[\mathbf{b}\right]_j$, the updated degree sequence $\mathbf{a}^{[j]}$ is first partitioned into degree groups, the interval conditions for each group are recomputed using~\eqref{eq:Fk} in Theorem~\ref{thm:interval}, and all feasible group-level connection configurations are explored through the inner breadth-first search. Once all nodes in $\mathbf{b}$ have been processed, the resulting graph $\mathbf{G}(\mathbf{a}, \mathbf{b})$ is added to the set $\mathcal{G}(\mathbf{a},\mathbf{b})$. In this manner, the BGE algorithm systematically enumerates all bipartite graphs consistent with $\mathbf{a}$ and $\mathbf{b}$ without duplication.

\begin{figure}[htpb!]
    \centering
    \includegraphics[width=\linewidth]{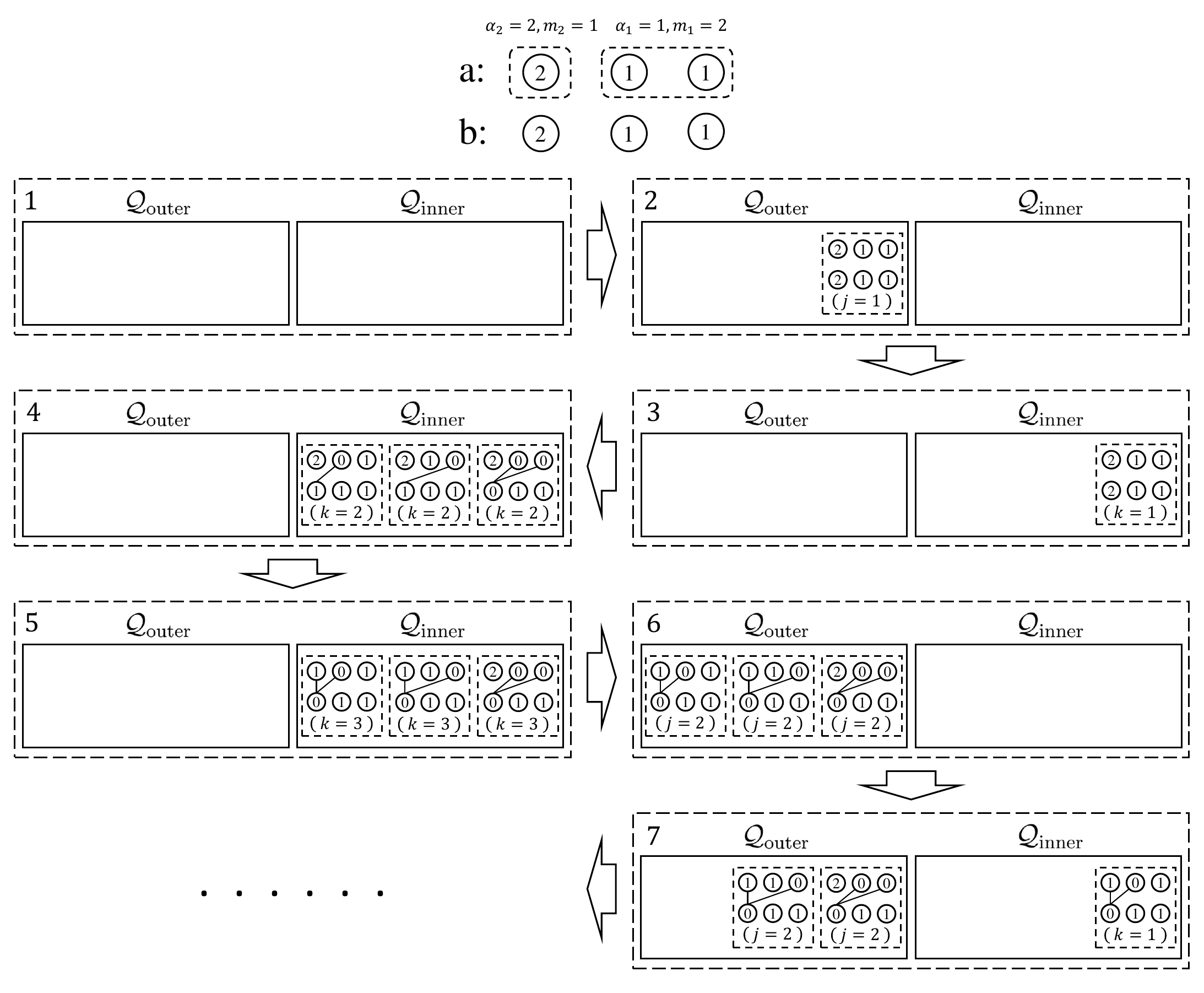}
    \caption{\baselineskip10pt Enumeration process for node $\left[\mathbf{b}\right]_1$ in Example~\ref{example1} with degree sequences $\mathbf{a} = \mathbf{b} = (2,\,1,\,1)$.}
    \label{fig:enumeration_process_of_b1}
\end{figure}



\subsection{Bipartite Graph Sampling Algorithms}\label{ssec:sampling}
The enumeration algorithm is practical for small-scale settings, but for larger-scale degree sequences, the number of bipartite graphs can easily reach hundreds of millions, making exhaustive enumeration impractical. For example, there are more than $10^{20}$ bipartite graphs that satisfy the degree sequences $\mathbf{a} = \mathbf{b} = (6, 6, 6, 5, 5, 5, 5, 4, 4, 3, 2)$. Therefore, sampling algorithms provide a more tractable alternative.

As mentioned in Section \ref{sec:Intro}, in the absence of additional prior information, the goal of a sampling algorithm is to ensure that the sampled graphs are uniformly distributed over the set of feasible bipartite graphs. This is a natural criterion, as uniformity allows the algorithm to explore the entire solution space without bias, ensuring that each graph is equally likely to be generated. The uniformity criteria is also adopted in \cite{Glasserman2023}, where they assumed uniform sampling to guarantee that the sampling process does not inadvertently favor any specific graph structure. Therefore, in this paper, we also use sampling uniformity as one of the criteria to evaluate the performance of the sampling algorithm.

Compared with the BGE algorithm, which must identify all feasible bipartite graphs without duplication, sampling algorithms are logically simpler. Given the interval condition computed from \eqref{eq:Fk}, a value of $\mathscr{F}(k)$ is randomly selected within this interval, and a feasible connection configuration is then chosen at random. 
For example, consider Example~\ref{example1}, where $\mathbf{a} = \mathbf{b} =(2,1,1)$. When connecting $[\mathbf{b}]_1$ to group 1 (i.e., $k=1$), which contains two degree-one nodes, the interval condition is computed as $\mathscr{F}(1)\in[1,2]$ by \eqref{eq:Fk}. Thus, $[\mathbf{b}]_1$ may connect to either one or two nodes in this group. In the sampling algorithm, we randomly select $\mathscr{F}(1)=1$ or $\mathscr{F}(1)=2$. Furthermore, when $\mathscr{F}(1)=1$, an additional choice arises as to which specific node to connect. This is handled by randomly selecting one of the two nodes in group 1.

However, a natural question arises as to whether both the number of connected nodes and the specific nodes should be selected uniformly. 
Returning to Example~\ref{example1}, Figure \ref{fig:weight_ill} presents all five bipartite graphs consistent with the prescribed degree sequences. When $\mathscr{F}(1)=1$ of the node $[\mathbf{b}]_1$, four feasible graphs are possible, while only one graph is possible when $\mathscr{F}(1)=2$. As a result, a uniform choice between $\mathscr{F}(1)=1$ and $\mathscr{F}(1)=2$ leads to a non-uniform distribution over the set of feasible bipartite graphs. To ensure uniform sampling over the five feasible graphs, $\mathscr{F}(1)$ should be set to 1 with probability 0.8 and to 2 with probability 0.2. That is, the selection probability should be proportional to the number of feasible graphs that can be generated from each choice of the number of connected nodes at this step.
Therefore, the key challenge for sampling algorithms is to determine how to assign appropriate weights to the feasible values within the interval so as to achieve uniform sampling over graphs.
\begin{figure}[h!]
    \centering
    
    \begin{minipage}[t]{0.65\textwidth}
        \centering
        $\mathscr{F}(1) = 1$
        \vskip 1em

        \begin{tikzpicture}
            \node[anchor=north, inner sep=0pt] (FonePics) at (0,0) {%
            \begin{minipage}{\textwidth}
                \centering
                
                \begin{minipage}{0.49\textwidth}
                    \centering
                    \begin{tikzpicture}[baseline=(current bounding box.center), scale=0.7]
                        \node (a1) at (0,0) [circle,draw,minimum size=0.6cm] {2};
                        \node (a2) at (2,0) [circle,draw,minimum size=0.6cm] {1};
                        \node (a3) at (4,0) [circle,draw,minimum size=0.6cm] {1};
                        
                        \node (b1) at (0,-2) [circle,draw,minimum size=0.6cm] {2};
                        \node (b2) at (2,-2) [circle,draw,minimum size=0.6cm] {1};
                        \node (b3) at (4,-2) [circle,draw,minimum size=0.6cm] {1};
                        
                        \draw[thick] (b1) -- (a1);
                        \draw[thick] (b1) -- (a2);
                        \draw[thick] (b2) -- (a1);
                        \draw[thick] (b3) -- (a3);
                    \end{tikzpicture}
                \end{minipage}
                \hspace{-1.5em}
                \begin{minipage}{0.49\textwidth}
                    \centering
                    \begin{tikzpicture}[baseline=(current bounding box.center), scale=0.7]
                        \node (a1) at (0,0) [circle,draw,minimum size=0.6cm] {2};
                        \node (a2) at (2,0) [circle,draw,minimum size=0.6cm] {1};
                        \node (a3) at (4,0) [circle,draw,minimum size=0.6cm] {1};
                        
                        \node (b1) at (0,-2) [circle,draw,minimum size=0.6cm] {2};
                        \node (b2) at (2,-2) [circle,draw,minimum size=0.6cm] {1};
                        \node (b3) at (4,-2) [circle,draw,minimum size=0.6cm] {1};
                        
                        \draw[thick] (b1) -- (a1);
                        \draw[thick] (b1) -- (a2);
                        \draw[thick] (b2) -- (a3);
                        \draw[thick] (b3) -- (a1);
                    \end{tikzpicture}
                \end{minipage}
                \vskip 1em
                \begin{minipage}{0.49\textwidth}
                    \centering
                    \begin{tikzpicture}[baseline=(current bounding box.center), scale=0.7]
                        \node (a1) at (0,0) [circle,draw,minimum size=0.6cm] {2};
                        \node (a2) at (2,0) [circle,draw,minimum size=0.6cm] {1};
                        \node (a3) at (4,0) [circle,draw,minimum size=0.6cm] {1};
                        
                        \node (b1) at (0,-2) [circle,draw,minimum size=0.6cm] {2};
                        \node (b2) at (2,-2) [circle,draw,minimum size=0.6cm] {1};
                        \node (b3) at (4,-2) [circle,draw,minimum size=0.6cm] {1};
            
                        \draw[thick] (b1) -- (a1);
                        \draw[thick] (b1) -- (a3);
                        \draw[thick] (b2) -- (a1);
                        \draw[thick] (b3) -- (a2);
                    \end{tikzpicture}
                \end{minipage}
                \hspace{-1.5em}
                \begin{minipage}{0.49\textwidth}
                    \centering
                    \begin{tikzpicture}[baseline=(current bounding box.center), scale=0.7]
                        \node (a1) at (0,0) [circle,draw,minimum size=0.6cm] {2};
                        \node (a2) at (2,0) [circle,draw,minimum size=0.6cm] {1};
                        \node (a3) at (4,0) [circle,draw,minimum size=0.6cm] {1};
                        
                        \node (b1) at (0,-2) [circle,draw,minimum size=0.6cm] {2};
                        \node (b2) at (2,-2) [circle,draw,minimum size=0.6cm] {1};
                        \node (b3) at (4,-2) [circle,draw,minimum size=0.6cm] {1};
                        
                        \draw[thick] (b1) -- (a1);
                        \draw[thick] (b1) -- (a3);
                        \draw[thick] (b2) -- (a2);
                        \draw[thick] (b3) -- (a1);
                    \end{tikzpicture}
                \end{minipage}
        
            \end{minipage}
        };

        \node[draw, dashed, thick, rounded corners, inner ysep=10pt, inner xsep=-6pt, fit=(FonePics)] {};
        \end{tikzpicture}

    \end{minipage}
    \hfill
    \begin{minipage}[t]{0.3\textwidth}
        \centering
        $\mathscr{F}(1) = 2$
        \vskip 1em

        \begin{tikzpicture}
            \node[anchor=north, inner sep=0pt] (FtwoPics) at (0,0) {%
            
            \begin{minipage}{\textwidth}
                \centering
                \begin{tikzpicture}[baseline=(current bounding box.center), scale=0.7]
                    \node (a1) at (0,0) [circle,draw,minimum size=0.6cm] {2};
                    \node (a2) at (2,0) [circle,draw,minimum size=0.6cm] {1};
                    \node (a3) at (4,0) [circle,draw,minimum size=0.6cm] {1};
                    
                    \node (b1) at (0,-2) [circle,draw,minimum size=0.6cm] {2};
                    \node (b2) at (2,-2) [circle,draw,minimum size=0.6cm] {1};
                    \node (b3) at (4,-2) [circle,draw,minimum size=0.6cm] {1};
        
                    \draw[thick] (b1) -- (a2);
                    \draw[thick] (b1) -- (a3);
                    \draw[thick] (b2) -- (a1);
                    \draw[thick] (b3) -- (a1);
                \end{tikzpicture}
            \end{minipage}
        };
        \node[draw, dashed, thick, rounded corners, inner ysep=10pt, inner xsep=-6pt, fit=(FtwoPics)] {};
    \end{tikzpicture}
    \end{minipage}
    
    \vskip 1em
    \caption{\baselineskip10pt All the bipartite graphs in Example~\ref{example1} with degree sequences $\mathbf{a} = \mathbf{b} = (2,\,1,\,1)$.}
\label{fig:weight_ill}
\end{figure}
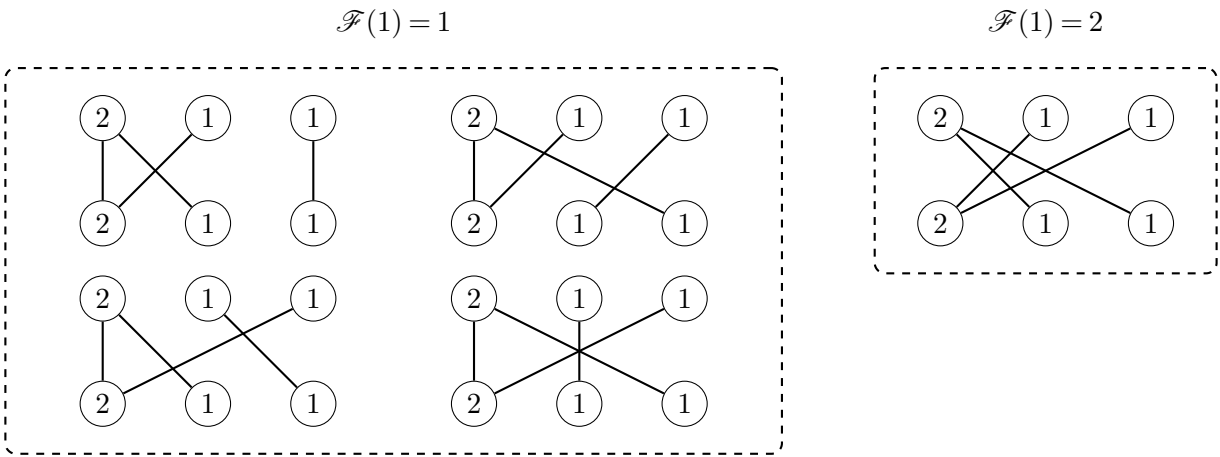

Building on Theorem \ref{thm:interval}, we propose two weight-based sampling algorithms for generating feasible bipartite graphs from the graph space $\mathcal{G}\!\left(\mathbf{a}, \mathbf{b}\right)$. The first algorithm, uniform bipartite graph sampling (UBGS), achieves uniform sampling by assigning weights proportional to the number of bipartite graphs that can be completed from each partial construction. This algorithm, however, requires additional computation and relatively high memory usage to evaluate these quantities at each step. To address this issue, we propose a second algorithm, efficient bipartite graph sampling (EBGS), which replaces the exact counts with approximate weights, reducing computational and memory costs while preserving approximately uniform sampling. 

\subsubsection{Uniform Bipartite Graph Sampling Algorithm.}

In this algorithm, we need to determine, for each pair of degree sequences $(\mathbf{a}^{[j]},\mathbf{b}^{[j]})$ with $j=1,\ldots,n$ and for each $k=1,\ldots,p$, the number of distinct bipartite graphs associated with each feasible $\mathscr{F}_i\!\left(k\right)$, where $\mathscr{F}_i \!\left(k\right)$ denotes the $i$-th feasible value of $\mathscr{F} \!\left(k\right)$ in the interval $[\ell_k,u_k]$ given by Theorem~\ref{thm:interval}. Specifically, let $N_B\!\left(\mathbf{a}^{[j]},\mathbf{b}^{[j]},\mathscr{F} \!\left(k\right)\right)$ denote the number of distinct bipartite graphs that can be generated from $\mathscr{F} \!\left(k\right)$, given that $\left[\mathbf{b}\right]_j$ has already completed its connections for group $1$ through group $k-1$; that is, $\mathscr{F}\!\left(1\right),\ldots,\mathscr{F}(k-\nobreak1)$ have been determined. Likewise, define $N_B\!\left(\mathbf{a}^{[j]},\mathbf{b}^{[j]},\mathscr{F}_i \!\left(k\right)\right)$ as the number of distinct bipartite graphs that can be generated from the $i$-th feasible value $\mathscr{F}_i \!\left(k\right)$ of $\mathscr{F}\!\left(k\right)$. Then, we establish the following proposition, which characterizes a recursive relation between $N_B\!\left(\mathbf{a}^{[j]},\mathbf{b}^{[j]},\mathscr{F}\!\left(k\right)\right)$ and $N_B\!\left(\mathbf{a}^{[j]},\mathbf{b}^{[j]},\mathscr{F}_i\!\left(k\right)\right)$. The proof is included in Section \ref{appx:T2-proof} of the e-companion.


  




\begin{proposition}\label{prop:N(a,b,F(k))}
    For each pair of degree sequences $(\mathbf{a}^{[j]},\mathbf{b}^{[j]})$ with $j=1,\ldots,n$ and for each $k=1,\ldots,p$, the numbers of distinct bipartite graphs that can be generated from $\mathscr{F}\!\left(k\right)$ and from its $i$-th feasible value $\mathscr{F}_i\!\left(k\right)$ are respectively given by
    \begin{equation}\label{eq:N_K-N_Ki}
    N_B\!\left(\mathbf{a}^{[j]},\mathbf{b}^{[j]},\mathscr{F} \!\left(k\right)\right) = \sum_{i=1}^{u_k-\ell_k+1} N_B\!\left(\mathbf{a}^{[j]},\mathbf{b}^{[j]},\mathscr{F}_i \!\left(k\right)\right)
    \end{equation}
    and
    \begin{equation}\label{eq:N(a,b,k)}
    N_B\!\left(\mathbf{a}^{[j]},\mathbf{b}^{[j]},\mathscr{F}_i \!\left(k\right)\right) = 
    \begin{cases}
        \displaystyle
        \binom{m_k}{\mathscr{F}_i \!\left(k\right)} N_B\!\left(\mathbf{a}^{[j]},\mathbf{b}^{[j]},\mathscr{F}\!\left(k+1\right)\right), & \text{if } k < p, \\[10pt]
        \displaystyle
        \binom{m_k}{\mathscr{F}_i \!\left(k\right)} N_B\!\left(\mathbf{a}^{[j+1]},\mathbf{b}^{[j+1]},\mathscr{F}\!\left(1\right)\right), & \text{otherwise.}
    \end{cases}
    \end{equation}
\end{proposition}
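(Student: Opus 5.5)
The plan is a counting argument that partitions the set of completions according to the choice made at the current step. Fix $j$ and $k$, and fix the connections of $[\mathbf{b}]_j$ to groups $1,\dots,k-1$. Let $\mathcal{C}_k$ be the set of bipartite graphs in $\mathcal{G}(\mathbf{a}^{[j]},\mathbf{b}^{[j]})$ that extend this partial configuration, so that $N_B(\mathbf{a}^{[j]},\mathbf{b}^{[j]},\mathscr{F}(k)) = |\mathcal{C}_k|$. Every graph in $\mathcal{C}_k$ connects $[\mathbf{b}]_j$ to a definite number of nodes of group $k$. By Theorem~\ref{thm:interval} (necessity), this number lies in $[\ell_k,u_k]$. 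Conversely, by sufficiency, every value in this interval is realized whenever the remaining groups are also chosen admissibly. Partitioning $\mathcal{C}_k$ by the value $\mathscr{F}_i(k)$ gives disjoint classes whose union is $\mathcal{C}_k$, and summing their sizes yields \eqref{eq:N_K-N_Ki}.

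For \eqref{eq:N(a,b,k)}, fix a value $\mathscr{F}_i(k)$. The class of completions with this value splits further according to which $\mathscr{F}_i(k)$-subset of the $m_k$ nodes of group $k$ is chosen. There are $\binom{m_k}{\mathscr{F}_i(k)}$ such subsets, and they yield distinct edge sets, so the subclasses are disjoint. The key step is to show that every subset gives the same number of completions. All nodes in group $k$ have the same residual degree $\alpha_k$, and every node of $\mathbf{b}$ other than $[\mathbf{b}]_j$ is still unprocessed, so no further distinctions exist among them. Permuting the nodes within group $k$ is therefore an automorphism of the residual problem, and it gives a bijection between the completion sets of any two subsets. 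The same invariance shows that the counts, and the admissible intervals of Theorem~\ref{thm:interval}, depend only on the degree multiset. After the choice, the remaining count is $N_B(\mathbf{a}^{[j]},\mathbf{b}^{[j]},\mathscr{F}(k+1))$ when $k<p$. When $k=p$, node $[\mathbf{b}]_j$ is complete, and after removing zero-degree nodes and regrouping, the remaining count is $N_B(\mathbf{a}^{[j+1]},\mathbf{b}^{[j+1]},\mathscr{F}(1))$. Multiplying by the number of subsets gives the two cases. For the base case, once all nodes of $\mathbf{b}$ are processed the count is set to $1$; this is consistent with the definitions because then $\mathbf{a}^{[n+1]}$ is empty.

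I expect the main obstacle to be making the symmetry step rigorous. One must check that the residual degree sequences after choosing different subsets of equal size coincide up to a relabeling within group $k$. One must also check that the subsequent regrouping at step $j+1$ is label-independent, which holds because the algorithm groups nodes by degree value only. A second subtle point is that the count $N_B(\cdot,\mathscr{F}(k+1))$ should be defined with respect to the conditional state, that is, the already fixed $\mathscr{F}(1),\dots,\mathscr{F}(k)$. I would make this explicit and argue that the count depends on that state only through the residual degrees. Distinctness of graphs across different branches follows from the BGE no-duplication property: two branches differ at the first step where their chosen subsets differ, and hence they differ in at least one edge.
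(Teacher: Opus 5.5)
Your proposal is correct and follows the same decomposition as the paper's proof. You partition the completions by the value $\mathscr{F}_i(k)\in[\ell_k,u_k]$ to obtain the sum identity, then move on to group $k+1$ (or to group $1$ of $(\mathbf{a}^{[j+1]},\mathbf{b}^{[j+1]})$ when $k=p$) to obtain the product identity. The paper's two-line proof leaves implicit why each of the $\binom{m_k}{\mathscr{F}_i(k)}$ subsets yields the same number of completions, so your within-group symmetry argument and your explicit conditioning of $N_B(\cdot,\mathscr{F}(k+1))$ on the already fixed $\mathscr{F}(1),\dots,\mathscr{F}(k)$ add rigor rather than constitute a different route.
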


By applying \eqref{eq:N_K-N_Ki} and \eqref{eq:N(a,b,k)} recursively, together with the initial condition $N_B\!\left(\mathbf{0},\mathbf{0}, 0\right) = 1$, we can compute all quantities $N_B\!\left(\mathbf{a}^{[j]},\mathbf{b}^{[j]},\mathscr{F}_i\!\left(k\right)\right)$, that is, the numbers of distinct bipartite graphs associated with each feasible $\mathscr{F}_i\!\left(k\right)$. Consequently, the exact weight vector $\mathbf{w}^{\mathrm{UB}}$ for sampling $\mathscr{F}\!\left(k\right)$ can be obtained, where each entry is given by
\begin{equation}\label{eq:weight-for-uniform}
    \left[\mathbf{w}^{\mathrm{UB}}\right]_i = N_B\!\left(\mathbf{a}^{[j]},\mathbf{b}^{[j]},\mathscr{F}_i \!\left(k\right)\right).
\end{equation}

Next, we present the UBGS algorithm, described in Algorithm~\ref{alg:UBGS}, which builds on the sequential method, Theorem~\ref{thm:interval}, and the uniform weight formula \eqref{eq:weight-for-uniform}. Specifically, the algorithm begins with the degree sequences $\mathbf{a}$ and $\mathbf{b}$, and an initially empty bipartite graph $\mathbf{G}\!\left(\mathbf{a}, \mathbf{b}\right)$. For each node $\left[\mathbf{b}\right]_j$, the nodes in $\mathbf{a}^{[j]}$ are partitioned into groups according to their distinct degree values, where group $k$ contains $m_k$ nodes of degree $\alpha_k$. For each group, the interval condition for $\mathscr{F} \!\left(k\right)$ is obtained from Theorem~\ref{thm:interval}, and the exact weight vector $\mathbf{w^{\mathrm{UB}}}$ is computed using \eqref{eq:weight-for-uniform}. Once the value of $\mathscr{F} \!\left(k\right)$ is sampled according to $\mathbf{w}^{\mathrm{UB}}$, $\mathscr{F} \!\left(k\right)$ nodes are uniformly selected from group $k$ to connect with $\left[\mathbf{b}\right]_j$. After processing $\left[\mathbf{b}\right]_j$, the sequences $\mathbf{a}^{[j]}$ and $\mathbf{b}^{[j]}$ are updated by decrementing the degrees of the connected nodes and removing any node whose degree becomes zero. By repeating this procedure for all nodes in $\mathbf{b}$, the algorithm generates a bipartite graph $\mathbf{G}\!\left(\mathbf{a}, \mathbf{b}\right)$ that satisfies the prescribed degree sequences.

\begin{algorithm}[htpb]
\caption{Uniform Bipartite Graph Sampling (UBGS) Algorithm}
\label{alg:UBGS}
{\renewcommand{\baselinestretch}{0.75}\selectfont
\begin{algorithmic}[1]
\Require Degree sequences $\mathbf{a} \in \mathbb{N}_{+}^{m}$ and $\mathbf{b} \in \mathbb{N}_{+}^{n}$, an initially empty bipartite graph $\mathbf{G}\!\left(\mathbf{a}, \mathbf{b}\right)$
\Ensure A bipartite graph $\mathbf{G}\!\left(\mathbf{a}, \mathbf{b}\right)$

\For{$j = 1$ \textbf{to} $n$}
  \State Group $\mathbf{a}^{[j]}$ into $p$ groups $(\alpha_k, m_k)$ with $\alpha_1 < \alpha_2 < \cdots < \alpha_p$

  \For{$k = 1$ \textbf{to} $p$}
    \State Compute the interval condition $[\ell_k,u_k]$ via \eqref{eq:Fk}
    \State Obtain the exact weight vector $\mathbf{w}^{\mathrm{UB}}$ via \eqref{eq:weight-for-uniform}
    \State Sample $\mathscr{F} \!\left(k\right) \in [\ell_k,u_k]$ according to $\mathbf{w}^{\mathrm{UB}}$
    \State Uniformly select $\mathscr{F} \!\left(k\right)$ distinct nodes in group $k$ and connect them to $\left[\mathbf{b}\right]_j$ in $\mathbf{G}\!\left(\mathbf{a}, \mathbf{b}\right)$
  \EndFor
\EndFor

\State \textbf{return} $\mathbf{G}\!\left(\mathbf{a}, \mathbf{b}\right)$
\end{algorithmic}
}
\end{algorithm}

Notice that computing $N_B\!\left(\mathbf{a}^{[j]},\mathbf{b}^{[j]},\mathscr{F}_i\!\left(k\right)\right)$ at every sampling step of the UBGS algorithm is computationally expensive, since these values must be recomputed each time a feasible value $\mathscr{F}_i\!\left(k\right)$ is considered. A natural way to avoid this repeated cost is to precompute all feasible values of $N_B\!\left(\mathbf{a}^{[j]},\mathbf{b}^{[j]},\mathscr{F}_i\!\left(k\right)\right)$ and store them in a lookup table that can be reused across multiple sampling runs. Once constructed, such a lookup table eliminates the need for on-the-fly computation and thus substantially accelerates the UBGS algorithm.

\subsubsection{Efficient Bipartite Graph Sampling Algorithm.}
Constructing the lookup table itself requires additional computation, and storing it incurs relatively high memory usage, both of which grow rapidly as the problem size increases, even though the lookup table can be reused once built. To mitigate these computational and memory burdens, we introduce approximate weights in place of the exact ones, thereby avoiding the construction of the lookup table altogether and reducing both time and memory costs, while still maintaining approximately uniform sampling.

As mentioned above, the weight assigned to the feasible value $\mathscr{F}_i \!\left(k\right)$ for the node $\left[\mathbf{b}\right]_j$  should be proportional to the number of distinct bipartite graphs
that can be generated from it; this quantity is determined by the following three processes:
\begin{enumerate}
    \item Select $\mathscr{F}_i \!\left(k\right)$ nodes from the $m_k$ nodes in group $k$ of $\mathbf{a}$. The number of distinct bipartite graphs generated in this step is denoted by $N_1$, given by
    \begin{equation*}
        N_1 = \binom{m_k}{\mathscr{F}_i \!\left(k\right)}.
    \end{equation*}
    \item Select $\mathscr{F}\!\left(k+1\right), \dots, \mathscr{F}\!\left(p\right)$ nodes from groups $k+1, \dots, p$ of $\mathbf{a}$, respectively. The number of distinct bipartite graphs generated here is denoted by $N_2$, which we approximate as
    \begin{equation*}
        \widehat{N}_2 = \binom{M^{(k)} + m_k - \mathscr{F}_i \!\left(k\right)}{\left[\mathbf{b}\right]_j^{(k)} - \mathscr{F}_i \!\left(k\right)}.
    \end{equation*}
    \item Generate the bipartite graph with the updated degree sequences $\mathbf{a}^{[j+1]}$ and $\mathbf{b}^{[j+1]}$ after processing $\left[\mathbf{b}\right]_j$. For computational efficiency, we assume that the number of distinct bipartite graphs generated in this step remains the same for different feasible values of $\mathscr{F} \!\left(k\right)$.
\end{enumerate}

Consequently, the approximate number of distinct bipartite graphs associated with $\mathscr{F}_i \!\left(k\right)$, denoted by $\widehat{N}_B\!\left(\mathbf{a}^{[j]},\mathbf{b}^{[j]},\mathscr{F}_i\!\left(k\right)\right)$, is $N_1 \times \widehat{N}_2$. This yields the approximate weight vector $\mathbf{w}^{\mathrm{EB}}$ for efficient sampling, where each entry is given by
\begin{equation}\label{eq:weight-for-efficient}
    \left[\mathbf{w}^{\mathrm{EB}}\right]_i = \widehat{N}_B\!\left(\mathbf{a}^{[j]},\mathbf{b}^{[j]},\mathscr{F}_i \!\left(k\right)\right) = \binom{m_k}{\mathscr{F}_i \!\left(k\right)} \cdot \binom{M^{(k)} + m_k - \mathscr{F}_i \!\left(k\right)}{\left[\mathbf{b}\right]_j^{(k)} - \mathscr{F}_i \!\left(k\right)}.
\end{equation}
The EBGS algorithm is then obtained by replacing $\mathbf{w}^{\mathrm{UB}}$ with $\mathbf{w}^{\mathrm{EB}}$ in Step~5 of Algorithm~\ref{alg:UBGS}. 
Pseudocode for the EBGS algorithm is provided as Algorithm~\ref{alg:EBGS} in Section~\ref{appx:alg-EBGS} of the e-companion. 
Notice that the samples generated by the EBGS algorithm are not completely uniform over the bipartite graph space $\mathcal{G}\!\left(\mathbf{a},\mathbf{b}\right)$. Following the idea of \cite{Glasserman2023}, importance sampling can be employed to correct for this nonuniformity when the objective is to compute expectations with respect to the uniform distribution on $\mathcal{G}\!\left(\mathbf{a},\mathbf{b}\right)$, with details included in Section~\ref{appx:IS-EBGS} of the e-companion.

\section{Extended Algorithms}\label{sec:Extend}
In this section, we extend the bipartite graph algorithms to both directed and undirected graphs by introducing additional self-loop and single-degree connection constraints, as well as  feasibility verification and symmetric connection steps. For each graph type, we develop an enumeration algorithm that exhaustively generates all feasible graph instances $\mathbf{G}\!\left(\mathbf{a},\mathbf{b}\right)$ within the corresponding graph space $\mathcal{G} \!\left(\mathbf{a}, \mathbf{b}\right)$ for small-scale settings, and two sampling algorithms that respectively perform uniform and efficient yet slightly biased sampling over the same graph space for larger-scale problems. The algorithms for directed and undirected graphs are presented in Sections~\ref{ssec:directed_graph} and \ref{ssec:undirected_graph}, respectively.

\subsection{Directed Graph Algorithms}\label{ssec:directed_graph}

In this subsection, we first introduce the self-loop and single-degree connection constraints and the feasibility verification step that enable the extension of the bipartite graph algorithms to the directed case. We then develop the corresponding enumeration and sampling algorithms for directed graphs, building on the bipartite graph algorithms and incorporating these additional components.

A directed graph, in which the vectors $\mathbf{a} \in \mathbb{N}_{0}^{m}$ and $\mathbf{b} \in \mathbb{N}_{0}^{n}$ represent the out-degree and in-degree sequences of nodes, respectively, can be regarded as a bipartite graph without self-loops once the nodes of degree zero in $\mathbf{a}$ and $\mathbf{b}$ are ignored. Therefore, to extend the bipartite graph algorithms to the directed case, we introduce additional self-loop and single-degree connection constraints and feasibility verification step that together exclude self-loops and ensure the generation of directed graphs.

To exclude self-loops in the generation of directed graphs, it is necessary to ensure that each $\left[\mathbf{b}\right]_j$ is not connected to its counterpart node in $\mathbf{a}^{[j]}$. Enforcing this constraint directly would require a backtracking procedure to discard configurations that violate the self-loop exclusion constraint, which is computationally inefficient; see Section~\ref{appx:exclude-self-loops} of the e-companion for details. To avoid this inefficiency, we instead adopt an indirect yet structurally convenient transformation. Specifically, we increase the degree of every node in $\mathbf{a}$ and $\mathbf{b}$ by one, including those with zero degree, to obtain augmented degree sequences. We then require each node $\left[\mathbf{b}\right]_j$ to form a self-loop with its counterpart node, which we denote by $\left[\mathbf{a}\right]_j$ for convenience, and let $k_j$ denote the index of the group containing this node. After generating a bipartite graph consistent with the augmented degree sequences, all self-loops are removed, yielding a directed graph without self-loops that satisfies the original degree sequences $\mathbf{a}$ and $\mathbf{b}$. During the generation of the augmented bipartite graph, the requirement that each node forms a self-loop with its counterpart constitutes a self-loop connection constraint.

In addition to the self-loop connection constraint, we impose a single-degree connection constraint to address a special case. When each node $\left[\mathbf{b}\right]_j$ connects to nodes of degree one in group~$1$ of $\mathbf{a}^{[j]}$, it cannot connect to nodes whose only remaining degree must be reserved for forming their self-loops.

Based on the self-loop and single-degree connection constraints, we establish the following theorem, which updates the interval condition in Theorem~\ref{thm:interval} for bipartite graphs to accommodate the augmented degree sequences used in the directed setting. The proof is included in Section \ref{appx:directed-interval-proof} of the e-companion.

\begin{theorem}\label{thm:directed-interval}
    Suppose that a directed graph exists for the prescribed degree sequences $\mathbf{a}$ and $\mathbf{b}$. To enforce the self-loop and single-degree connection constraints in the directed setting, the interval in Theorem~\ref{thm:interval} for bipartite graphs is updated for the augmented degree sequences as follows:
    
    (i) For $k < k_j$:
    \begin{equation}\label{eq:directed-interval-1}
    \mathscr{F} \!\left(k\right) \in
    \begin{cases}
        \left[
        \max \bigl\{ \bigl[\mathbf{b}\bigr]_j^{(k)} - M^{(k)},\, 0 \bigr\},\, \ 
        \min \bigl\{ D_{\mathrm{left}}^{(k)} + D_{\mathrm{now}}^{(k)} + D_{\mathrm{right}}^{(k)},\, m_k - n_{\mathrm{one}},\, u_{\Delta}^{(k)}\bigr\}
        \right], & \textnormal{if } k = 1, \\
        \left[
        \max \bigl\{ \bigl[\mathbf{b}\bigr]_j^{(k)} - M^{(k)},\, 0 \bigr\},\, \ 
        \min \bigl\{ D_{\mathrm{left}}^{(k)} + D_{\mathrm{now}}^{(k)} + D_{\mathrm{right}}^{(k)},\, m_k,\, u_{\Delta}^{(k)}\bigr\}
        \right], & \textnormal{if } k > 1 \\
    \end{cases}
    \end{equation}

    (ii) For $k = k_j$:
    \begin{equation}\label{eq:directed-interval-2}
    \mathscr{F} \!\left(k\right) \in 
    \begin{cases}
        \left[
        \max \bigl\{ \bigl[\mathbf{b}\bigr]_j^{(k)} - M^{(k)},\, 1 \bigr\},\, \ 
        \min \bigl\{ D_{\mathrm{left}}^{(k)} + D_{\mathrm{now}}^{(k)} + D_{\mathrm{right}}^{(k)},\, m_k - n_{\mathrm{one}} + 1\bigr\}
        \right], & \textnormal{if } k = 1, \\
        \left[
        \max \bigl\{ \bigl[\mathbf{b}\bigr]_j^{(k)} - M^{(k)},\, 1 \bigr\},\, \ 
        \min \bigl\{ D_{\mathrm{left}}^{(k)} + D_{\mathrm{now}}^{(k)} + D_{\mathrm{right}}^{(k)},\, m_k\bigr\}
        \right], & \textnormal{if } k > 1 \\
    \end{cases}
    \end{equation}

    (iii) For $k > k_j$:

    \begin{equation}\label{eq:directed-interval-3}
    \mathscr{F} \!\left(k\right) \in \left[
        \max \bigl\{ \bigl[\mathbf{b}\bigr]_j^{(k)} - M^{(k)},\, 0 \bigr\},\, \ 
        \min \bigl\{ D_{\mathrm{left}}^{(k)} + D_{\mathrm{now}}^{(k)} + D_{\mathrm{right}}^{(k)},\, m_k\bigr\}
        \right]
    \end{equation}
    where $n_{\mathrm{one}}$ denotes the number of degree-one nodes in group~$1$ that have not yet formed self-loops with their counterpart; $u_{\Delta}^{(k)} = D_{\mathrm{left}}^{(1,k_j)} + D_{\mathrm{now}}^{(1,k_j)} + D_{\mathrm{right}}^{(1,k_j)} - \sum_{k'=1}^{k-1}\mathscr{F}\!\left(k'\right) - 1$, with
    \begin{equation*}
    D_{\mathrm{left}}^{(1,k_j)} = \sum_{h=1}^{{\alpha_{k_j}}-1} \left[\mathscr{D} \!\left(\mathbf{z}^{(1)}, \mathbf{b}^{[j+1]}\right)\right]_h, \quad D_{\mathrm{now}}^{(1,k_j)}= \left[\mathscr{D} \!\left(\mathbf{z}^{(1)}, \mathbf{b}^{[j+1]}\right)\right]_{\alpha_{k_j}},
    \end{equation*}
    and
    \begin{equation*}
        D_{\mathrm{right}}^{(1,k_j)}= \min\left\{\min_{r\in \{\alpha_{k_j}+1,\, \alpha_{k_j}+2,\, n-j+1\}} \sum_{h=\alpha_{k_j} + 1}^{r} \left[\mathscr{D} \!\left(\mathbf{z}^{(1)}, \mathbf{b}^{[j+1]}\right)\right]_h, 0\right\}.
    \end{equation*}
    
\end{theorem}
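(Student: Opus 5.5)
Our plan is to derive Theorem~\ref{thm:directed-interval} from Theorem~\ref{thm:interval}, applied to the augmented degree sequences, and then intersect with the extra constraints. A directed graph with sequences $\mathbf{a},\mathbf{b}$ corresponds one-to-one to an augmented bipartite graph with sequences $\mathbf{a}+\mathbf{1},\mathbf{b}+\mathbf{1}$ that contains every self-loop edge $([\mathbf{b}]_j,[\mathbf{a}]_j)$. So at step $j$ we need exactly those $\mathscr{F}(k)$ for which (a) the residual bipartite problem $(\mathbf{a}^{[j+1]},\mathbf{b}^{[j+1]})$ is feasible, (b) $[\mathbf{b}]_j$ uses its own counterpart, and (c) no degree-one node of group~1 whose last unit of degree is reserved for its own self-loop is consumed by $[\mathbf{b}]_j$. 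Condition (a) is handled by Theorem~\ref{thm:interval}, applied verbatim with $[\mathbf{b}]_1$ replaced by $[\mathbf{b}]_j$. This accounts for the common terms $\max\{[\mathbf{b}]_j^{(k)}-M^{(k)},\cdot\}$ and $D_{\mathrm{left}}^{(k)}+D_{\mathrm{now}}^{(k)}+D_{\mathrm{right}}^{(k)}$ in every case.

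We then go through the three cases. For $k=k_j$, the counterpart must be among the chosen nodes, so the lower bound becomes $\max\{\cdot,1\}$. If in addition $k=1$, the counterpart is itself a degree-one node counted in $n_{\mathrm{one}}$, but it is allowed, so the cap is $m_k-n_{\mathrm{one}}+1$ rather than $m_k$. For $k=1<k_j$, the $n_{\mathrm{one}}$ reserved nodes are forbidden, so the cap is $m_k-n_{\mathrm{one}}$. For $k>k_j$, nothing new arises: the self-loop has already been placed and the groups are ordinary, so \eqref{eq:directed-interval-3} is just \eqref{eq:Fk}. Sufficiency in each case comes from choosing the configuration within group~$k$ so that it contains, or avoids, the designated nodes. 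Within a group, any choice of which nodes to take yields the same $\mathbf{z}^{(k+1)}$, so the Gale--Ryser feasibility from Lemma~\ref{lemma:necessary and sufficient condition} is unaffected by which specific nodes are taken.

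The main obstacle is the extra bound $u_\Delta^{(k)}$ for $k<k_j$. It expresses that, while processing earlier groups, we must leave room in group $k_j$ for at least one connection, namely the counterpart. Our plan is to reinterpret Theorem~\ref{thm:interval} as a bound on the cumulative count. Evaluating the upper bound at the state before any connections, but at group $k_j$, gives $D^{(1,k_j)}_{\mathrm{left}}+D^{(1,k_j)}_{\mathrm{now}}+D^{(1,k_j)}_{\mathrm{right}}$. By Remark~\ref{F(K)-note}, this is the maximal total number of $1\to0$ reductions in columns up to $\alpha_{k_j}$ of $\mathscr{Z}$ compatible with \eqref{eq:ns-conditions}, which is exactly the maximum of $\sum_{k'\le k_j}\mathscr{F}(k')$.

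Since $\mathscr{F}(k_j)\ge1$ is forced, we get $\sum_{k'\le k}\mathscr{F}(k')\le D^{(1,k_j)}_{\mathrm{total}}-1$, which rearranges to $\mathscr{F}(k)\le u_\Delta^{(k)}$. The step we would need to check carefully is that the cumulative-sum form of the prefix inequalities is monotone in this way, so that this single cumulative cap, together with the per-group caps, is also sufficient. We would argue this by showing that any deficit created by choosing $\mathscr{F}(k)\le u_\Delta^{(k)}$ can always be absorbed by setting $\mathscr{F}(k_j)=1$ and leaving the later groups feasible under \eqref{eq:directed-interval-3}. Finally, we would verify that the lower bounds are attainable, i.e.\ that each interval is nonempty, using the hypothesis that a directed graph exists.
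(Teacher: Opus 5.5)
Your derivation of the bounds follows the paper's route. The paper also applies Theorem~\ref{thm:interval} to the augmented sequences and writes the self-loop requirement as $u_{k_j}\ge 1$. It then turns this into the cumulative cap $\Delta=\sum_{k'<k_j}\mathscr{F}(k')\le D_{\mathrm{left}}^{(1,k_j)}+D_{\mathrm{now}}^{(1,k_j)}+D_{\mathrm{right}}^{(1,k_j)}-1$, using exactly the fact you take from Remark~\ref{F(K)-note}. Groups before $k_j$ only reduce columns $\alpha_1,\dots,\alpha_{k_j-1}<\alpha_{k_j}$, so $D_{\mathrm{left}}^{(k_j)}=D_{\mathrm{left}}^{(1,k_j)}-\Delta$, while $D_{\mathrm{now}}$ and $D_{\mathrm{right}}$ are unchanged. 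This identity is all you need; the sum is an upper bound on the prefix sum, not \emph{exactly} its maximum, since the caps $m_k$ can bind.

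One case of your argument is false, however. For $k=k_j=1$, the counterpart is a degree-one node only when $\alpha_1=1$. If $\alpha_1\ge 2$, then $n_{\mathrm{one}}=0$ and the cap $m_1-n_{\mathrm{one}}+1=m_1+1$ is unjustified. Because \eqref{eq:directed-interval-2} has dropped the term $m_k$, the stated interval can then contain impossible values. Example:
out-degrees $(1,2,2,2)$ and in-degrees $(3,2,1,1)$ are realizable, with edges $1\to2$, $2\to1$, $2\to4$, $3\to1$, $3\to2$, $4\to1$, $4\to3$. The augmented sequences are $(2,3,3,3)$ and $(4,3,2,2)$. At $j=1$ one gets $k_1=1$, $\alpha_1=2$, $m_1=1$ and $D^{(1)}_{\mathrm{left}}+D^{(1)}_{\mathrm{now}}+D^{(1)}_{\mathrm{right}}=1+2+0=3$. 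So \eqref{eq:directed-interval-2} gives $[1,2]$, although $\mathscr{F}(1)\le m_1=1$. The correct cap is $\min\bigl\{D^{(1)}_{\mathrm{left}}+D^{(1)}_{\mathrm{now}}+D^{(1)}_{\mathrm{right}},\,m_1,\,m_1-n_{\mathrm{one}}+1\bigr\}$. The paper's \eqref{eq:single-degree} has the same omission, so keep the $m_k$ term rather than asserting that the counterpart has degree one.

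The sufficiency step you leave open is where real work is needed, and your suggested route fails. Taking fewer nodes before $k_j$ raises the floor $[\mathbf{b}]_j^{(k_j)}-M^{(k_j)}$, which can exceed $1$, so in general you cannot finish by setting $\mathscr{F}(k_j)=1$. What must be shown is that no modified interval is ever empty along an admissible path. Choices inside the Theorem~\ref{thm:interval} intervals preserve bipartite feasibility and keep those intervals nonempty, so it suffices to show that each new cap is at least the floor.

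Put $T=D_{\mathrm{left}}^{(1,k_j)}+D_{\mathrm{now}}^{(1,k_j)}+D_{\mathrm{right}}^{(1,k_j)}$ and $\Delta_k=\sum_{k'<k}\mathscr{F}(k')$. The steps are:
\begin{enumerate}[(1)]
\item Nonemptiness of the Theorem~\ref{thm:interval} interval at group $k_j$ holds along any feasible path. After the substitution above it reads $[\mathbf{b}]_j-M^{(k_j)}\le T$, which does not depend on the path.
\item For $k<k_j$ we have $M^{(k)}\ge M^{(k_j)}+m_{k_j}\ge M^{(k_j)}+1$. Combined with (1), this gives $[\mathbf{b}]_j^{(k)}-M^{(k)}\le T-1-\Delta_k=u_\Delta^{(k)}$.
\item $u_\Delta^{(k)}\ge 0$ is inherited from the cap at group $k-1$; at $k=1$ it is the inequality $T\ge 1$.
\item The cap at group $k_j-1$ gives $D^{(k_j)}_{\mathrm{left}}+D^{(k_j)}_{\mathrm{now}}+D^{(k_j)}_{\mathrm{right}}=T-\Delta_{k_j}\ge 1$, so the floor $1$ at $k_j$ is attainable.
\item The remaining inequalities are $T\ge 1$ and $[\mathbf{b}]_j-M^{(1)}\le m_1-n_{\mathrm{one}}$ (or $\le m_1-n_{\mathrm{one}}+1$ when $k_j=1$). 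They hold because the \emph{current} residual sequences admit a directed graph. That is guaranteed inductively by the feasibility verification step, not by the hypothesis on the original $\mathbf{a},\mathbf{b}$ alone.
\end{enumerate}
The paper only asserts $u_\Delta^{(k)}\ge\ell_k$ with a one-line heuristic, so this argument is what your proof must supply.
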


However, while the self-loop and single-degree connection constraints substantially restrict the admissible choices of $\mathscr{F} \!\left(k\right)$, they are not sufficient to guarantee the generation of directed graphs. In particular, certain combinations of $\mathscr{F} \!\left(k\right)$ for $\left[\mathbf{b}\right]_j$ may satisfy these local constraints, yet lead to updated degree sequences $\mathbf{a}^{[j+1]}$ and $\mathbf{b}^{[j+1]}$ that are infeasible for generating directed graphs.
To rule out such cases, we further apply a feasibility verification step based on the necessary and sufficient conditions for the existence of directed graphs (see, for instance, Theorem~2 in \citet{Berger2014}). Specifically, for this verification, we temporarily restore the original degree scale by reducing by one the degrees of nodes in the augmented degree sequences that have not yet formed self-loops, and then check whether the resulting degree sequences admit a directed graph. This additional feasibility verification step ensures that the connection configurations of $\left[\mathbf{b}\right]_j$ are globally feasible for directed graphs generation.

In summary, to extend the bipartite graph algorithms to the directed case, we introduce the following two connection constraints and one feasibility verification step:
\begin{itemize}
    \item Self-loop connection constraint. Each node $\left[\mathbf{b}\right]_j$ is required to form a self-loop with its counterpart node $\left[\mathbf{a}\right]_j$ in $\mathbf{a}^{[j]}$ after increasing the degree of all nodes in $\mathbf{a}$ and $\mathbf{b}$ by one, which is equivalent to $\left[\mathbf{b}\right]_j$ not forming a self-loop under the original degree sequences.

    \item Single-degree connection constraint. Each node $\left[\mathbf{b}\right]_j$ cannot connect to nodes of degree one that have not yet formed self-loops, as their only remaining degree must be reserved for forming their own self-loops.

    \item Feasibility verification step. After each $\left[\mathbf{b}\right]_j$ completes its connections, the updated degree sequences $\mathbf{a}^{[j+1]}$ and $\mathbf{b}^{[j+1]}$ are verified using the necessary and sufficient conditions for the existence of a directed graph, to ensure the connection configurations of $\left[\mathbf{b}\right]_j$ are feasible.
\end{itemize}

\subsubsection{Directed Graph Enumeration Algorithm}

Recall that the basic idea of the BGE algorithm is to iterate over all connection configurations corresponding to each feasible value within the interval specified by \eqref{eq:Fk} in Theorem~\ref{thm:interval}. In particular, the BGE algorithm adopts a two-layer breadth-first search, in which the outer search iterates over the nodes of $\mathbf{b}$, while the inner search explores the degree groups associated with each node. To extend the BGE algorithm to the directed case, we incorporate the self-loop and single-degree connection constraints, as well as the feasibility verification step, into the inner layer. Specifically, 
the directed graph enumeration (DGE) algorithm is obtained from Algorithm~\ref{alg:BGE} by the following modifications: (i) augmenting $\mathbf{a}$ and $\mathbf{b}$ by increasing the degree of each node by one at the initialization stage of Algorithm~\ref{alg:BGE}, (ii) updating the interval condition in Step~15 according to \eqref{eq:directed-interval-1}--\eqref{eq:directed-interval-3} in Theorem~\ref{thm:directed-interval}, (iii) connecting each node $\left[\mathbf{b}\right]_j$ to its counterpart node in group~$k_j$ to form the corresponding self-loop in Step~17, (iv) inserting the feasibility verification step after completing the connection process for each node $\left[\mathbf{b}\right]_j$,
and (v) removing all self-loops before adding each directed graph to the output set.
Pseudocode for the DGE algorithm is provided as Algorithm~\ref{alg:DGE} in Section~\ref{appx:alg-DGE} of the e-companion.

\subsubsection{Directed Graph Sampling Algorithms}

To extend the bipartite graph sampling algorithms to the directed case, we begin with the UBGS algorithm. Recall that the key point of the UBGS algorithm is to assign exact weights to each feasible value in the interval, where each weight should be proportional to the number of distinct bipartite graphs that can be generated from selecting that value. To extend the UBGS algorithm to the directed case, we first incorporate the self-loop and single-degree connection constraints, as well as the feasibility verification step, into the computation of the exact weights for directed graphs. Because the single-degree connection constraint and the feasibility verification step depend on which specific nodes are connected rather than only on the number of connections, the weight assignment must be performed at the level of each connection configuration rather than at the level of feasible values. Accordingly, the exact weight assigned to each configuration should be proportional to the number of distinct directed graphs that can be generated from that configuration.

Specifically, let $N_D\!\left(\mathbf{a}^{[j]},\mathbf{b}^{[j]},\mathscr{F} \!\left(k\right)\right)$ denote the number of distinct directed graphs that can be generated from $\mathscr{F} \!\left(k\right)$. Define $N_D\!\left(\mathbf{a}^{[j]},\mathbf{b}^{[j]},\mathscr{F}_{il} \!\left(k\right)\right)$ as the number of distinct directed graphs that can be generated from the $l$-th connection configuration associated with the $i$-th feasible value $\mathscr{F}_i \!\left(k\right)$ of $\mathscr{F}\!\left(k\right)$. We then establish the following proposition, which characterizes a recursive relation between $N_D\!\left(\mathbf{a}^{[j]},\mathbf{b}^{[j]},\mathscr{F}\!\left(k\right)\right)$ and $N_D\!\left(\mathbf{a}^{[j]},\mathbf{b}^{[j]},\mathscr{F}_{il}\!\left(k\right)\right)$. The proof is included in Section \ref{appx:T3-proof} of the e-companion.

\begin{proposition}\label{prop:ND(a,b,F(k))}
    For each pair of degree sequences $(\mathbf{a}^{[j]},\mathbf{b}^{[j]})$ with $j=1,\ldots,n$ and for each $k=1,\ldots,p$, the numbers of distinct directed graphs that can be generated from $\mathscr{F}\!\left(k\right)$ and from the $l$-th connection configuration associated with the $i$-th feasible value $\mathscr{F}_i\!\left(k\right)$ are respectively given by
    \begin{equation}\label{eq:ND_K-N_Ki}
    N_D\!\left(\mathbf{a}^{[j]},\mathbf{b}^{[j]},\mathscr{F} \!\left(k\right)\right) = \sum_{i=1}^{u_k-\ell_k+1} \sum_{l=1}^{\binom{m_k}{\mathscr{F}_i\!\left(k\right)}} N_D\!\left(\mathbf{a}^{[j]},\mathbf{b}^{[j]},\mathscr{F}_{il} \!\left(k\right)\right)
    \end{equation}
    and
    \begin{equation}\label{eq:ND(a,b,k)}
    N_D\!\left(\mathbf{a}^{[j]},\mathbf{b}^{[j]},\mathscr{F}_{il} \!\left(k\right)\right) = 
    \begin{cases}
        \displaystyle
         N_D\!\left(\mathbf{a}^{[j]},\mathbf{b}^{[j]},\mathscr{F}\!\left(k+1\right)\right), & \text{if } k < p, \\[10pt]
        \displaystyle
         N_D\!\left(\mathbf{a}^{[j+1]},\mathbf{b}^{[j+1]},\mathscr{F}\!\left(1\right)\right), & \text{otherwise.}
    \end{cases}
    \end{equation}
\end{proposition}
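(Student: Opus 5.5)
The argument is a counting argument over the search tree generated by the directed sequential procedure (the DGE algorithm with the modified intervals of Theorem~\ref{thm:directed-interval}, the self-loop and single-degree constraints, and the feasibility verification). I would first fix the meaning of $N_D\!\left(\mathbf{a}^{[j]},\mathbf{b}^{[j]},\mathscr{F}(k)\right)$ precisely. It is the number of distinct directed graphs (after self-loops are removed) that the procedure can produce from a given partial state, namely one in which $\left[\mathbf{b}\right]_1,\dots,\left[\mathbf{b}\right]_{j-1}$ are fully connected and $\left[\mathbf{b}\right]_j$ has fixed its connections to groups $1,\dots,k-1$. The quantity $N_D\!\left(\mathbf{a}^{[j]},\mathbf{b}^{[j]},\mathscr{F}_{il}(k)\right)$ is the same count once, in addition, the $l$-th configuration of $\mathscr{F}_i(k)$ nodes of group $k$ has been fixed. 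The base case is $N_D(\mathbf{0},\mathbf{0},0)=1$, as in the bipartite setting.

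\textbf{Step 1: equation \eqref{eq:ND_K-N_Ki}.} I would show that the completions counted by $N_D(\cdot,\mathscr{F}(k))$ split into disjoint classes indexed by the pairs $(i,l)$.
\begin{enumerate}[(a)]
\item \emph{Exhaustiveness.} Any completed directed graph extending the current state induces, at group $k$, some set $S$ of group-$k$ nodes adjacent to $\left[\mathbf{b}\right]_j$ in the augmented bipartite representation. By the necessity direction of Theorem~\ref{thm:directed-interval}, $|S|$ lies in $[\ell_k,u_k]$, so $|S|=\mathscr{F}_i(k)$ for some $i$. The set $S$ is then one of the $\binom{m_k}{\mathscr{F}_i(k)}$ configurations, say the $l$-th. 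If that configuration violates the single-degree constraint or fails a later feasibility check, it has no completions, so $N_D(\cdot,\mathscr{F}_{il}(k))=0$ and it contributes nothing to the sum.
\item \emph{Disjointness.} Different pairs $(i,l)$ give different edge sets at $\left[\mathbf{b}\right]_j$, hence different graphs. Removing self-loops does not merge distinct graphs, because the self-loop $(\left[\mathbf{b}\right]_j,\left[\mathbf{a}\right]_j)$ is forced in every configuration. The correspondence between augmented bipartite graphs with all forced self-loops and loopless directed graphs is therefore a bijection.
\end{enumerate}
Summing over the classes gives \eqref{eq:ND_K-N_Ki}.

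\textbf{Step 2: equation \eqref{eq:ND(a,b,k)}.} Once the $l$-th configuration is fixed, the state is fully specified. If $k<p$, the next decision is at group $k+1$ of the same node, so the number of completions is by definition $N_D(\mathbf{a}^{[j]},\mathbf{b}^{[j]},\mathscr{F}(k+1))$. If $k=p$, node $\left[\mathbf{b}\right]_j$ is finished. The degrees are updated to $(\mathbf{a}^{[j+1]},\mathbf{b}^{[j+1]})$, regrouped, and the process restarts at group~1 of the next node, so the count is $N_D(\mathbf{a}^{[j+1]},\mathbf{b}^{[j+1]},\mathscr{F}(1))$. Here the feasibility verification is absorbed into this quantity: it equals zero when the verification fails, since no directed graph completes that state.

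Unlike Proposition~\ref{prop:N(a,b,F(k))}, there is no binomial factor. In the bipartite case the completion count depends only on $\mathscr{F}_i(k)$, because nodes within a group are interchangeable. Here the self-loop counterpart $\left[\mathbf{a}\right]_j$ and the reserved degree-one nodes break that symmetry, so each configuration is counted separately.

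\textbf{Main obstacle.} The delicate point is exhaustiveness: every directed graph must actually be reached. This requires that the interval of Theorem~\ref{thm:directed-interval} never excludes a value of $\mathscr{F}(k)$ that some completion realizes, i.e. that it is a valid necessary condition. I would rely on that theorem's necessity, together with the fact that the feasibility check, being an exact characterization (\citet{Berger2014}), prunes only dead states. I would also verify that the augmentation correspondence is bijective, so that each directed graph corresponds to exactly one leaf of the tree.
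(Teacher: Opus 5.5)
Your proposal is correct and follows the same route as the paper's proof. You split the count by the feasible value $\mathscr{F}_i(k)\in[\ell_k,u_k]$ and the configuration index $l$, and you read the second identity off from the process moving to group $k+1$ of the same node or to group $1$ of $(\mathbf{a}^{[j+1]},\mathbf{b}^{[j+1]})$; the paper simply takes $N_D$ as the number of graphs the procedure generates, so your exhaustiveness and disjointness checks add care rather than a different argument, though they are what is needed to identify $N_D$ with the true number of directed graphs.
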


To incorporate the self-loop and single-degree connection constraints, as well as the feasibility verification step, the lower bound $\ell_k$ and the upper bound $u_k$ in \eqref{eq:ND_K-N_Ki} are updated according to \eqref{eq:directed-interval-1}--\eqref{eq:directed-interval-3}, and the following cases apply:
\begin{equation}\label{eq:directed-conditional-equalities}
\begin{cases}
N_D\!\left(\mathbf{a}^{[j]}, \mathbf{b}^{[j]}, \mathscr{F}_{il}\!\left(k_j\right)\right) = 0, & \text{if } \left[\mathbf{b}\right]_j \text{ does not connect its counterpart node,} \\
N_D\!\left(\mathbf{a}^{[j+1]}, \mathbf{b}^{[j+1]}, \mathscr{F}\!\left(1\right)\right) = 0, & \text{if the feasibility verification for } (\mathbf{a}^{[j+1]}, \mathbf{b}^{[j+1]}) \text{ fails.}
\end{cases}
\end{equation}

By recursively applying \eqref{eq:ND_K-N_Ki} with the updated bounds, together with \eqref{eq:ND(a,b,k)} and \eqref{eq:directed-conditional-equalities}, and using the initial condition $N_D(\mathbf{0},\mathbf{0},0)=~1$, we can compute all quantities $N_D\!\left(\mathbf{a}^{[j]},\mathbf{b}^{[j]},\mathscr{F}_{il}\!\left(k\right)\right)$. 
Consequently, the exact weight vector $\mathbf{w}^{\mathrm{UD}}$ for sampling each connection configuration can be obtained, where each entry is given by
\begin{equation}\label{eq:directed_weight-for-uniform}
    \left[\mathbf{w}^{\mathrm{UD}}\right]_{il} = N_D\!\left(\mathbf{a}^{[j]},\mathbf{b}^{[j]},\mathscr{F}_{il} \!\left(k\right)\right).
\end{equation}

Based on the exact weight vector $\mathbf{w}^{\mathrm{UD}}$ and the UBGS algorithm, we develop the uniform directed graph sampling (UDGS) algorithm. Specifically,
the UDGS algorithm is obtained from Algorithm~\ref{alg:UBGS} by the following modifications: (i) augmenting $\mathbf{a}$ and $\mathbf{b}$ by increasing the degree of each node by one at the initialization stage of Algorithm~\ref{alg:UBGS}, (ii) updating the interval condition in Step~4 according to \eqref{eq:directed-interval-1}--\eqref{eq:directed-interval-3}, (iii) replacing the sampling of feasible values in Steps~5--7 with the sampling of connection configurations, (iv) replacing the weights used in Step~5 with those given in \eqref{eq:directed_weight-for-uniform}, and (v) removing all self-loops before returning the final directed graph. Pseudocode for the UDGS algorithm is provided as Algorithm~\ref{alg:UDGS} in Section~\ref{appx:alg-UDGS} of the e-companion.

Next, we extend the EBGS algorithm to the directed case by incorporating the self-loop and single-degree connection constraints, as well as the feasibility verification step. Specifically, the efficient directed graph sampling (EDGS) algorithm is obtained from Algorithm~\ref{alg:EBGS} by the following modifications: (i) augmenting $\mathbf{a}$ and $\mathbf{b}$ by increasing the degree of each node by one at the initialization stage of Algorithm~\ref{alg:EBGS}, (ii) updating the interval condition in Step~4 according to \eqref{eq:directed-interval-1}--\eqref{eq:directed-interval-3}, (iii) connecting each node $\left[\mathbf{b}\right]_j$ to its counterpart node in group~$k_j$ to form the corresponding self-loop in Step~7, (iv) inserting the feasibility verification step after completing the connection process for each node $\left[\mathbf{b}\right]_j$, and (v) removing all self-loops before returning the final directed graph. Pseudocode for the EDGS algorithm is provided as Algorithm~\ref{alg:EDGS} in Section~\ref{appx:alg-EDGS} of the e-companion. Note that, similar to the EBGS algorithm, importance sampling can also be applied to the EDGS algorithm to correct for its nonuniformity when computing expectations with respect to the uniform distribution on $\mathcal{G}\!\left(\mathbf{a},\mathbf{b}\right)$.

\subsection{Undirected Graph Algorithms}\label{ssec:undirected_graph}

In this subsection, we first introduce the symmetric connection step that enables the extension of the directed graph algorithms to the undirected case. We then develop the corresponding enumeration and sampling algorithms for undirected graphs by incorporating this step into the directed graph algorithms.

An undirected graph, in which the vector $\mathbf{a} = \mathbf{b} \in \mathbb{N}_{+}^{m}$ represents the degree sequence of nodes, can be viewed as a directed graph where the out-degree and the in-degree sequences are identical. To adapt the directed graph algorithms to the undirected case, we introduce a symmetric connection step that enforces $\mathbf{a}^{[j]} = \mathbf{b}^{[j]}$ for all $j = 1,\ldots,n$. Specifically, after each node $\left[\mathbf{b}\right]_j$ completes its connections, its counterpart node $\left[\mathbf{a}\right]_j$ simultaneously establishes a symmetric set of connections, thereby mirroring each edge of $\left[\mathbf{b}\right]_j$ and ensuring that $\mathbf{a}^{[j+1]} = \mathbf{b}^{[j+1]}$. In addition, in the feasibility verification step, the necessary and sufficient conditions for the existence of a directed graph are replaced by the corresponding conditions for an undirected graph (see, for instance, the Erd\H{o}s--Gallai theorem; \citet{Erdos1960}). As a result of the symmetric connection step described above, the degree sequence $\mathbf{b}^{[j+1]}$ is updated, and the next node to be processed should be re-identified as the node with the largest remaining degree.

\subsubsection{Undirected Graph Enumeration Algorithm}

To extend the DGE algorithm to the undirected case, we incorporate the symmetric connection step into the inner layer of the algorithm. Specifically, the undirected graph enumeration (UGE) algorithm is obtained from Algorithm~\ref{alg:DGE} by the following modifications: (i) establishing the corresponding symmetric connections for the node $\left[\mathbf{a}\right]_j$ once each node $\left[\mathbf{b}\right]_j$ completes its connections, (ii) replacing the necessary and sufficient conditions for the existence of a directed graph in the feasibility verification step with the corresponding conditions for an undirected graph, and (iii) re-identifying the node with the largest remaining degree in $\mathbf{b}^{[j+1]}$ as the next node to be processed. Pseudocode for the UGE algorithm is provided as Algorithm~\ref{alg:UGE} in Section~\ref{appx:alg-UGE} of the e-companion.

\subsubsection{Undirected Graph Sampling Algorithms}

To extend the directed graph sampling algorithms to the undirected case, we begin with the UDGS algorithm. Recall that the key point of the UDGS algorithm is to assign exact weights to each connection configuration associated with each feasible value in the interval, where each weight should be proportional to the number of distinct directed graphs that can be generated from that configuration. We first incorporate the symmetric connection step into the computation of the exact weights for undirected graphs. In particular, in the recursive process for computing the exact weights, once each node $\left[\mathbf{b}\right]_j$ completes its connections, its counterpart node $\left[\mathbf{a}\right]_j$ establishes the corresponding symmetric connections. After the symmetric connection step, the feasibility verification step is carried out using the necessary and sufficient conditions for the existence of an undirected graph. Subsequently, for each updated degree sequence $\mathbf{b}^{[j+1]}$, the node with the largest remaining degree is re-identified as the next node to be processed. Therefore, the exact weight vector $\mathbf{w}^{\mathrm{UU}}$ for undirected case can be obtained.

Based on the exact weight vector $\mathbf{w}^{\mathrm{UU}}$ and the UDGS algorithm, we develop the uniform undirected graph sampling (UUGS) algorithm. Specifically, the UUGS algorithm is obtained from Algorithm~\ref{alg:UDGS} by the following modifications: (i) replacing the weights used in Step~7 with $\mathbf{w}^{\mathrm{UU}}$, (ii) establishing the corresponding symmetric connections for the node $\left[\mathbf{a}\right]_j$ once each node $\left[\mathbf{b}\right]_j$ completes its connections, and (iii) re-identifying the node with the largest remaining degree in $\mathbf{b}^{[j+1]}$ as the next node to be processed. Pseudocode for the UUGS algorithm is provided as Algorithm~\ref{alg:UUGS} in Section~\ref{appx:alg-UUGS} of the e-companion.

Next, we extend the EDGS algorithm to the undirected case by incorporating the symmetric connection step into the algorithm. Specifically, the efficient undirected graph sampling (EUGS) algorithm is obtained from Algorithm~\ref{alg:EDGS} by the following modifications: (i) establishing the corresponding symmetric connections for the node $\left[\mathbf{a}\right]_j$ once each node $\left[\mathbf{b}\right]_j$ completes its connections, (ii) replacing the necessary and sufficient conditions for the existence of a directed graph in the feasibility verification step with the corresponding conditions for an undirected graph, and (iii) re-identifying the node with the largest remaining degree in $\mathbf{b}^{[j+1]}$ as the next node to be processed. Pseudocode for the EUGS algorithm is provided as Algorithm~\ref{alg:EUGS} in Section~\ref{appx:alg-EUGS} of the e-companion. Note that, similar to the EBGS algorithm, importance sampling can also be applied to the EUGS algorithm to correct for its nonuniformity when computing expectations with respect to the uniform distribution on $\mathcal{G}\!\left(\mathbf{a},\mathbf{b}\right)$.

\section{Numerical Experiments}\label{sec:numerical-experiemnts}

In this section, we demonstrate the performance of the proposed enumeration and sampling algorithms. For the enumeration algorithms, we compare their runtime performance with that of a na\"{i}ve brute-force algorithm designed for reference.
For the sampling algorithms, we compare them with the {\em sequential algorithm} of \cite{Glasserman2023} and the {\em exact sampling algorithm} of \cite{Miller2013}, focusing on sampling speed, graph distributional uniformity, and graph space coverage ratio. The results for bipartite, directed, and undirected graph algorithms are presented in Sections~\ref{ssec:numerical_bipartite}, \ref{ssec:numerical_directed}, and \ref{ssec:numerical_undirected}, respectively. All numerical experiments are implemented in Python and executed on a desktop computer equipped with a 3.6~GHz Intel Core i7-12700K processor and 16~GB of RAM. 

\subsection{Bipartite Graphs}\label{ssec:numerical_bipartite}

We first evaluate the performance of the BGE algorithm in terms of enumeration runtime, and then evaluate the performance of the UBGS and EBGS algorithms in terms of sampling speed, uniformity, and graph space coverage ratio.

{\bf Enumeration runtime}. We compare the BGE algorithm with the following na\"{i}ve brute-force algorithm: given degree sequences $\mathbf{a}$ and $\mathbf{b}$, generate all graphs that satisfy only one of the two degree sequences, and then filter from them all feasible bipartite graphs that satisfy both degree sequences. Pseudocode for the brute-force algorithm is provided as Algorithm~\ref{alg:brute-force} in Section~\ref{appx:alg-brute-force} of the e-companion.

We implement the brute-force and BGE algorithms on six degree sequences and, for each sequence, record the enumeration runtime. 
We report the results in Table~\ref{tab:bg_enumeration_runtime}, which includes the number of distinct bipartite graphs ($N_B$), as well as the enumeration runtime of the brute-force algorithm ($T_{BF}$) and the BGE algorithm ($T_{BGE}$). The results demonstrate that BGE is several orders of magnitude faster than the brute-force algorithm. For example, BGE enumerates 1,372 bipartite graphs in roughly 0.02 seconds, and 15,732 bipartite graphs in 0.3 seconds. In contrast, the brute-force algorithm requires approximately 14 seconds and $10^{6}$ seconds, respectively.

\begin{table}[htpb!]
    \scriptsize
    \centering
    \caption{\baselineskip10pt Runtime of the deterministic algorithms for bipartite graphs.
    }
    \label{tab:bg_enumeration_runtime}
    \begin{tabular}{llccc}
        \toprule
        \multicolumn{2}{l}{Degree Sequence} 
        & \multirow{2}{*}{$N_B$}
            & \multicolumn{1}{c}{Brute-Force Algorithm} 
            & \multicolumn{1}{c}{BGE Algorithm} \\
        \cmidrule(lr){4-4} \cmidrule(lr){5-5}
        $\mathbf{a}$ & $\mathbf{b}$ 
            &  
            & $T_{BF}$ (sec)
            & $T_{BGE}$ (sec) \\
        \midrule
$(3,3,3,2,1)$ & $(5,3,2,1,1)$ 
    & $27$ &  $0.03$ 
    & $0.0003$ \\
$(3,2,2,2,2)$ & $(4,3,1,1,1,1)$ 
    & $688$ &  $0.4$ 
    & $0.007$ \\
$(5,5,4,4,3,2)$ & $(5,5,5,3,3,2)$ 
    & $1{,}372$  & $14$ 
    & $0.02$ \\
$(7,7,6,5,4,3,2,2)$ & $(7,7,6,5,4,3,3,1)$ 
    & $15{,}732$ & $\approx 10^{6}$ 
    & $0.3$ \\
$(8,7,7,6,5,4,2,2,2)$ & $(8,7,7,6,5,4,3,2,1)$ 
    & $194{,}010$ &  $\gg 10^{6}$ 
    & $2.9$ \\
$(8,7,6,6,5,4,2,2,2)$ & $(8,7,6,6,5,4,3,2,1)$ 
    & $1{,}991{,}820$ &  $\gg 10^{6}$ 
    & $31$ \\
        \bottomrule
    \end{tabular}
    
\end{table}


{\bf Sampling speed}. We evaluate the sampling speed of the UBGS and EBGS algorithms by comparing them with the sequential and exact sampling algorithms. Specifically, we implement all algorithms on the Interbank-1 degree sequence $\mathbf{a} = \mathbf{b} = (6,\,6,\,6,\,5,\,5,\,5,\,5,\,4,\,4,\,3,\,2)$ and the Interbank-2 degree sequence $\mathbf{a} = \mathbf{b} =(9,\,9,\,9,\,9,\,9,\,9,\,8,\,8,\,8,\,7,\,6)$, both of which were used in the numerical experiments of \cite{Glasserman2023}. For each degree sequence, every algorithm is executed 10,000 times, consistent with the setup in their paper. Table~\ref{tab:bg_Interbank} reports the average runtime of the original implementation of the sequential algorithm (as reported in their paper), our reproduced version, the exact sampling algorithm, UBGS, and EBGS. The results show that EBGS and UBGS are slightly faster than the exact sampling algorithm, and all three algorithms are substantially faster than the sequential algorithm across both degree sequences.

\begin{table}[htpb!]
    \scriptsize
    \centering
    \caption{\baselineskip10pt Runtime (mean, standard deviation) of the algorithms for the degree sequences from \cite{Glasserman2023}.}   
    \label{tab:bg_Interbank}
    \begin{tabular}{lccccc}
        \toprule
        \multirow{2}{*}{Degree Sequence} & \multicolumn{2}{c}{Sequential Algorithm} & \multirow{2}{*}{Exact Sampling Algorithm} & \multirow{2}{*}{UBGS Algorithm} & \multirow{2}{*}{EBGS Algorithm} \\
        \cmidrule(lr){2-3}
        & Original & Reproduced \\
        
        \midrule
        Interbank-1 & 0.17 & 0.15 (0.013) & 0.0028 (0.00065) & 0.0017 (0.00055) & 0.0013 (0.00052) \\
        Interbank-2 & 0.15 & 0.14 (0.017) & 0.0017 (0.00053) & 0.0016 (0.00048) & 0.0012 (0.00043) \\
        \bottomrule
    \end{tabular}
\end{table}

To further assess the scalability of these algorithms as the problem size grows, we consider the family of degree sequences $\mathbf{a} = \mathbf{b} = (n-1,\,n-1,\,n-2,\,n-3,\,\ldots,\,2,\,1)$, where $n$ denotes the number of nodes and $n \geq 3$. We evaluate UBGS and EBGS for $n = 3$ to 200, the sequential algorithm for $n = 3$ to 95, since its average runtime exceeds 100 seconds when $n \geq 95$, and the exact sampling algorithm for $n = 3$ to 26 due to the same runtime limitation. For each value of $n$, each experiment is repeated 20 times across all algorithms. Figure~\ref{fig:bg_sampling_speed} reports the average runtime of the sequential, exact sampling, UBGS, and EBGS algorithms as the number of nodes increases. The results show that EBGS and UBGS scale substantially better than the sequential algorithm, and all three algorithms exhibit better scalability than the exact sampling algorithm for this family of degree sequences.

\begin{figure}[htpb!]
    \centering
    \includegraphics[width=0.55\linewidth]{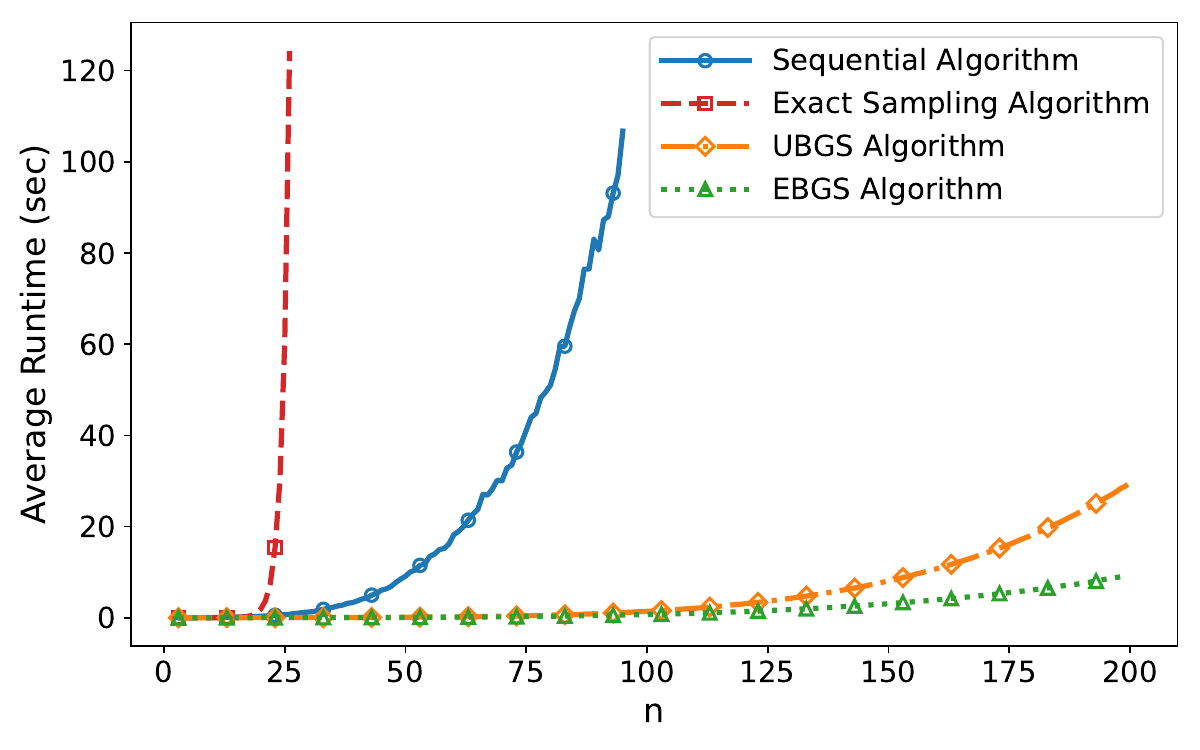}
    \caption{\baselineskip10pt Runtime as a function of the number of nodes for bipartite graphs with degree sequences $\mathbf{a} = \mathbf{b} = (n-1,\,n-1,\,n-2,\,n-3,\,\ldots,\,2,\,1)$ averaged over 20 replications.}
    \label{fig:bg_sampling_speed}
\end{figure}

{\bf Sampling uniformity}. We evaluate the sampling uniformity of all algorithms using two metrics: the coefficient of variation (CV), which measures the dispersion within each sampling distribution, and the Kullback--Leibler (KL) divergence between the sampling distribution and the corresponding uniform distribution, which quantifies the deviation from perfect uniformity. For both metrics, smaller values indicate better uniformity. The comparison is conducted on the four degree sequences used in the enumeration-runtime experiment, and for each sequence, all algorithms generate 5,000 independent samples. Table~\ref{tab:bg_sampling_uniformity} reports the resulting CV and KL values. The results show that, in terms of sampling uniformity, UBGS and the exact sampling algorithm achieve the highest levels of uniformity, followed by the sequential algorithm, while EBGS exhibits the lowest uniformity, although it remains reasonably good.

\begin{table}[htpb!]
    \scriptsize
    \centering
    \caption{\baselineskip10pt Sampling uniformity of the algorithms for bipartite graphs: coefficient of variation (CV) and Kullback--Leibler (KL) divergence; smaller is better.}
    \label{tab:bg_sampling_uniformity}
    \resizebox{\textwidth}{!}{
    \begin{tabular}{ll|cc|cc|cc|cc}
    \toprule
    \multicolumn{2}{l|}{Degree Sequence} 
        & \multicolumn{2}{c|}{Sequential Algorithm} 
        & \multicolumn{2}{c|}{Exact Algorithm} 
        & \multicolumn{2}{c|}{UBGS Algorithm} 
        & \multicolumn{2}{c}{EBGS Algorithm} \\
    \cmidrule(lr){3-4} \cmidrule(lr){5-6} \cmidrule(lr){7-8} \cmidrule(lr){9-10}
    $\mathbf{a}$ & $\mathbf{b}$ 
      & CV & KL 
      & CV & KL
      & CV & KL 
      & CV & KL \\
    \midrule
    $(3,3,3,2,1)$ & $(5,3,2,1,1)$
      & 9.99E-02 & 5.04E-03
      & 6.29E-02 & 1.98E-03
      & 7.31E-02 & 2.72E-03
      & 2.81E-01 & 3.95E-02 \\
    $(3,2,2,2,2)$ & $(4,3,1,1,1,1)$
      & 4.02E-01 & 8.36E-02
      & 3.71E-01 & 7.01E-02
      & 3.72E-01 & 6.98E-02
      & 4.92E-01 & 1.24E-01 \\
    $(5,5,4,4,3,2)$ & $(5,5,5,3,3,2)$
      & 5.41E-01 & 1.45E-01
      & 4.96E-01 & 1.23E-01
      & 4.91E-01 & 1.20E-01
      & 5.99E-01 & 1.69E-01 \\
    $(7,7,6,5,4,3,2,2)$ & $(7,7,6,5,4,3,3,1)$
      & 3.68E-01 & 5.46E-02
      & 3.59E-01 & 5.22E-02
      & 3.54E-01 & 5.12E-02
      & 4.10E-01 & 6.37E-02 \\
    \bottomrule
    \end{tabular}
    }
\end{table}

{\bf Graph space coverage ratio}. We evaluate the graph space coverage ratio of the UBGS and EBGS algorithms, comparing their performance with that of the sequential and exact sampling algorithms. The graph space coverage ratio is defined as the fraction of distinct feasible bipartite graphs generated by the algorithm over the entire graph space. We use one of the degree sequences from the enumeration-runtime experiment, namely $\mathbf{a} = (7,\,7,\,6,\,5,\,4,\,3,\,2,\,2)$ and $\mathbf{b} = (7,\,7,\,6,\,5,\,4,\,3,\,3,\,1)$, which correspond to 15,732 distinct bipartite graphs, as reported in the fourth row of Table~\ref{tab:bg_enumeration_runtime}. For each sampling algorithm, we vary the runtime budget from 10 to 100 seconds (in increments of 10 seconds) and record: (i) the total number of generated samples ($N_T$), (ii) the number of distinct bipartite graphs within $N_T$, i.e., $N_B$, and (iii) the coverage ratio $R_B = N_B / 15{,}732$, i.e., the fraction of distinct feasible bipartite graphs generated over the entire graph space. For each runtime setting, all experiments are repeated 20 times to compute the associated standard deviation.

Figure~\ref{fig:bg_sampling_coverage_ratio} reports the coverage ratio $R_B$ as a function of the runtime budget from 10 to 100 seconds for the four algorithms. Table~\ref{tab:bg_distinct_graph_coverage_ratio} summarizes the detailed numerical results at four representative runtime budgets (10, 20, 40, and 80 seconds), including $N_T$, $N_B$, and $R_B$. The results in Figure~\ref{fig:bg_sampling_coverage_ratio} and Table~\ref{tab:bg_distinct_graph_coverage_ratio} show that UBGS, EBGS, and the exact sampling algorithm consistently outperform the sequential algorithm in terms of {coverage ratio}. For example, within 10 seconds, UBGS, EBGS, and the exact sampling algorithm each generate more than $50\%$ of all distinct bipartite graphs, whereas the sequential algorithm produces less than $3\%$. Moreover, UBGS, EBGS, and the exact sampling algorithm are able to generate nearly all distinct bipartite graphs within 80 seconds, whereas the sequential algorithm produces less than $19\%$ under the same runtime budget. In addition, EBGS is slightly stronger than the exact sampling algorithm during the early stages of sampling but is eventually overtaken in the later stages, because the exact sampling algorithm generates samples according to the uniform distribution, whereas EBGS is only approximately uniform.

\begin{figure}[htpb!]
    \centering
    \includegraphics[width=0.55\linewidth]{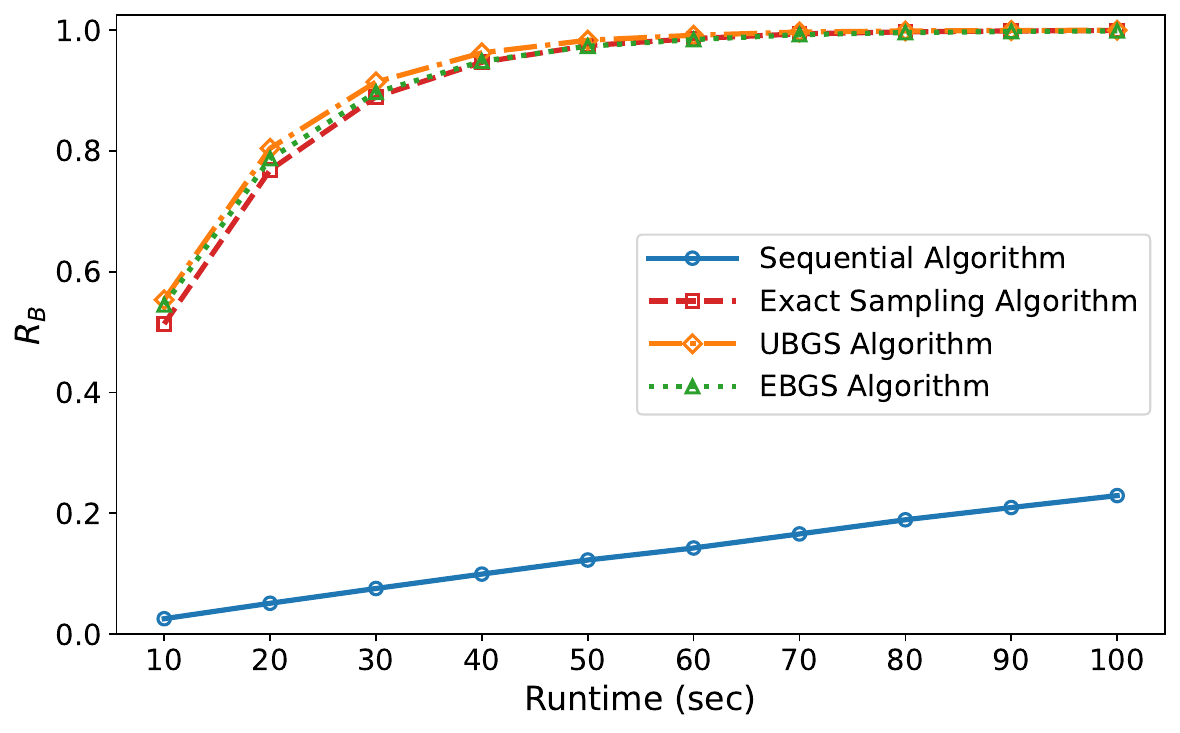}
    \caption{\baselineskip10pt Coverage ratio versus runtime for bipartite graphs with degree sequences $\mathbf{a} = (7,\,7,\,6,\,5,\,4,\,3,\,2,\,2)$ and $\mathbf{b} = (7,\,7,\,6,\,5,\,4,\,3,\,3,\,1)$.}
    \label{fig:bg_sampling_coverage_ratio}
\end{figure}

\begin{table}[htpb!]
    \scriptsize
    \centering
    \caption{\baselineskip10pt Coverage ratio for bipartite graphs with degree sequences $\mathbf{a} = (7,\,7,\,6,\,5,\,4,\,3,\,2,\,2)$ and $\mathbf{b} = (7,\,7,\,6,\,5,\,4,\,3,\,3,\,1)$ (standard deviation).}
    \label{tab:bg_distinct_graph_coverage_ratio}
    \resizebox{\textwidth}{!}{
\begin{tabular}{l|ccc|ccc|ccc|ccc}
    \toprule
    \multirow{2}{*}{Time (sec)} 
      & \multicolumn{3}{c|}{Sequential Algorithm} 
      & \multicolumn{3}{c|}{Exact Sampling Algorithm} 
      & \multicolumn{3}{c|}{UBGS Algorithm} 
      & \multicolumn{3}{c}{EBGS Algorithm}  \\
    \cmidrule(lr){2-4} \cmidrule(lr){5-7} \cmidrule(lr){8-10} \cmidrule(lr){11-13}
     & $N_T$ & $N_B$ & $R_B$
     & $N_T$ & $N_B$ & $R_B$
     & $N_T$ & $N_B$ & $R_B$
     & $N_T$ & $N_B$ & $R_B$ \\
    \midrule
    
    10 
     & \makecell[c]{\scriptsize 405.90\\[-1pt]\scriptsize(6.17)}
     & \makecell[c]{\scriptsize 400.55\\[-1pt]\scriptsize(6.72)}
     & \makecell[c]{\scriptsize\textbf{2.55\%}\\[-1pt]\scriptsize(0.04\%)}
     & \makecell[c]{\scriptsize 11,325.25\\[-1pt]\scriptsize(133.84)}
     & \makecell[c]{\scriptsize 8,078.30\\[-1pt]\scriptsize(63.22)}
     & \makecell[c]{\scriptsize\textbf{51.35\%}\\[-1pt]\scriptsize(0.40\%)}
     & \makecell[c]{\scriptsize 12,662.10\\[-1pt]\scriptsize(191.59)}
     & \makecell[c]{\scriptsize 8,707.70\\[-1pt]\scriptsize(96.45)}
     & \makecell[c]{\scriptsize\textbf{55.35\%}\\[-1pt]\scriptsize(0.61\%)}
     & \makecell[c]{\scriptsize 13,152.55\\[-1pt]\scriptsize(173.79)}
     & \makecell[c]{\scriptsize 8,575.30\\[-1pt]\scriptsize(84.69)}
     & \makecell[c]{\scriptsize\textbf{54.51\%}\\[-1pt]\scriptsize(0.54\%)} \\[6pt]

    20
     & \makecell[c]{\scriptsize 825.75\\[-1pt]\scriptsize(6.66)}
     & \makecell[c]{\scriptsize 802.80\\[-1pt]\scriptsize(6.30)}
     & \makecell[c]{\scriptsize\textbf{5.10\%}\\[-1pt]\scriptsize(0.04\%)}
     & \makecell[c]{\scriptsize 23,033.90\\[-1pt]\scriptsize(135.10)}
     & \makecell[c]{\scriptsize 12,088.40\\[-1pt]\scriptsize(40.15)}
     & \makecell[c]{\scriptsize\textbf{76.84\%}\\[-1pt]\scriptsize(0.26\%)}
     & \makecell[c]{\scriptsize 25,695.35\\[-1pt]\scriptsize(255.34)}
     & \makecell[c]{\scriptsize 12,651.50\\[-1pt]\scriptsize(64.39)}
     & \makecell[c]{\scriptsize\textbf{80.42\%}\\[-1pt]\scriptsize(0.41\%)}
     & \makecell[c]{\scriptsize 26,723.80\\[-1pt]\scriptsize(256.41)}
     & \makecell[c]{\scriptsize 12,384.20\\[-1pt]\scriptsize(62.48)}
     & \makecell[c]{\scriptsize\textbf{78.72\%}\\[-1pt]\scriptsize(0.40\%)} \\[6pt]

    40
     & \makecell[c]{\scriptsize 1,654.40\\[-1pt]\scriptsize(4.99)}
     & \makecell[c]{\scriptsize 1,562.50\\[-1pt]\scriptsize(8.64)}
     & \makecell[c]{\scriptsize\textbf{9.93\%}\\[-1pt]\scriptsize(0.05\%)}
     & \makecell[c]{\scriptsize 46,084.45\\[-1pt]\scriptsize(137.64)}
     & \makecell[c]{\scriptsize 14,894.45\\[-1pt]\scriptsize(24.85)}
     & \makecell[c]{\scriptsize\textbf{94.68\%}\\[-1pt]\scriptsize(0.16\%)}
     & \makecell[c]{\scriptsize 51,393.90\\[-1pt]\scriptsize(456.89)}
     & \makecell[c]{\scriptsize 15,140.35\\[-1pt]\scriptsize(23.80)}
     & \makecell[c]{\scriptsize\textbf{96.24\%}\\[-1pt]\scriptsize(0.15\%)}
     & \makecell[c]{\scriptsize 53,574.75\\[-1pt]\scriptsize(439.99)}
     & \makecell[c]{\scriptsize 14,920.10\\[-1pt]\scriptsize(32.99)}
     & \makecell[c]{\scriptsize\textbf{94.84\%}\\[-1pt]\scriptsize(0.21\%)} \\[6pt]

    80
     & \makecell[c]{\scriptsize 3,313.30\\[-1pt]\scriptsize(10.20)}
     & \makecell[c]{\scriptsize 2,977.45\\[-1pt]\scriptsize(18.71)}
     & \makecell[c]{\scriptsize\textbf{18.93\%}\\[-1pt]\scriptsize(0.12\%)}
     & \makecell[c]{\scriptsize 92,190.10\\[-1pt]\scriptsize(406.69)}
     & \makecell[c]{\scriptsize 15,686.65\\[-1pt]\scriptsize(7.22)}
     & \makecell[c]{\scriptsize\textbf{99.71\%}\\[-1pt]\scriptsize(0.05\%)}
     & \makecell[c]{\scriptsize 103,307.30\\[-1pt]\scriptsize(884.87)}
     & \makecell[c]{\scriptsize 15,709.55\\[-1pt]\scriptsize(6.72)}
     & \makecell[c]{\scriptsize\textbf{99.86\%}\\[-1pt]\scriptsize(0.04\%)}
     & \makecell[c]{\scriptsize 107,535.45\\[-1pt]\scriptsize(947.81)}
     & \makecell[c]{\scriptsize 15,666.55\\[-1pt]\scriptsize(5.80)}
     & \makecell[c]{\scriptsize\textbf{99.58\%}\\[-1pt]\scriptsize(0.04\%)} \\
    \bottomrule
\end{tabular}
}
\end{table}

\subsection{Directed Graphs}\label{ssec:numerical_directed}

We first evaluate the performance of the DGE algorithm in terms of enumeration runtime, and then evaluate the performance of the UDGS and EDGS algorithms in terms of sampling speed, uniformity, and graph space coverage ratio.

{\bf Enumeration runtime}. We evaluate the enumeration runtime of the DGE algorithm by comparing it with the brute-force algorithm. Specifically, we applied both algorithms to six degree sequences and, for each sequence, record the enumeration runtime. We report the results in Table~\ref{tab:dg_enumeration_runtime}, which includes the number of distinct directed graphs ($N_D$), as well as the enumeration runtime of the brute-force algorithm ($T_{BF}$) and the DGE algorithm ($T_{DGE}$). The results demonstrate that DGE is several orders of magnitude faster than the brute-force algorithm. For example, DGE enumerates 4,628 directed graphs in roughly 0.2 seconds, and 48,700 directed graphs in 1.8 seconds. In contrast, the brute-force algorithm requires approximately 380 seconds and $10^4$ seconds, respectively.

\begin{table}[htpb!]
    \scriptsize
    \centering
    \caption{\baselineskip10pt Runtime of the deterministic algorithms for directed graphs.
    }
    \label{tab:dg_enumeration_runtime}
		\begin{tabular}{lccc}
        \toprule
        \multicolumn{1}{l}{Degree Sequence}
        & \multirow{2}{*}{$N_D$}
            & \multicolumn{1}{c}{Brute-Force Algorithm} 
            & \multicolumn{1}{c}{DGE Algorithm} \\
        \cmidrule(lr){3-3} \cmidrule(lr){4-4}
        $\mathbf{a} = \mathbf{b}$ 
            &  
            & $T_{BF}$ (sec)
            & $T_{DGE}$ (sec) \\
        \midrule
$(2, 2, 1, 1, 1)$ & $42$ & $0.1$ & $0.002$ \\
$(2, 2, 2, 1, 1, 1)$ & $686$ & $7.6$ & $0.03$ \\
$(3, 2, 2, 1, 1, 1, 1)$ & $4{,}628$ & $380$ & $0.2$ \\
$(3, 3, 2, 1, 1, 1, 1, 1)$ & $48{,}700$ & $\approx 10^{4}$ & $1.8$ \\
$(3, 2, 2, 2, 1, 1, 1, 1)$ & $191{,}334$ & $\approx 10^{4}$ & $6.8$ \\
$(3, 3, 2, 2, 2, 2, 1, 1)$ & $1{,}977{,}666$ & $\gg 10^{4}$ & $67$ \\
        \bottomrule
        \end{tabular}
    
    \vspace{0.2em}
\end{table}

{\bf Sampling speed.} We evaluate the sampling speed of the UDGS and EDGS algorithms by comparing them with the sequential algorithm. Specifically, we consider the family of degree sequences $\mathbf{a} = (n-2,\,1,\,1,\,2,\,3,\,\dots,\,n-3,\,n-2)$ and $\mathbf{b} = (n-1,\,n-3,\,n-3,\,n-4,\,n-5,\,\dots,\,2,\,1,\,1)$, where $n$ denotes the number of nodes and $n \geq 4$. We evaluate UDGS and EDGS for $n = 4$ to 200, and the sequential algorithm for $n = 4$ to 78, since its average runtime exceeds 100 seconds when $n \geq 78$. For each value of $n$, each experiment is repeated 20 times across all algorithms. Figure~\ref{fig:dg_sampling_speed} reports the average runtime of the sequential, UDGS, and EDGS algorithms as the number of nodes increases. The results show that EDGS and UDGS run faster and scale substantially better than the sequential algorithm for this family of degree sequences.

\begin{figure}[htpb!]
    \centering
    \includegraphics[width=0.55\linewidth]{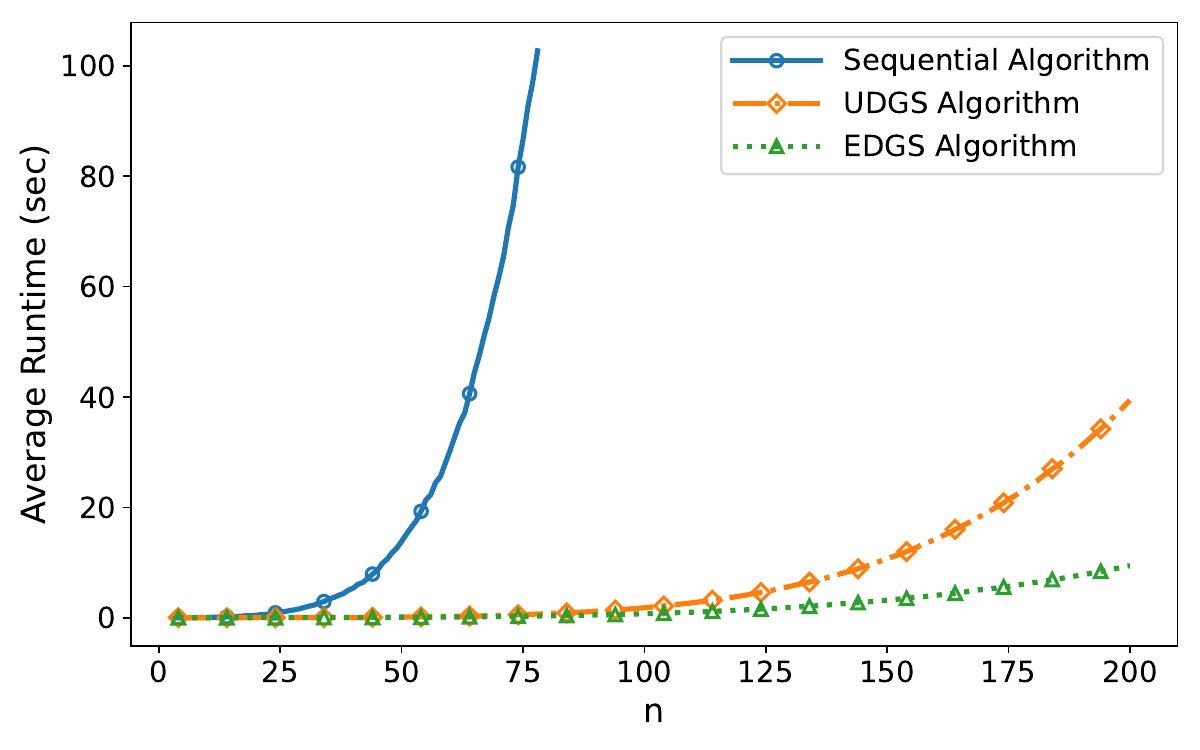}
    \caption{\baselineskip10pt Runtime as a function of the number of nodes for directed graphs with degree sequences $\mathbf{a} = (n-2,\,1,\,1,\,2,\,3,\,\dots,\,n-3,\,n-2)$ and $\mathbf{b} = (n-1,\,n-3,\,n-3,\,n-4,\,n-5,\,\dots,\,2,\,1,\,1)$ averaged over 20 replications.}
    \label{fig:dg_sampling_speed}
\end{figure}

{\bf Sampling uniformity}. We evaluate the sampling uniformity of the three algorithms using the same two metrics as in the bipartite graph experiment, namely CV and KL divergence. The comparison is conducted on the four degree sequences used in the enumeration-runtime experiments, and for each sequence, all algorithms generate 5,000 independent samples. Table~\ref{tab:dg_sampling_uniformity} reports the resulting CV and KL values. The results show that, in terms of sampling uniformity, UDGS achieves the highest level of uniformity, followed by the sequential algorithm, while EDGS exhibits the lowest uniformity, although it remains reasonably good.

\begin{table}[htpb!]
    \scriptsize
    \centering
    \caption{\baselineskip10pt Sampling uniformity of the algorithms for directed graphs.}
    \label{tab:dg_sampling_uniformity}
\begin{tabular}{l|cc|cc|cc}
    \toprule
    \multicolumn{1}{l|}{Degree Sequence} & \multicolumn{2}{c|}{Sequential Algorithm} & \multicolumn{2}{c|}{UDGS Algorithm} & \multicolumn{2}{c}{EDGS Algorithm} \\
    \cmidrule(lr){2-3} \cmidrule(lr){4-5} \cmidrule(lr){6-7} 
    $\mathbf{a} = \mathbf{b}$ & CV & KL & CV & KL & CV & KL \\
    \midrule
    $(2, 2, 1, 1, 1)$ 
      & 1.25E-01 & 7.45E-03
      & 8.85E-02 & 3.88E-03
      & 4.00E-01 & 8.34E-02 \\
    $(2, 2, 2, 1, 1, 1)$
      & 4.04E-01 & 8.25E-02
      & 3.57E-01 & 6.53E-02
      & 7.34E-01 & 2.33E-01 \\
    $(3, 2, 2, 1, 1, 1, 1)$
      & 5.29E-01 & 1.21E-01
      & 5.22E-01 & 1.19E-01
      & 6.25E-01 & 1.64E-01 \\
    $(3, 3, 2, 1, 1, 1, 1, 1)$
      & 2.22E-01 & 1.95E-02
      & 2.13E-01 & 1.78E-02
      & 2.59E-01 & 2.58E-02 \\
    \bottomrule
\end{tabular}
\end{table}

{\bf Graph space coverage ratio}. We evaluate the graph space coverage ratio of the UDGS and EDGS algorithms, comparing their performance with that of the sequential algorithm. We use one of the degree sequences from the enumeration-runtime experiment, namely $\mathbf{a} = \mathbf{b} = (3,\,3,\,2,\,1,\,1,\,1,\,1,\,1)$, which corresponds to 48,700 distinct directed graphs, as reported in the fourth row of Table~\ref{tab:dg_enumeration_runtime}. For each sampling algorithm, we vary the runtime budget from 10 to 100 seconds (in increments of 10 seconds) and record: (i) the total number of generated samples ($N_T$), (ii) the number of distinct directed graphs within $N_T$, i.e., $N_D$, and (iii) the coverage ratio $R_D = N_D / 48{,}700$, i.e., the fraction of distinct feasible directed graphs generated over the entire graph space. For each runtime setting, all experiments are repeated 20 times to compute the associated standard deviation.

Figure~\ref{fig:dg_sampling_coverage_ratio} reports the coverage ratio $R_D$ as a function of the runtime budget from 10 to 100 seconds for the three algorithms. Table~\ref{tab:dg_distinct_graph_coverage_ratio} summarizes the detailed numerical results at four representative runtime budgets (10, 20, 40, and 80 seconds), including $N_T$, $N_D$, and $R_D$. The results in Figure~\ref{fig:dg_sampling_coverage_ratio} and Table~\ref{tab:dg_distinct_graph_coverage_ratio} show that UDGS and EDGS consistently outperform the sequential algorithm in terms of coverage ratio. For example, within 10 seconds, UDGS generates more than $37\%$ of all distinct directed graphs and EDGS more than $29\%$, whereas the sequential algorithm produces less than $1\%$. Moreover, within 80 seconds, UDGS generates more than $97\%$ of all distinct directed graphs and EDGS more than $86\%$, whereas the sequential algorithm produces less than $6\%$.

\begin{figure}[htpb!]
    \centering
    \includegraphics[width=0.55\linewidth]{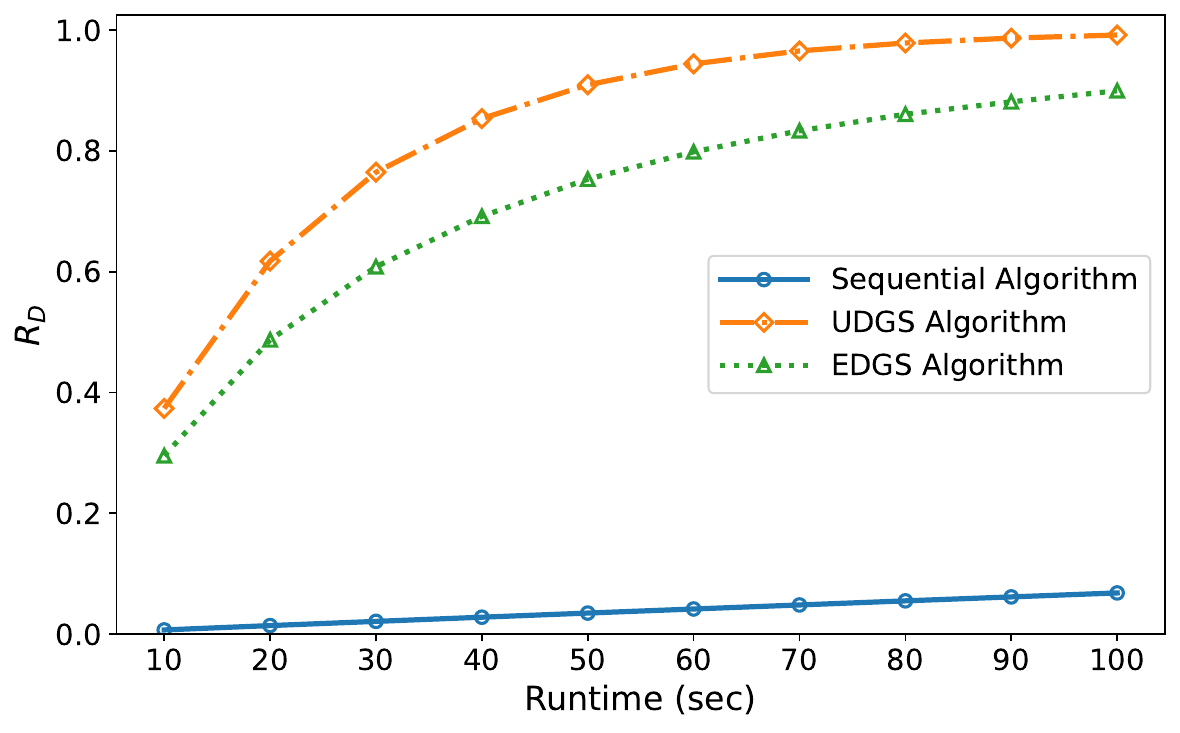}
    \caption{\baselineskip10pt Coverage ratio versus runtime for directed graphs with degree sequences $\mathbf{a} = \mathbf{b} = (3,\,3,\,2,\,1,\,1,\,1,\,1,\,1)$.}
    \label{fig:dg_sampling_coverage_ratio}
\end{figure}

\begin{table}[htpb!]
    \scriptsize
    \centering
    \caption{\baselineskip10pt Coverage ratio for directed graphs with degree sequences $\mathbf{a} = \mathbf{b} = (3,\,3,\,2,\,1,\,1,\,1,\,1,\,1)$.}
    \label{tab:dg_distinct_graph_coverage_ratio}
\begin{tabular}{l|ccc|ccc|ccc}
    \toprule
    \multirow{2}{*}{Time (sec)} 
      & \multicolumn{3}{c|}{Sequential Algorithm} 
      & \multicolumn{3}{c|}{UDGS Algorithm} 
      & \multicolumn{3}{c}{EDGS Algorithm}  \\
    \cmidrule(lr){2-4} \cmidrule(lr){5-7} \cmidrule(lr){8-10}
     & \makebox[3em][c]{$T_D$} & \makebox[3em][c]{$N_D$} & \makebox[3em][c]{$R_D$}
     & \makebox[3em][c]{$T_D$} & \makebox[3em][c]{$N_D$} & \makebox[3em][c]{$R_D$}
     & \makebox[3em][c]{$T_D$} & \makebox[3em][c]{$N_D$} & \makebox[3em][c]{$R_D$} \\
    \midrule

    10 
     & \makecell[c]{\scriptsize 335.15\\[-1pt]\scriptsize(3.86)}
     & \makecell[c]{\scriptsize 333.55\\[-1pt]\scriptsize(3.56)}
     & \makecell[c]{\scriptsize\textbf{0.68\%}\\[-1pt]\scriptsize(0.01\%)}
     & \makecell[c]{\scriptsize 22,767.40\\[-1pt]\scriptsize(162.00)}
     & \makecell[c]{\scriptsize 18,192.80\\[-1pt]\scriptsize(114.18)}
     & \makecell[c]{\scriptsize\textbf{37.36\%}\\[-1pt]\scriptsize(0.23\%)}
     & \makecell[c]{\scriptsize 18,753.40\\[-1pt]\scriptsize(131.41)}
     & \makecell[c]{\scriptsize 14,384.00\\[-1pt]\scriptsize(81.65)}
     & \makecell[c]{\scriptsize\textbf{29.54\%}\\[-1pt]\scriptsize(0.17\%)} \\[6pt]

    20
     & \makecell[c]{\scriptsize 690.15\\[-1pt]\scriptsize(7.62)}
     & \makecell[c]{\scriptsize 685.40\\[-1pt]\scriptsize(6.82)}
     & \makecell[c]{\scriptsize\textbf{1.41\%}\\[-1pt]\scriptsize(0.01\%)}
     & \makecell[c]{\scriptsize 46,783.25\\[-1pt]\scriptsize(281.79)}
     & \makecell[c]{\scriptsize 30,075.55\\[-1pt]\scriptsize(114.31)}
     & \makecell[c]{\scriptsize\textbf{61.76\%}\\[-1pt]\scriptsize(0.23\%)}
     & \makecell[c]{\scriptsize 38,745.10\\[-1pt]\scriptsize(182.53)}
     & \makecell[c]{\scriptsize 23,721.45\\[-1pt]\scriptsize(77.69)}
     & \makecell[c]{\scriptsize\textbf{48.71\%}\\[-1pt]\scriptsize(0.16\%)} \\[6pt]

    40
     & \makecell[c]{\scriptsize 1,380.00\\[-1pt]\scriptsize(8.34)}
     & \makecell[c]{\scriptsize 1,360.90\\[-1pt]\scriptsize(9.73)}
     & \makecell[c]{\scriptsize\textbf{2.79\%}\\[-1pt]\scriptsize(0.02\%)}
     & \makecell[c]{\scriptsize 93,568.65\\[-1pt]\scriptsize(778.12)}
     & \makecell[c]{\scriptsize 41,567.50\\[-1pt]\scriptsize(152.09)}
     & \makecell[c]{\scriptsize\textbf{85.35\%}\\[-1pt]\scriptsize(0.31\%)}
     & \makecell[c]{\scriptsize 77,256.95\\[-1pt]\scriptsize(651.99)}
     & \makecell[c]{\scriptsize 33,687.10\\[-1pt]\scriptsize(136.34)}
     & \makecell[c]{\scriptsize\textbf{69.17\%}\\[-1pt]\scriptsize(0.28\%)} \\[6pt]

    80
     & \makecell[c]{\scriptsize 2,757.40\\[-1pt]\scriptsize(15.82)}
     & \makecell[c]{\scriptsize 2,678.45\\[-1pt]\scriptsize(13.17)}
     & \makecell[c]{\scriptsize\textbf{5.50\%}\\[-1pt]\scriptsize(0.03\%)}
     & \makecell[c]{\scriptsize 187,559.90\\[-1pt]\scriptsize(1,082.68)}
     & \makecell[c]{\scriptsize 47,660.65\\[-1pt]\scriptsize(38.54)}
     & \makecell[c]{\scriptsize\textbf{97.87\%}\\[-1pt]\scriptsize(0.08\%)}
     & \makecell[c]{\scriptsize 155,001.65\\[-1pt]\scriptsize(1,022.14)}
     & \makecell[c]{\scriptsize 41,915.80\\[-1pt]\scriptsize(81.17)}
     & \makecell[c]{\scriptsize\textbf{86.07\%}\\[-1pt]\scriptsize(0.17\%)} \\

    \bottomrule
\end{tabular}
\end{table}

\subsection{Undirected Graphs}\label{ssec:numerical_undirected}

We first evaluate the performance of the UGE algorithm in terms of enumeration runtime, and then evaluate the performance of the UUGS and EUGS algorithms in terms of sampling speed, uniformity, and graph space coverage ratio.

{\bf Enumeration runtime}. We evaluate the enumeration runtime of the UGE algorithm by comparting it with the brute-force algorithm. Specifically, we applied both algorithms to six degree sequences and, for each sequence, record the enumeration runtime. We report the results in Table~\ref{tab:ug_enumeration_runtime}, which includes the number of distinct undirected graphs ($N_U$), as well as the enumeration runtime of the brute-force algorithm ($T_{BF}$) and the UGE algorithm ($T_{UGE}$). The results demonstrate that UGE is several orders of magnitude faster than the brute-force algorithm. For example, UGE enumerates 130 undirected graphs in roughly 0.01 seconds, and 1,074 undirected graphs in 0.1 seconds. In contrast, the brute-force algorithm requires approximately $10^3$ seconds and $10^6$ seconds, respectively.

\begin{table}[htpb!]
    \scriptsize
    \centering
    \caption{\baselineskip10pt Runtime of the deterministic algorithms for undirected graphs.
    }
    \label{tab:ug_enumeration_runtime}
		\begin{tabular}{lccc}
        \toprule
            \multicolumn{1}{l}{Degree Sequence} 
            & \multirow{2}{*}{$N_U$}
            & \multicolumn{1}{c}{Brute-Force Algorithm} 
            & \multicolumn{1}{c}{UGE Algorithm} \\
        \cmidrule(lr){3-3} \cmidrule(lr){4-4}
        $\mathbf{a} = \mathbf{b}$ 
            & 
            & $T_{BF}$ (sec)
            & $T_{UGE}$ (sec) \\
        \midrule
$(2,2,2,2,1,1)$ & $31$ & $19$ & $0.003$ \\
$(3,3,2,2,2,1,1)$ & $130$ & $\approx 10^{3}$ & $0.01$ \\
$(3,3,2,2,2,2,1,1)$ & $1{,}074$ & $\approx 10^{6}$ & $0.1$ \\
$(3,3,3,3,2,2,2,1,1)$ & $13{,}801$ & $\gg 10^{6}$ & $1.8$ \\
$(4,3,3,3,3,2,2,2,1,1)$ & $187{,}426$ & $\gg 10^{6}$ & $27$ \\
$(4,3,3,3,3,3,2,2,1,1,1)$ & $1{,}542{,}445$ & $\gg 10^{6}$ & $263$ \\
        \bottomrule
        \end{tabular}
    
\end{table}

{\bf Sampling speed.} We evaluate the sampling speed of the UUGS and EUGS algorithms by comparing them with the sequential algorithm. Specifically, we consider the following family of degree sequences, whose number of feasible undirected graphs increases with the number of nodes:
\begin{equation}\label{eq:undirected-degree-sequences}
\mathbf{a} = \mathbf{b} =
\bigl(
    c+a,\, c+b,\, 
    \underbrace{c, \dots, c}_{c-1\ \text{times}},\, 
    \underbrace{1, \dots, 1}_{l\ \text{times}}
\bigr),
\end{equation}
where $c = n-l-1$, $a = \lceil l/2 \rceil$, $b = l - a$, $l = \lfloor \log_2 n \rfloor$, $n$ denotes the number of nodes, and $n \geq 8$. We evaluate the sequential, UUGS, and EUGS algorithms for $n = 8$ to 200. For each value of $n$, each experiment is repeated 20 times across all algorithms. Figure~\ref{fig:ug_sampling_speed} reports the average runtime of the three algorithms as the number of nodes increases. The results show that EUGS and UUGS run faster and scale substantially better than the sequential algorithm for this family of degree sequences.

\begin{figure}[htpb!]
    \centering
    \includegraphics[width=0.6\linewidth]{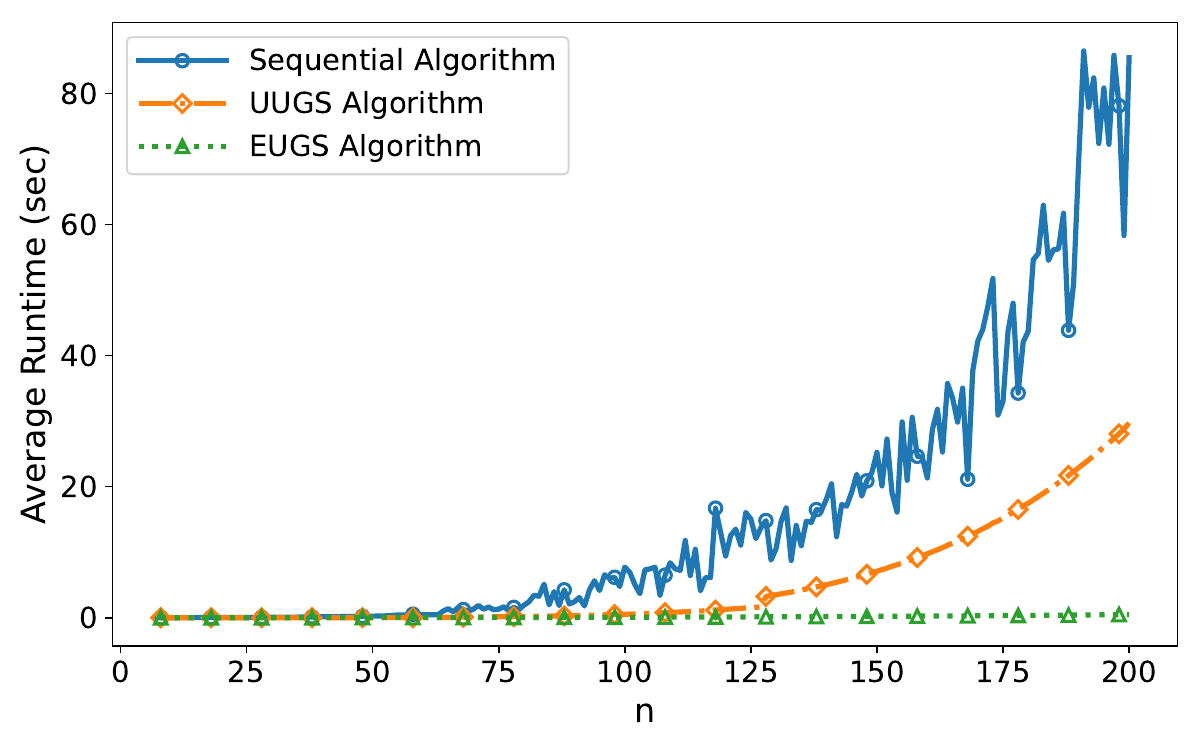}
    \caption{\baselineskip10pt Runtime as a function of the number of nodes for undirected graphs with degree sequences given in \eqref{eq:undirected-degree-sequences}, averaged over 20 replications.}
    \label{fig:ug_sampling_speed}
\end{figure}

{\bf Sampling uniformity}. We evaluate the sampling uniformity of the three algorithms using the same two metrics as in the bipartite graph experiments, namely CV and KL divergence. The comparison is conducted on the four degree sequences used in the enumeration-runtime experiment, and for each sequence, all algorithms generate 5,000 independent samples. Table~\ref{tab:ug_sampling_uniformity} reports the resulting CV and KL values. The results show that, in terms of sampling uniformity, UUGS achieves the highest level of uniformity, followed by the sequential algorithm, while EUGS exhibits the lowest uniformity, although it remains reasonably good.

\begin{table}[htpb!]
    \scriptsize
    \centering
    \caption{\baselineskip10pt Sampling uniformity of algorithms for undirected graphs.}
    \label{tab:ug_sampling_uniformity}
\begin{tabular}{l|cc|cc|cc}
    \toprule
    \multicolumn{1}{l|}{Degree Sequence} 
        & \multicolumn{2}{c|}{Sequential Algorithm} 
        & \multicolumn{2}{c|}{UUGS Algorithm} 
        & \multicolumn{2}{c}{EUGS Algorithm} \\
    \cmidrule(lr){2-3} \cmidrule(lr){4-5} \cmidrule(lr){6-7}
    $\mathbf{a} = \mathbf{b}$ & CV & KL & CV & KL & CV & KL \\
    \midrule
    $(2, 2, 2, 2, 1, 1)$
      & 1.22E-01 & 7.46E-03
      & 8.16E-02 & 3.38E-03
      & 3.70E-01 & 4.68E-02 \\
    $(3, 3, 2, 2, 2, 1, 1)$
      & 1.67E-01 & 1.38E-02
      & 1.58E-01 & 1.24E-02
      & 4.06E-01 & 6.25E-02 \\
    $(3, 3, 2, 2, 2, 2, 1, 1)$
      & 4.55E-01 & 1.07E-01
      & 4.54E-01 & 1.05E-01
      & 6.33E-01 & 1.75E-01 \\
    $(3, 3, 3, 3, 2, 2, 2, 1, 1)$
      & 3.80E-01 & 5.88E-02
      & 3.70E-01 & 5.57E-02
      & 4.77E-01 & 8.73E-02 \\
    \bottomrule
\end{tabular}
\end{table}

{\bf Graph space coverage ratio}. We evaluate the graph space coverage ratio of the UUGS and EUGS algorithms, comparing their performance with that of the sequential algorithm. We use one of the degree sequence from the enumeration-runtime experiment, namely $\mathbf{a} = \mathbf{b} = (3,\,3,\,3,\,3,\,2,\,2,\,2,\,1,\,1)$, which corresponds to 13,801 distinct undirected graphs, as reported in the fourth row of Table~\ref{tab:ug_enumeration_runtime}. For each sampling algorithm, we vary the runtime budget from 10 to 100 seconds (in increments of 10 seconds) and record: (i) the total number of generated samples ($N_T$), (ii) the number of distinct undirected graphs within $N_T$, i.e., $N_U$, and (iii) the coverage ratio $R_U = N_U / 13{,}801$, i.e., the fraction of distinct feasible undirected graphs generated over the entire graph space. For each runtime setting, all experiments are repeated 20 times to compute the associated standard deviation.

Figure~\ref{fig:ug_sampling_coverage_ratio} reports the coverage ratio $R_U$ as a function of the runtime budget from 10 to 100 seconds for the three algorithms. Table~\ref{tab:ug_distinct_graph_coverage_ratio} summarizes the detailed numerical results at three representative runtime budgets (10, 20, and 40 seconds), including $N_T$, $N_U$, and $R_U$. The results in Figure~\ref{fig:ug_sampling_coverage_ratio} and Table~\ref{tab:ug_distinct_graph_coverage_ratio} show that UUGS and EUGS consistently outperform the sequential algorithm in terms of coverage ratio. For example, within 10 seconds, UUGS generates more than $78\%$ of all distinct undirected graphs and EUGS more than $64\%$, whereas the sequential algorithm produces less than $4\%$. Moreover, within 40 seconds, UUGS generates more than $99\%$ of all distinct undirected graphs and EUGS more than $96\%$, whereas the sequential algorithm produces less than $13\%$.

\begin{figure}[htpb!]
    \centering
    \includegraphics[width=0.55\linewidth]{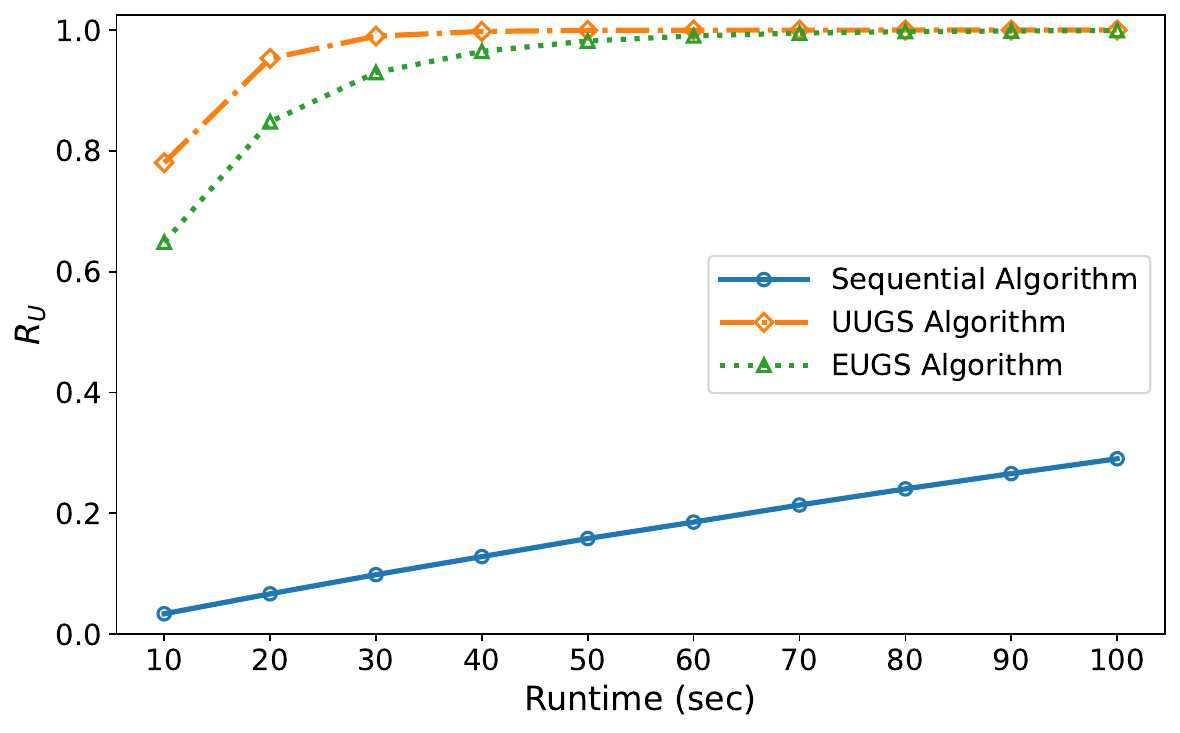}
    \caption{\baselineskip10pt Coverage ratio versus runtime for directed graphs with degree sequences $\mathbf{a} = \mathbf{b} = (3,\,3,\,3,\,3,\,2,\,2,\,2,\,1,\,1)$.}
    \label{fig:ug_sampling_coverage_ratio}
\end{figure}

\begin{table}[htpb!]
    \scriptsize
    \centering
    \caption{\baselineskip10pt Coverage ratio for undirected graphs with degree sequences $\mathbf{a} = \mathbf{b} = (3,\,3,\,3,\,3,\,2,\,2,\,2,\,1,\,1)$.}
    \label{tab:ug_distinct_graph_coverage_ratio}
\begin{tabular}{l|ccc|ccc|ccc}
    \toprule
    \multirow{2}{*}{Time (sec)} 
      & \multicolumn{3}{c|}{Sequential Algorithm} 
      & \multicolumn{3}{c|}{UUGS Algorithm} 
      & \multicolumn{3}{c}{EUGS Algorithm}  \\
    \cmidrule(lr){2-4} \cmidrule(lr){5-7} \cmidrule(lr){8-10}
     & \makebox[3em][c]{$T_U$} & \makebox[3em][c]{$N_U$} & \makebox[3em][c]{$R_U$}
     & \makebox[3em][c]{$T_U$} & \makebox[3em][c]{$N_U$} & \makebox[3em][c]{$R_U$}
     & \makebox[3em][c]{$T_U$} & \makebox[3em][c]{$N_U$} & \makebox[3em][c]{$R_U$} \\
    \midrule
    10
     & \makecell[c]{\scriptsize 473.15\\[-1pt]\scriptsize(6.08)}
     & \makecell[c]{\scriptsize 464.35\\[-1pt]\scriptsize(7.96)}
     & \makecell[c]{\scriptsize\textbf{3.36\%}\\[-1pt]\scriptsize(0.06\%)}
     & \makecell[c]{\scriptsize 20,969.65\\[-1pt]\scriptsize(125.21)}
     & \makecell[c]{\scriptsize 10,770.30\\[-1pt]\scriptsize(45.27)}
     & \makecell[c]{\scriptsize\textbf{78.04\%}\\[-1pt]\scriptsize(0.33\%)}
     & \makecell[c]{\scriptsize 17,820.15\\[-1pt]\scriptsize(78.74)}
     & \makecell[c]{\scriptsize 8,955.40\\[-1pt]\scriptsize(40.52)}
     & \makecell[c]{\scriptsize\textbf{64.89\%}\\[-1pt]\scriptsize(0.29\%)} \\[6pt]

    20
     & \makecell[c]{\scriptsize 952.55\\[-1pt]\scriptsize(7.00)}
     & \makecell[c]{\scriptsize 919.50\\[-1pt]\scriptsize(6.76)}
     & \makecell[c]{\scriptsize\textbf{6.66\%}\\[-1pt]\scriptsize(0.05\%)}
     & \makecell[c]{\scriptsize 42,322.85\\[-1pt]\scriptsize(349.68)}
     & \makecell[c]{\scriptsize 13,154.65\\[-1pt]\scriptsize(36.61)}
     & \makecell[c]{\scriptsize\textbf{95.32\%}\\[-1pt]\scriptsize(0.27\%)}
     & \makecell[c]{\scriptsize 35,707.10\\[-1pt]\scriptsize(206.52)}
     & \makecell[c]{\scriptsize 11,706.40\\[-1pt]\scriptsize(40.39)}
     & \makecell[c]{\scriptsize\textbf{84.82\%}\\[-1pt]\scriptsize(0.29\%)} \\[6pt]

    40
     & \makecell[c]{\scriptsize 1,898.50\\[-1pt]\scriptsize(11.82)}
     & \makecell[c]{\scriptsize 1,770.30\\[-1pt]\scriptsize(13.14)}
     & \makecell[c]{\scriptsize\textbf{12.93\%}\\[-1pt]\scriptsize(0.10\%)}
     & \makecell[c]{\scriptsize 84,394.40\\[-1pt]\scriptsize(610.73)}
     & \makecell[c]{\scriptsize 13,770.60\\[-1pt]\scriptsize(7.56)}
     & \makecell[c]{\scriptsize\textbf{99.78\%}\\[-1pt]\scriptsize(0.05\%)}
     & \makecell[c]{\scriptsize 71,369.40\\[-1pt]\scriptsize(645.63)}
     & \makecell[c]{\scriptsize 13,315.15\\[-1pt]\scriptsize(24.98)}
     & \makecell[c]{\scriptsize\textbf{96.48\%}\\[-1pt]\scriptsize(0.18\%)} \\
    %
        \bottomrule
    \end{tabular}
\end{table}

\section{Concluding Remarks}\label{sec:conclsion}

In this paper, we developed a unified sequential method for generating graphs with prescribed degree sequences for bipartite, directed, and undirected networks. Within this method, we established a necessary and sufficient interval condition for bipartite graph generation that characterizes the admissible number of connections at each
step, thereby guaranteeing global feasibility. Building on this method, we designed enumeration and sampling algorithms for bipartite graphs: the former exhaustively generates all feasible configurations, whereas the latter employs uniform sampling or efficient, albeit slightly biased, sampling. By incorporating additional self-loop and single-degree connection constraints, as well as feasibility verification and symmetric connection steps, the bipartite graph algorithms are extended to the directed and undirected cases while preserving the same algorithmic principles. Numerical experiments demonstrate the effectiveness of the proposed algorithms in terms of enumeration runtime, as well as sampling speed, uniformity, and graph space coverage ratio.

\section*{Acknowledgments}
A preliminary version of this work was published in the 2025 {\em Proceedings of the Winter Simulation Conference} \citep{Sun2025}.

\bibliographystyle{informs2014}
\bibliography{GGM}

\newpage
\ECSwitch
\ECHead{Electronic Companion to ``Graph Simulation Methods under Partial Information"}
\setcounter{equation}{0}
\setcounter{lemma}{0}
\setcounter{algorithm}{0}
\setcounter{subsection}{0}
\setcounter{assumption}{0}
\setcounter{theorem}{1}
\setcounter{proposition}{0}
\renewcommand{\theequation}{A.\arabic{equation}}
\renewcommand{\thelemma}{A.\arabic{lemma}}
\renewcommand{\thesubsection}{\thesection.\arabic{subsection}}
\renewcommand{\thealgorithm}{A.\arabic{algorithm}}
\renewcommand{\theassumption}{A.\arabic{assumption}}
\renewcommand{\thetheorem}{A.\arabic{theorem}}
\renewcommand{\theproposition}{A.\arabic{proposition}}


\section{Proof of Lemma \ref{lemma:necessary and sufficient condition}}\label{appx:L1-proof}
Before proving Lemma \ref{lemma:necessary and sufficient condition}, we need the following lemma.
\begin{lemma}[Gale--Ryser Theorem \citep{gale1957theorem}]\label{lemma:gale-ryser}
    For the prescribed degree sequences $\mathbf{a}$ and $\mathbf{b}$, a bipartite graph exists if and only if $\sum_{i=1}^{m} \left [ \mathbf{a} \right ]_i = \sum_{j=1}^{n} \left [ \mathbf{b} \right ]_j$, and the following condition holds for all $r\in \left\{1, 2, \dots, n\right\}$:
    \begin{equation*}
        \sum_{i=1}^{m} \min \left\{\left [ \mathbf{a} \right ]_i,r\right\} \geq \sum_{j=1}^{r} \left [ \mathbf{b} \right ]_j.
    \end{equation*}
\end{lemma}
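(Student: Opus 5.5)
The statement to prove is the Gale--Ryser theorem as quoted (Lemma~\ref{lemma:gale-ryser}). Once it is in hand, Lemma~\ref{lemma:necessary and sufficient condition} follows from the observation that $\sum_{j=1}^{r}[\mathbf{z}]_j=\sum_{i=1}^m \min\{[\mathbf{a}]_i,r\}$. This holds because row $i$ of $\mathscr{Z}(\mathbf{a},|\mathbf{b}|)$ has ones exactly in columns $1,\dots,[\mathbf{a}]_i$, so its first $r$ columns contain $\min\{[\mathbf{a}]_i,r\}$ ones. Here I assume $r\le n=|\mathbf{b}|$, and if $[\mathbf{a}]_i>n$ then no bipartite graph exists anyway.

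\emph{Necessity.} Suppose a bipartite graph exists. Counting edges gives the equality of sums. For a given $r$, the edges incident to $[\mathbf{b}]_1,\dots,[\mathbf{b}]_r$ number $\sum_{j\le r}[\mathbf{b}]_j$. Node $[\mathbf{a}]_i$ contributes at most $\min\{[\mathbf{a}]_i,r\}$ of these edges, since it has degree $[\mathbf{a}]_i$ and no multiple edges. Summing over $i$ gives the inequality.

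\emph{Sufficiency.} I would argue by induction on $n$, using a greedy (Havel--Hakimi/Ryser-type) step. Connect $[\mathbf{b}]_1$, which has the largest $\mathbf{b}$-degree, to the $[\mathbf{b}]_1$ nodes of $\mathbf{a}$ with the largest degrees. This requires $[\mathbf{b}]_1\le m$, which is the case $r=1$ of the inequality. I then need to show that the residual sequences $\mathbf{a}'$ (those chosen nodes decremented) and $\mathbf{b}'=([\mathbf{b}]_2,\dots,[\mathbf{b}]_n)$ again satisfy the sum equality and the inequalities. The sum equality is immediate.

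For the inequalities, fix $r\ge1$ and compare $\sum_i\min\{[\mathbf{a}']_i,r\}$ with $\sum_i\min\{[\mathbf{a}]_i,r+1\}-[\mathbf{b}]_1$. Each node decremented from degree $d$ loses $\min\{d,r+1\}-\min\{d-1,r\}=1$. Each untouched node loses $\min\{d,r+1\}-\min\{d,r\}=\mathbf{1}[d\ge r+1]$. Let $S$ be the set of untouched nodes with degree at least $r+1$. Then
\[
\sum_i\min\{[\mathbf{a}']_i,r\}=\sum_i\min\{[\mathbf{a}]_i,r+1\}-[\mathbf{b}]_1-|S|.
\]

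\emph{Main obstacle: handling $|S|$.} If $S=\emptyset$, the hypothesis at $r+1$ gives $\sum_{i}\min\{[\mathbf{a}']_i,r\}\ge\sum_{j=2}^{r+1}[\mathbf{b}]_j$, as required.

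If $S\neq\emptyset$, then, because the decremented nodes were those of largest degree, every decremented node also has degree at least $r+1$. Bound directly instead. The decremented nodes each retain at least $r$, so they contribute exactly $r[\mathbf{b}]_1$. The nodes in $S$ also contribute $r$ each. Hence the left side is at least $r[\mathbf{b}]_1\ge\sum_{j=2}^{r+1}[\mathbf{b}]_j$, using that $\mathbf{b}$ is non-increasing.

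Ties among equal degrees in $\mathbf{a}$ do not matter for this argument. Finally, re-sort $\mathbf{b}'$ and $\mathbf{a}'$ (removing zeros) and apply the induction hypothesis. The base case $n=0$ is trivial, or $n=1$, where the condition forces $[\mathbf{b}]_1=m$ with all $[\mathbf{a}]_i=1$.
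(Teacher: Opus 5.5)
Your proof is correct, and it takes a different route from the paper simply because the paper gives no proof of this lemma. It quotes the Gale--Ryser theorem from the literature and uses it as a black box. Its only proof in this area is of the reformulation in Lemma~\ref{lemma:necessary and sufficient condition}, via the same column-counting identity you note.

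Your argument is therefore a self-contained alternative. Necessity is the standard edge count. Sufficiency is Ryser's constructive greedy induction, and the pieces all check out:
\begin{itemize}
\item The identity $\sum_i\min\{[\mathbf{a}']_i,r\}=\sum_i\min\{[\mathbf{a}]_i,r+1\}-[\mathbf{b}]_1-|S|$ is right.
\item If $S=\emptyset$, the hypothesis at $r+1\le n$ gives the required inequality.
\item If $S\neq\emptyset$, every chosen node retains degree at least $r$, so the left side is at least $r[\mathbf{b}]_1\ge\sum_{j=2}^{r+1}[\mathbf{b}]_j$.
\item The only property of the choice you use is that every chosen node has degree at least that of every unchosen one, so ties are indeed harmless.
\item You rely, as you must, on the paper's standing assumption that $\mathbf{b}$ is non-increasing; the statement is false without it.
\end{itemize}

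Your route also buys something beyond the citation. The greedy step you analyze is exactly the paper's standard greedy algorithm (Algorithm~\ref{alg:greedy-algorithm}). So your sufficiency argument proves directly, from the inequalities alone, the $k=1$ case of Lemma~\ref{lemma:k-step_greedy_algorithm}. The paper obtains that case differently: it compares the greedy allocation with an assumed feasible allocation and then reapplies Gale--Ryser.

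One small remark: the caveat about $[\mathbf{a}]_i>n$ in your first paragraph is unnecessary. Each row of $\mathscr{Z}(\mathbf{a},|\mathbf{b}|)$ is truncated at $n$ columns, so its first $r\le n$ columns contain exactly $\min\{[\mathbf{a}]_i,r\}$ ones in all cases.
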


Next, we prove Lemma \ref{lemma:necessary and sufficient condition}.

\begin{proof}{Proof.}
By \eqref{eq:Z matrix} and \eqref{eq:z vector}, we can get that $\min \left\{\left [ \mathbf{a} \right ]_i,r\right\} = \sum_{j=1}^{r} \left[\mathscr{Z} \!\left(\mathbf{\mathbf{a}}, |\mathbf{b}|\right)\right]_{ij}$ and $\left[\mathbf{z}\right]_j=\sum_{i=1}^{m} \left[\mathscr{Z} \!\left(\mathbf{\mathbf{a}}, |\mathbf{b}|\right)\right]_{ij}$. Then, we have
\begin{equation*}
    \sum_{i=1}^{m} \min \left\{\left [ \mathbf{a} \right ]_i,r\right\} = \sum_{i=1}^{m} \sum_{j=1}^{r} \left[\mathscr{Z} \!\left(\mathbf{\mathbf{a}}, |\mathbf{b}|\right)\right]_{ij} = \sum_{j=1}^{r} \sum_{i=1}^{m} \left[\mathscr{Z} \!\left(\mathbf{\mathbf{a}}, |\mathbf{b}|\right)\right]_{ij} = \sum_{j=1}^{r} \left[\mathbf{z}\right]_j.
\end{equation*}
Hence, we conclude the lemma by Lemma~\ref{lemma:gale-ryser}.\Halmos
\end{proof}

\section{Proof of Theorem \ref{thm:interval}}\label{appx:T1-proof}

Before proving the theorem, we introduce two greedy algorithms that connect the first node $\left[\mathbf{b}\right]_1$ to nodes in $\mathbf{a}$ and yield the updated degree sequences $\mathbf{a}^{[2]}$ and $\mathbf{b}^{[2]}$. These greedy algorithms differ from the bipartite graph generation mechanism introduced in Section~\ref{sec:BGA} and will serve as auxiliary constructions in the proof.

\subsection{Greedy Algorithms}

We begin with a standard greedy algorithm and then construct its $k$-step variant. The standard greedy algorithm is summarized in Algorithm~\ref{alg:greedy-algorithm}. Given degree sequences $\mathbf{a}$ and $\mathbf{b}$, the nodes in $\mathbf{a}$ are partitioned into $p$ groups according to their distinct degree values: group~$k$ consists of $m_k$ nodes with degree $\alpha_k$, where $\alpha_1 < \alpha_2 < \dots < \alpha_p$ and $\sum_{k=1}^{p} m_k = m$. The standard greedy algorithm connects the node $\left[\mathbf{b}\right]_1$ to nodes in $\mathbf{a}$ in a greedy backward manner. Specifically, it starts from the group with the largest degree (group~$p$), connects $\left[\mathbf{b}\right]_1$ to all  nodes in that group, and then proceeds to group~$p-1$, group~$p-2$, and so on, until the degree of $\left[\mathbf{b}\right]_1$ is exhausted. After establishing these connections, any nodes whose degrees are reduced to zero are removed, yielding the updated degree sequences $\mathbf{a}^{[2]}$ and $\mathbf{b}^{[2]}$. In contrast to the bipartite graph generation mechanism introduced in Section~\ref{sec:BGA}, which connects $\left[\mathbf{b}\right]_1$ by prioritizing lower-degree groups, the greedy algorithm adopts the opposite strategy by prioritizing higher-degree groups, fully exhausting each group before moving on. 

\begin{algorithm}[htpb]
\caption{Standard Greedy Algorithm}
\label{alg:greedy-algorithm}
{\renewcommand{\baselinestretch}{0.75}\selectfont
\begin{algorithmic}[1]
\Require Degree sequences $\mathbf{a} \in \mathbb{N}_{+}^{m}$ and $\mathbf{b} \in \mathbb{N}_{+}^{n}$
\Ensure Updated degree sequences $\mathbf{a}^{[2]}$ and $\mathbf{b}^{[2]}$

\State Group $\mathbf{a}$ into $p$ groups $(\alpha_{k}, m_{k})$ with $\alpha_1 < \alpha_2 < \cdots < \alpha_p$

\For{$k = p$ \textbf{down to} $1$}
\State Let $\mathscr{F}\!\left(k\right) \gets \min\big\{m_{k}, \left[\mathbf{b}\right]_1^{(k)}\big\}$, where $\left[\mathbf{b}\right]_{1}^{(k)}$ denotes the remaining degree of $\left[\mathbf{b}\right]_1$
\State Uniformly select $\mathscr{F} \!\left(k\right)$ distinct nodes in group $k$ and connect them to the node $\left[\mathbf{b}\right]_1$
\EndFor

\State Remove zero-degree nodes from the degree sequences

\State \textbf{return} the updated degree sequences $\mathbf{a}^{[2]}$ and $\mathbf{b}^{[2]}$
\end{algorithmic}
}
\end{algorithm}

Building on this idea, we introduce the $k$-step greedy algorithm, summarized in Algorithm~\ref{alg:k-step-algorithm}. As before, the nodes in $\mathbf{a}$ are partitioned into $p$ groups according to their degrees. The key distinction from the standard greedy algorithm lies in the first $k-1$ steps. In the first $k-1$ steps, the node $\left[\mathbf{b}\right]_1$ connects to groups~$1$ through $k-1$ in the same order as in the bipartite graph generation mechanism described in Section~\ref{sec:BGA}. Moreover, the number of nodes connected to $\left[\mathbf{b}\right]_1$ from each of these groups is specified as follows. Since a bipartite graph exists for the degree sequences $\mathbf{a}$ and $\mathbf{b}$, there exists at least one feasible way for $\left[\mathbf{b}\right]_1$ to connect to nodes in $\mathbf{a}$, which can be characterized by the number of connected nodes from each group. Accordingly, there exists a feasible collection $\{\widetilde{\mathscr{F}}(k')\}_{k'=1}^{p}$, where $\widetilde{\mathscr{F}}(k')$ denotes the number of connected nodes from the $m_{k'}$ nodes in group $k'$, such that the resulting updated degree sequences $\mathbf{a}^{[2]}$ and $\mathbf{b}^{[2]}$ remain feasible for bipartite graph generation. For the $k$-step greedy algorithm, $\widetilde{\mathscr{F}}\!\left(k'\right)$ nodes from each group $k' \in \left\{1,2,\dots,k-1\right\}$ are connected to $\left[\mathbf{b}\right]_1$ during the first $k-1$ steps. From step $k$ onward, the algorithm switches to the greedy backward strategy: the node $\left[\mathbf{b}\right]_1$ connects sequentially to the remaining groups in descending order, starting from group~$p$, then group~$p-1$, group~$p-2$, and so on, until its degree is exhausted. After establishing these connections, any nodes whose degrees are reduced to zero are removed, yielding the updated degree sequences $\mathbf{a}^{[2]}$ and $\mathbf{b}^{[2]}$. For the degree sequences $\mathbf{a}^{[2]}$ and $\mathbf{b}^{[2]}$ produced by the $k$-step greedy algorithm, we establish the following lemma, which guarantees the existence of a bipartite graph.

\begin{algorithm}[htpb]
\caption{$k$-step Greedy Algorithm}
\label{alg:k-step-algorithm}
{\renewcommand{\baselinestretch}{0.75}\selectfont
\begin{algorithmic}[1]
\Require Degree sequences $\mathbf{a} \in \mathbb{N}_{+}^{m}$ and $\mathbf{b} \in \mathbb{N}_{+}^{n}$
\Ensure Updated degree sequences $\mathbf{a}^{[2]}$ and $\mathbf{b}^{[2]}$

\State Group $\mathbf{a}$ into $p$ groups $(\alpha_{k'}, m_{k'})$ with $\alpha_1 < \alpha_2 < \cdots < \alpha_p$

\For{$k' = 1$ \textbf{to} $k-1$}
\State Let $\mathscr{F}\!\left(k'\right) \gets \widetilde{\mathscr{F}}\!\left(k'\right)$, where $\widetilde{\mathscr{F}}\!\left(k'\right)$ is a feasible value
\State Uniformly select $\mathscr{F} \!\left(k'\right)$ distinct nodes in group $k'$ and connect them to the node $\left[\mathbf{b}\right]_1$
\EndFor

\For{$k' = p$ \textbf{down to} $k$}
\State Let $\mathscr{F}\!\left(k'\right) \gets \min\big\{m_{k'}, \left[\mathbf{b}\right]_1^{(k')}\big\}$, where $\left[\mathbf{b}\right]_{1}^{(k')}$ denotes the remaining degree of $\left[\mathbf{b}\right]_1$
\State Uniformly select $\mathscr{F} \!\left(k'\right)$ distinct nodes in group $k'$ and connect them to the node $\left[\mathbf{b}\right]_1$
\EndFor

\State Remove zero-degree nodes from the degree sequences

\State \textbf{return} the updated degree sequences $\mathbf{a}^{[2]}$ and $\mathbf{b}^{[2]}$
\end{algorithmic}
}
\end{algorithm}

\begin{lemma}\label{lemma:k-step_greedy_algorithm}
    Suppose that a bipartite graph exists for the prescribed degree sequences $\mathbf{a}$ and $\mathbf{b}$. Then a bipartite graph also exists for the updated degree sequences $\mathbf{a}^{[2]}$ and $\mathbf{b}^{[2]}$ produced by the $k$-step greedy algorithm. 
\end{lemma}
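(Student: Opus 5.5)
We prove Lemma~\ref{lemma:k-step_greedy_algorithm} by an exchange argument. Since only connection counts per group matter (nodes inside a group are interchangeable), it suffices to show that if some feasible graph realizes the counts $\{\widetilde{\mathscr{F}}(k')\}_{k'=1}^{p}$ for $\left[\mathbf{b}\right]_1$, then some feasible graph realizes the $k$-step greedy counts. By construction both count vectors agree on groups $1,\dots,k-1$ and have the same total $\left[\mathbf{b}\right]_1$. On groups $k,\dots,p$ the greedy vector is obtained by filling groups from $p$ downward, each as much as possible. Consequently, the greedy vector majorizes $\widetilde{\mathscr{F}}$ on these groups in the following sense: for every $t\ge k$, the number of connections into groups $t,\dots,p$ under greedy is at least that under $\widetilde{\mathscr{F}}$.

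The main step is a single-edge exchange. Start from a bipartite graph $G$ realizing $\mathbf{a},\mathbf{b}$ whose restriction to $\left[\mathbf{b}\right]_1$ has counts $\widetilde{\mathscr{F}}$, and suppose it differs from greedy. Then there exist groups $k\le s<t$ and nodes $u\in$ group $s$ and $v\in$ group $t$ such that $\left[\mathbf{b}\right]_1$ is adjacent to $u$ but not to $v$. Since $\alpha_t>\alpha_s$, $\deg v>\deg u$. Excluding $\left[\mathbf{b}\right]_1$, $v$ has at least $\alpha_t$ neighbors and $u$ has $\alpha_s-1$, so $v$ has at least $\alpha_t-\alpha_s+1\ge 2$ more neighbors in $\mathbf{b}\setminus\{\left[\mathbf{b}\right]_1\}$ than $u$. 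Hence there is a node $w\neq\left[\mathbf{b}\right]_1$ adjacent to $v$ but not to $u$. Replace the edges $(\left[\mathbf{b}\right]_1,u),(w,v)$ with $(\left[\mathbf{b}\right]_1,v),(w,u)$. This preserves all degrees and simplicity, and it moves one connection of $\left[\mathbf{b}\right]_1$ from group $s$ to the higher group $t$ without touching groups $1,\dots,k-1$.

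Iterating the swap strictly increases $\sum_{k'\ge k} k'\,\mathscr{F}(k')$, which is bounded, so the process terminates. The step I expect to need the most care is checking what a terminal configuration looks like. When no swap applies, every group above the lowest group touched at or after $k$ is fully connected. That is exactly the greedy rule: fill group $p$, then group $p-1$, and so on, truncating at the remaining degree, with the caveat that the remaining degree after groups $1,\dots,k-1$ is $\left[\mathbf{b}\right]_1-\sum_{k'<k}\widetilde{\mathscr{F}}(k')$ in both settings. Which specific nodes within a group are chosen is irrelevant up to relabeling.

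The resulting graph, with $\left[\mathbf{b}\right]_1$ and its edges deleted, is a bipartite graph on $\mathbf{a}^{[2]},\mathbf{b}^{[2]}$ after zero-degree nodes are removed, which proves the lemma. As an alternative consistency check one could verify the condition of Lemma~\ref{lemma:necessary and sufficient condition} directly via $\mathbf{z}$. However, the exchange argument avoids the bookkeeping, so that is the route I would take.
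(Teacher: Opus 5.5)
Your proof is correct, but it follows a different route from the paper.

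The paper never manipulates a realization. It works with the $\mathbf{z}$-form of Gale--Ryser (Lemma~\ref{lemma:necessary and sufficient condition}) and observes that the backward greedy allocation minimizes every cumulative count $\sum_{k'=1}^{l}\mathscr{F}(k')$ among allocations that agree on groups $1,\dots,k-1$ and have the same total. By Remark~\ref{F(K)-note}, connecting $\mathscr{F}(k')$ nodes of group $k'$ just lowers the $\alpha_{k'}$-th entry of $\mathbf{z}$ by $\mathscr{F}(k')$. Hence every prefix sum of $\widehat{\mathbf{z}}^{[2]}$ dominates the corresponding one of $\widetilde{\mathbf{z}}^{[2]}$, and \eqref{eq:ns-conditions} transfers from the feasible allocation to the greedy one.

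You instead apply degree-preserving Ryser-type interchanges to an actual realization, pushing edges of $[\mathbf{b}]_1$ into higher-degree groups one at a time. That realization exists by gluing the edges of $[\mathbf{b}]_1$ prescribed by $\widetilde{\mathscr{F}}$ onto a realization of the updated sequences. The existence of $w$, the invariance of the counts on groups $1,\dots,k-1$, termination, and the identification of swap-free configurations with the greedy one all check out.

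Your argument is elementary and constructive, and it does not need Gale--Ryser at all. The paper's argument stays inside the $\mathbf{z}$/$\mathscr{D}$ bookkeeping that the proof of Theorem~\ref{thm:interval} uses anyway. It also gives the slightly stronger statement \eqref{eq:greedy_z}: the greedy allocation maximizes every Gale--Ryser prefix sum simultaneously over all allocations, feasible or not. This is the form in which Lemma~\ref{lemma:k=1-interval} reduces feasibility of a given $\mathscr{F}(1)$ to checking a single greedy vector. Your exchange argument also delivers that reduction, but only qualitatively.

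Finally, the majorization remark in your first paragraph is not actually used; the terminal-configuration argument does that work.
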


\begin{proof}{Proof.}
Since a bipartite graph exists for the degree sequences $\mathbf{a}$ and $\mathbf{b}$, there exists a feasible collection $\{\widetilde{\mathscr{F}}(k')\}_{k'=1}^{p}$ such that $[\mathbf{b}]_1$ can connect to $\widetilde{\mathscr{F}}(k')$ nodes from group $k'$ of the grouped $\mathbf{a}$, thereby exactly exhausting the degree of $[\mathbf{b}]_1$. After these connections are made, the resulting updated degree sequences $\mathbf{a}^{[2]}$ and $\mathbf{b}^{[2]}$ remain feasible for bipartite graph generation. 

By Lemma \ref{lemma:necessary and sufficient condition}, as the degree sequences $\mathbf{a}^{[2]}$ and $\mathbf{b}^{[2]}$ induced by $\widetilde{\mathscr{F}}\!\left(k'\right)$ remain feasible for bipartite graph generation, the following inequality holds for all $r\in \left\{1,2,\dots,n-1\right\}$,
\begin{equation}\label{eq:feasible_Fk}
    \sum_{j=1}^r \left[\widetilde{\mathbf{z}}^{[2]}\right]_j \geq \sum_{j=1}^r \left[\mathbf{b}^{[2]}\right]_j,
\end{equation}
where $\widetilde{\mathbf{z}}^{[2]}$ denotes the vector $\mathbf{z}^{[2]}$ obtained from $\mathscr{Z} \!\left(\mathbf{a}^{[2]}, |\mathbf{b}^{[2]}|\right)$ by \eqref{eq:z vector}, with the updated degree sequences $\mathbf{a}^{[2]}$ and $\mathbf{b}^{[2]}$ induced by $\widetilde{\mathscr{F}}\!\left(k'\right)$.

Let $\widehat{\mathscr{F}}\!\left(k'\right)$ denote the number of connected nodes from the $m_{k'}$ nodes in group $k'$ by the $k$-step greedy algorithm. Without loss of generality, 
\begin{equation*}
    \widehat{\mathscr{F}}(k') = \widetilde{\mathscr{F}}(k') \quad \text{for all } k' \in \{1,2,\dots,k-1\}.
\end{equation*}
From step $k$ onward, the $k$-step greedy algorithm follows the greedy backward strategy, allocating the remaining degree of the node $\left[\mathbf{b}\right]_1$ sequentially to the groups $p,p-1,\dots,k$ in descending order of degree and fully exhausting each higher-degree group before moving on to a lower-degree one. Consequently, for any group index not smaller than $k$, the cumulative number of connected nodes from groups $k$ up to that group under the greedy backward strategy is no larger than that under any feasible allocation, i.e.,
\begin{equation*}
    \sum_{k'=k}^{l} \widehat{\mathscr{F}}\!\left(k'\right) \leq \sum_{k'=k}^{l} \widetilde{\mathscr{F}}\!\left(k'\right), \quad \text{for all } l \in \{k,k+1,\dots,p\}
\end{equation*}
Combining this inequality with the equality on the first $k-1$ groups yields
\begin{equation}\label{eq:greedy_Fk}
    \sum_{k'=1}^{l} \widehat{\mathscr{F}}\!\left(k'\right) \leq \sum_{k'=1}^{l} \widetilde{\mathscr{F}}\!\left(k'\right), \quad \text{for all } l \in \{1,2,\dots,p\}
\end{equation}

According to Remark~\ref{F(K)-note}, the quantity $\mathscr{F} \!\left(k'\right)$ also represents a reduction in the $\alpha_{k'}$-th component of $\mathbf{z}$ and hence determines the resulting vector $\mathbf{z}^{[2]}$. A smaller value of $\mathscr{F}(k')$ therefore corresponds to a larger $\alpha_{k'}$-th component of $\mathbf{z}^{[2]}$. It follows from \eqref{eq:greedy_Fk} that, for all $r \in \{1,2,\dots,n-1\}$,
\begin{equation}\label{eq:greedy_z}
    \sum_{j=1}^r \left[\widehat{\mathbf{z}}^{[2]}\right]_j \geq \sum_{j=1}^r \left[\widetilde{\mathbf{z}}^{[2]}\right]_j,
\end{equation}
where $\widehat{\mathbf{z}}^{[2]}$ denotes the vector $\mathbf{z}^{[2]}$ obtained from $\mathscr{Z} \!\left(\mathbf{a}^{[2]}, |\mathbf{b}^{[2]}|\right)$ by \eqref{eq:z vector}, with the updated degree sequences $\mathbf{a}^{[2]}$ and $\mathbf{b}^{[2]}$ produced by the $k$-step greedy algorithm.

Combining \eqref{eq:greedy_z} with \eqref{eq:feasible_Fk}, we obtain that, for all $r \in \{1,2,\dots,n-1\}$,
\begin{equation}\label{eq:lemma 3}
    \sum_{j=1}^r \left[\widehat{\mathbf{z}}^{[2]}\right]_j \geq \sum_{j=1}^r \left[\mathbf{b}^{[2]}\right]_j.
\end{equation}
By Lemma \ref{lemma:necessary and sufficient condition}, there exists a bipartite graph for the updated degree sequences $\mathbf{a}^{[2]}$ and $\mathbf{b}^{[2]}$ produced by the $k$-step greedy algorithm. Hence, we conclude the proof of the lemma.\Halmos
\end{proof}

To illustrate the $k$-step greedy algorithm and Lemma~\ref{lemma:k-step_greedy_algorithm}, we consider an example with $\mathbf{a} = (1,1,2,2,3)$ and $\mathbf{b} = (4,3,2)$. For the case of $2$-step greedy algorithm, we proceed as follows. First, the nodes in $\mathbf{a}$ are partitioned into three groups ($p=3$): group 1 consists of two nodes of degree 1 ($\alpha_1 = 1$, $m_1 = 2$), group 2 consists of two nodes of degree 2 ($\alpha_2 = 2$, $m_2 = 2$), and group 3 consists of one node of degree 3 ($\alpha_3 = 3$, $m_3 = 1$). Next, the node $\left[\mathbf{b}\right]_1$ of degree 4 takes a feasible value of $\mathscr{F}\!\left(1\right)$---for example $\mathscr{F}\!\left(1\right) = 1$---and sequentially connects all nodes in $\mathbf{a}$, proceeding from group 3 to group 2. Consequently, $\left[\mathbf{b}\right]_1$ connects to one node in group 1 (reducing its degree from 1 to 0), one node in group 3 (reducing its degree from 3 to 2), and two nodes in group 2 (reducing its degree from 2 to 1). After these connections, $\left[\mathbf{b}\right]_1$ and the one node of degree 0 in group 1 are removed, yielding the updated degree sequences $\mathbf{a}^{[2]} = (1,1,1,2)$ and $\mathbf{b}^{[2]} = (3, 2)$, for which a bipartite graph exists.

\subsection{Proof of Theorem \ref{thm:interval}}
We prove the theorem by mathematical induction on the group index $k$.

{\em Induction statement.} For each $k \in \{1,2,\dots,p\}$, define the statement $\mathcal{P}(k)$ as follows.

Suppose that a bipartite graph exists for the prescribed degree sequences $\mathbf{a}$ and $\mathbf{b}$,
and that after the node $[\mathbf{b}]_1$ has been connected to nodes in groups~$1$ through $k-1$ of $\mathbf{a}$, the resulting updated degree sequences remain feasible for bipartite graph generation.\footnote{The latter feasibility assumption is equivalent to the existence of a collection $\{\mathscr{F}(k')\}_{k'=k}^{p}$ such that the resulting updated degree sequences $\mathbf{a}^{[2]}$ and $\mathbf{b}^{[2]}$ remain feasible for bipartite graph generation.} Then, after the node $[\mathbf{b}]_1$ is connected to $\mathscr{F}(k)$ nodes in group~$k$, the resulting updated degree sequences remain feasible if and only if the number of connected nodes $\mathscr{F}(k)$ satisfies
\begin{equation*}
    \mathscr{F} \!\left(k\right) \in \left[\max \bigl\{ \bigl[\mathbf{b}\bigr]_1^{(k)} - M^{(k)}, 0 \bigr\},\, \min \bigl\{m_k, D_{\mathrm{left}}^{(k)} + D_{\mathrm{now}}^{(k)} + D_{\mathrm{right}}^{(k)} \bigr\}\right].
\end{equation*}

{\em Sketch of proof.}
We begin by establishing the base case $\mathcal{P}(1)$, whose proof proceeds in two steps. First, we derive an interval of $\mathscr{F}(1)$ for which there exists an integer collection $\{\mathscr{F}(k)\}_{k=2}^{p}$ such that the degree of the node $\left[\mathbf{b}\right]_1$ can be allocated across the $p$ groups of $\mathbf{a}$. However, $\mathscr{F}(1)$ in this interval does not guarantee that the resulting updated degree sequences remain feasible for bipartite graph generation. Second, using Lemma~\ref{lemma:necessary and sufficient condition} and the standard greedy algorithm, we characterize the subset of this interval for which there exists a feasible collection $\{\widetilde{\mathscr{F}}(k)\}_{k=2}^{p}$ such that the resulting updated degree sequences $\mathbf{a}^{[2]}$ and $\mathbf{b}^{[2]}$ remain feasible.

We then consider the induction step. Fix any $k\in \{2,3,\dots,p\}$ and assume that $\mathcal{P}(k')$ holds for all $k'\in \{1,2,\dots,k-1\}$. In particular, there exists a feasible collection $\{\widetilde{\mathscr{F}}(k')\}_{k'=1}^{k-1}$ satisfying the interval conditions prescribed by $\mathcal{P}(k')$ for all $k'\in \{1,2,\dots,k-1\}$, such that the resulting updated degree sequences remain feasible after the node $\left[\mathbf{b}\right]_1$ has been connected to nodes in groups~$1$ through $k-1$ of $\mathbf{a}$. Under this induction hypothesis, the proof of $\mathcal{P}(k)$ follows the same two-step structure as in the base case. We first derive an interval of $\mathscr{F}(k)$ for which there exists an integer collection $\{\mathscr{F}(k')\}_{k'=k+1}^{p}$ such that the remaining degree of the node $\left[\mathbf{b}\right]_1$ can be allocated across the remaining groups of $\mathbf{a}$, and then characterize the subset of this interval for which there exists a feasible collection $\{\widetilde{\mathscr{F}}(k')\}_{k'=k+1}^{p}$ such that the resulting updated degree sequences $\mathbf{a}^{[2]}$ and $\mathbf{b}^{[2]}$ remain feasible, using Lemma~\ref{lemma:necessary and sufficient condition} and the $k$-step greedy algorithm.

\begin{proof}{Proof.}
{\em Base case: $k=1$.} We first establish $\mathcal{P}(1)$ in the following lemma.
\begin{lemma}\label{lemma:k=1-interval}
    Suppose that a bipartite graph exists for the prescribed degree sequences $\mathbf{a}$ and $\mathbf{b}$. Then, after the node $\left[\mathbf{b}\right]_1$ is connected to $\mathscr{F}(1)$ nodes in group~$1$ of $\mathbf{a}$, the resulting updated degree sequences remain feasible for bipartite graph generation if and only if the number of connected nodes $\mathscr{F}(1)$ satisfies
    \begin{equation*}
        \mathscr{F} \!\left(1\right) \in \left[\max \bigl\{ \bigl[\mathbf{b}\bigr]_1^{(1)} - M^{(1)}, 0 \bigr\},\, \min \bigl\{m_1 , D_{\mathrm{left}}^{(1)} + D_{\mathrm{now}}^{(1)} + D_{\mathrm{right}}^{(1)} \bigr\}\right].
    \end{equation*}
\end{lemma}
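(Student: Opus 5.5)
We handle the lower and upper bounds separately, following the two-step structure in the sketch. The lower bound comes from counting capacity, the upper bound from Lemma~\ref{lemma:necessary and sufficient condition} applied to the standard greedy completion.

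\emph{Lower bound.} After choosing $\mathscr{F}(1)$, node $[\mathbf{b}]_1$ must still place $[\mathbf{b}]_1^{(1)}-\mathscr{F}(1)$ edges. Only groups $2,\dots,p$ remain available, and they contain $M^{(1)}$ distinct nodes. Completion is therefore possible only if $[\mathbf{b}]_1^{(1)}-\mathscr{F}(1)\le M^{(1)}$. Together with $0\le \mathscr{F}(1)\le m_1$, this gives the lower endpoint and the trivial part $m_1$ of the upper endpoint.

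\emph{Reduction to greedy.} By the monotonicity argument in the proof of Lemma~\ref{lemma:k-step_greedy_algorithm}, specialized to $k=2$ with $\mathscr{F}(1)$ fixed, the following holds. If any allocation $\{\mathscr{F}(k')\}_{k'\ge2}$ leaves feasible sequences, then so does the allocation in which the remaining degree is filled greedily from group $p$ downward. The reason is that greedy minimizes the partial sums $\sum_{k'\le l}\mathscr{F}(k')$, which maximizes the prefix sums of $\mathbf{z}^{[2]}$. So for a given $\mathscr{F}(1)$, feasibility is equivalent to feasibility of the greedy completion. That is in turn equivalent, by Lemma~\ref{lemma:necessary and sufficient condition}, to $\sum_{h\le r}[\mathbf{z}^{[2]}]_h\ge\sum_{h\le r}[\mathbf{b}^{[2]}]_h$ for all $r$. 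The edge-count equality holds automatically because $[\mathbf{b}]_1$ is exhausted.

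\emph{Upper bound.} Express $\mathbf{z}^{[2]}$ in terms of $\mathbf{z}^{(1)}=\mathbf{z}$ using Remark~\ref{F(K)-note}: connecting $\mathscr{F}(k')$ nodes of group $k'$ lowers column $\alpha_{k'}$ by $\mathscr{F}(k')$. Removing the column for $[\mathbf{b}]_1$ only changes the length convention, which is handled by $\mathscr{D}$. The prefix-sum condition then reads
\[
\sum_{h\le r}\bigl[\mathscr{D}(\mathbf{z},\mathbf{b}^{[2]})\bigr]_h \ \ge\ \sum_{k':\,\alpha_{k'}\le r}\mathscr{F}(k').
\]
We split the indices $r$ into three ranges.
\begin{itemize}
\item For $r<\alpha_1$, no group contributes to the right-hand side, and the inequality follows from feasibility of $(\mathbf{a},\mathbf{b})$ together with $b$ sorted non-increasing.
\item For $r=\alpha_1$, the condition is $\mathscr{F}(1)\le D_{\mathrm{left}}^{(1)}+D_{\mathrm{now}}^{(1)}$.
\item For $r>\alpha_1$, the right-hand side also contains the greedy amounts from the higher groups.
\end{itemize}

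\emph{Main obstacle: the range $r>\alpha_1$.} Write the greedy amounts placed in columns $\alpha_1<\alpha_{k'}\le r$ as the part of the remaining degree not absorbed by groups above $r$. We then need to show that these constraints collapse to $\mathscr{F}(1)\le D_{\mathrm{left}}^{(1)}+D_{\mathrm{now}}^{(1)}+\sum_{h=\alpha_1+1}^{r}[\mathscr{D}]_h$ whenever that sum is negative. Taking the minimum over $r$, capped at $0$, would give exactly $D_{\mathrm{right}}^{(1)}$. To carry this out, we first telescope the greedy deficit column by column. We then use the original feasibility of $(\mathbf{a},\mathbf{b})$ to show that the higher-group contributions cancel against the shifted $\mathbf{b}$ terms. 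Finally, we verify that each resulting constraint is monotone in $\mathscr{F}(1)$, so the feasible set is an interval. Sufficiency follows by reversing the chain: every $\mathscr{F}(1)$ in the interval makes all prefix inequalities hold for the greedy completion.
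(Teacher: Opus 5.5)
Your proposal is correct and follows the paper's skeleton: the capacity interval \eqref{eq:F1_initial_value}, the reduction to the greedy backward completion via the monotonicity behind Lemma~\ref{lemma:k-step_greedy_algorithm}, and a Gale--Ryser prefix analysis split at $r=\alpha_1$.

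The two routes differ on the range $r>\alpha_1$. The paper never writes down the greedy allocation. It compares $D_1=\mathscr D(\overline{\mathbf z}^{[2]},\mathbf b^{[2]})$ with $D_2=\mathscr D(\mathbf z,\mathbf b^{[2]})$ only through $[D_1]_h\le[D_2]_h$ and the argmin indices $\mu,\nu$. In the case $\sum_{h=\alpha_1+1}^{\mu}[D_1]_h\le 0$, its chain
\[
\overline{\mathscr F}(1)\le D_{\mathrm{left}}^{(1)}+D_{\mathrm{now}}^{(1)}+\sum_{h=\alpha_1+1}^{\mu}[D_1]_h\le D_{\mathrm{left}}^{(1)}+D_{\mathrm{now}}^{(1)}+D_{\mathrm{right}}^{(1)}
\]
yields only the implication from greedy feasibility to the bound. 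The converse is not argued there.

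Your plan computes the greedy amount explicitly, and this gives both directions at once. For $\alpha_1\le r\le n-1$, the groups of degree above $r$ contain exactly $[\mathbf z]_{r+1}$ nodes. So greedy places $\max\{0,[\mathbf b]_1-\mathscr F(1)-[\mathbf z]_{r+1}\}$ edges into groups $k'\ge 2$ with $\alpha_{k'}\le r$, and your prefix inequality becomes
\[
\sum_{h=1}^{r}\bigl[D_2\bigr]_h\ \ge\ \max\bigl\{\mathscr F(1),\ [\mathbf b]_1-[\mathbf z]_{r+1}\bigr\}.
\]
The second alternative is exactly $\sum_{h\le r+1}[\mathbf z]_h\ge\sum_{j\le r+1}[\mathbf b]_j$, i.e., Lemma~\ref{lemma:necessary and sufficient condition} for $(\mathbf a,\mathbf b)$ at $r+1$; this is your ``cancellation''. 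What remains is $\mathscr F(1)\le\sum_{h\le r}[D_2]_h$, which is visibly monotone in $\mathscr F(1)$.

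Your handling of $r<\alpha_1$, via $\sum_{j=2}^{r+1}[\mathbf b]_j\le\sum_{j=1}^{r}[\mathbf b]_j$, is also more direct than the paper's appeal to the standard-greedy vector $D_3$.

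When you write this up, make two changes:
\begin{itemize}
\item Replace the ``telescoping'' sketch by the displayed identity above.
\item Note that the index $r=n$ in $D_{\mathrm{right}}^{(1)}$ lies outside the Gale--Ryser range for $(\mathbf a^{[2]},\mathbf b^{[2]})$. Its partial sum equals $[\mathbf b]_1$ by \eqref{eq:1D2-b1}, so it absorbs the constraint $\mathscr F(1)\le[\mathbf b]_1$, which your capacity step leaves out.
\end{itemize}
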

\begin{proof}{Proof of Lemma~\ref{lemma:k=1-interval}.}
We first derive an interval of $\mathscr{F}(1)$ for which there exists an integer collection $\{\mathscr{F}(k)\}_{k=2}^{p}$ with $0 \leq \mathscr{F}(k) \leq m_k$ such that the degree of the node $\left[\mathbf{b}\right]_1$ can be allocated across the $p$ groups of $\mathbf{a}$, where $m_k$ denotes the number of nodes in group~$k$ of $\mathbf{a}$.

Recall that $\left[\mathbf{b}\right]_1^{(1)}$ denotes the initial degree of node $\left[\mathbf{b}\right]_1$. Since the number of connected nodes in group~$1$, $\mathscr{F} \!\left(1\right)$, cannot exceed either the number of nodes in that group or the degree of the node $\left[\mathbf{b}\right]_1$, we obtain
\begin{equation}\label{eq:0-min1}
    \mathscr{F}\!\left(1\right) \in \left[0,\, \min \bigl\{m_1,\left[\mathbf{b}\right]_1^{(1)}\bigr\}\right].
\end{equation}
Recall that $M^{(1)}$ denotes the total number of nodes in groups $2$ through $p$ of $\mathbf{a}$. If the degree $\left[\mathbf{b}\right]_1^{(1)}$ exceeds $M^{(1)}$, then the node $\left[\mathbf{b}\right]_1$ must connect to at least
$\left[\mathbf{b}\right]_1^{(1)}-M^{(1)}$ nodes from group~$1$ so that its remaining degree can be accommodated by the remaining groups, which implies that $\mathscr{F} \!\left(1\right) \geq \max\bigl\{\left[\mathbf{b}\right]_{1}^{(1)} - M^{(1)}, 0\bigr\}$. Combining this inequality with \eqref{eq:0-min1}, we obtain
\begin{equation}\label{eq:F1_initial_value}
    \mathscr{F} \!\left(1\right) \in \left[\max\bigl\{\left[\mathbf{b}\right]_{1}^{(1)} - M^{(1)}, 0\bigr\},\, \min \bigl\{m_1,\left[\mathbf{b}\right]_1^{(1)}\bigr\}\right].
\end{equation}

Note that for any $\mathscr{F}(1)$ satisfying \eqref{eq:F1_initial_value}, there exists an integer collection $\{\mathscr{F}(k)\}_{k=2}^{p}$ with $0 \leq \mathscr{F}(k) \leq m_k$ such that
\begin{equation}\label{eq:1-collection}
    \mathscr{F}(1) + \sum_{k=2}^{p} \mathscr{F}(k) = \left[\mathbf{b}\right]_1^{(1)},
\end{equation}
that is, the degree of the node $\left[\mathbf{b}\right]_1$ can be allocated across all $p$ groups of $\mathbf{a}$. However, \eqref{eq:F1_initial_value} alone does not guarantee that the updated degree sequences $\mathbf{a}^{[2]}$ and $\mathbf{b}^{[2]}$, induced by a $\mathscr{F}(1)$ satisfying it together with a collection $\{\mathscr{F}(k)\}_{k=2}^{p}$ satisfying \eqref{eq:1-collection}, remain feasible for bipartite graph generation. Indeed, for some values of $\mathscr{F}(1)$ satisfying \eqref{eq:F1_initial_value}, there exists a feasible collection $\{\widetilde{\mathscr{F}}(k)\}_{k=2}^{p}$ such that the resulting updated degree sequences $\mathbf{a}^{[2]}$ and $\mathbf{b}^{[2]}$ remain feasible, whereas for other values no such collection exists.

We then characterize the subset of the interval in \eqref{eq:F1_initial_value} for which there exists a feasible collection $\{\widetilde{\mathscr{F}}(k)\}_{k=2}^{p}$ such that the resulting updated degree sequences $\mathbf{a}^{[2]}$ and $\mathbf{b}^{[2]}$ remain feasible, using Lemma~\ref{lemma:necessary and sufficient condition} and the standard greedy algorithm.

According to Lemma~\ref{lemma:necessary and sufficient condition}, for the updated degree sequences $\mathbf{a}^{[2]}$ and $\mathbf{b}^{[2]}$ induced by a given $\mathscr{F}(1)$ satisfying \eqref{eq:F1_initial_value} together with an integer collection satisfying \eqref{eq:1-collection}, to remain feasible for bipartite graph generation, it is necessary and sufficient that $\sum_{i=1}^{m^{[2]}} \left[\mathbf{a}^{[2]}\right]_i = \sum_{j=1}^{n-1} \left[\mathbf{b}^{[2]}\right]_j$ and
\begin{equation}\label{eq:2latter-condition-lemma1}
    \sum_{j=1}^{r} [\mathbf{z}^{[2]}]_j \geq \sum_{j=1}^{r} [\mathbf{b}^{[2]}]_j \quad \text{for all } r\in \left\{1,2,\dots,n-1\right\},
\end{equation}
where $m^{[2]}$ denotes the number of remaining nodes in $\mathbf{a}^{[2]}$ after removing zero-degree nodes, and the vector $\mathbf{z}^{[2]}$ is obtained from $\mathscr{Z}\!\left(\mathbf{a}^{[2]},|\mathbf{b}^{[2]}|\right)$ by \eqref{eq:z vector}, with the updated degree sequence $\mathbf{a}^{[2]}$ and $\mathbf{b}^{[2]}$ induced by the given $\mathscr{F}(1)$ and the collection $\{\mathscr{F}(k)\}_{k=2}^{p}$. The equality condition of total degrees, $\sum_{i=1}^{m^{[2]}} \left[\mathbf{a}^{[2]}\right]_i = \sum_{j=1}^{n-1} \left[\mathbf{b}^{[2]}\right]_j$, holds automatically, since the initial degree sequences satisfy $\sum_{i=1}^{m} [\mathbf{a}]_i = \sum_{j=1}^{n} [\mathbf{b}]_j$ and each connection of the node $\left[\mathbf{b}\right]_1$ decreases the total degrees of $\mathbf{a}$ and $\mathbf{b}$ by one simultaneously. Hence, it suffices to focus on the cumulative inequality condition \eqref{eq:2latter-condition-lemma1}.

Based on \eqref{eq:greedy_z} in the proof of Lemma~\ref{lemma:k-step_greedy_algorithm}, it follows that for any given $\mathscr{F}(1)$ satisfying \eqref{eq:F1_initial_value}, among all collections $\{\mathscr{F}(k)\}_{k=2}^{p}$ satisfying \eqref{eq:1-collection}, the collection generated under the greedy backward strategy---namely, allocating the remaining degree of the node $\left[\mathbf{b}\right]_1$ sequentially to the groups $p,p-1,\dots,2$ in descending order of degree and fully exhausting each higher-degree group before moving on to a lower-degree one---maximizes the sum $\sum_{j=1}^{r}\left[\mathbf{z}^{[2]}\right]_j$ for all $r\in \{1,2,\dots,n-1\}$. Since condition \eqref{eq:2latter-condition-lemma1} has the same right-hand side for all such collections and is monotone in these sums, it follows that, for the given $\mathscr{F}(1)$, there exists a feasible collection $\{\widetilde{\mathscr{F}}(k)\}_{k=2}^{p}$ such that the resulting $\mathbf{z}^{[2]}$ satisfies \eqref{eq:2latter-condition-lemma1} if and only if the vector $\mathbf{z}^{[2]}$ obtained under the greedy backward strategy satisfies \eqref{eq:2latter-condition-lemma1}. Consequently, the feasible values of $\mathscr{F}(1)$ satisfying \eqref{eq:F1_initial_value} for which such a feasible collection exists are exactly those values for which the corresponding vector $\mathbf{z}^{[2]}$ under the greedy backward strategy satisfies \eqref{eq:2latter-condition-lemma1}.

Therefore, we can derive the feasible values of $\mathscr{F}(1)$ satisfying \eqref{eq:F1_initial_value} by analyzing condition \eqref{eq:2latter-condition-lemma1} for the vector $\mathbf{z}^{[2]}$, where $\mathbf{z}^{[2]}$ is obtained from $\mathscr{Z}\!\left(\mathbf{a}^{[2]},|\mathbf{b}^{[2]}|\right)$ by \eqref{eq:z vector}, with the updated degree sequences $\mathbf{a}^{[2]}$ and $\mathbf{b}^{[2]}$ induced by $\mathscr{F}(1)$ together with the collection $\{\mathscr{F}(k)\}_{k=2}^{p}$ generated under the greedy backward strategy.

By the definition of the $\mathscr{D}$-operation (Definition~\ref{def:D-vector}), the condition \eqref{eq:2latter-condition-lemma1} is equivalent to
\begin{equation}\label{eq:D n&s}
    \sum_{h=1}^{r} \left[\mathscr{D} \!\left(\mathbf{z}^{[2]}, \mathbf{b}^{[2]}\right)\right]_h \geq 0 \quad \text{for all } r\in \left\{1,2,\dots,n-1\right\}.
\end{equation}
To facilitate the derivation of the feasible values of $\mathscr{F}(1)$, we define the following three $\mathscr{D}$-vectors:
\begin{itemize}
    \item $D_1 = \mathscr{D}\!\left(\overline{\mathbf{z}}^{[2]}, \mathbf{b}^{[2]}\right)\in \mathbb{Z}^{n-1}$. Here, the vector $\overline{\mathbf{z}}^{[2]}$ is obtained from $\mathscr{Z}\!\left(\overline{\mathbf{a}}^{[2]},|\mathbf{b}^{[2]}|\right)$ by \eqref{eq:z vector}, with the updated degree sequence $\overline{\mathbf{a}}^{[2]}$ and $\mathbf{b}^{[2]}$ induced by a $\overline{\mathscr{F}}(1)$ satisfying \eqref{eq:F1_initial_value} together with the collection $\{\overline{\mathscr{F}}(k)\}_{k=2}^{p}$ generated under the greedy backward strategy.

    \item $D_2 = \mathscr{D}\!\left(\mathbf{z}^{(1)}, \mathbf{b}^{[2]}\right)\in \mathbb{Z}^{n}$. Here, the vector $\mathbf{z}^{(1)} = \mathbf{z}$ is obtained from $\mathscr{Z}\!\left(\mathbf{a},|\mathbf{b}|\right)$ by \eqref{eq:z vector}, with the initial degree sequences $\mathbf{a}$ and $\mathbf{b}$. The vector $\mathbf{z}^{(1)}$ corresponds to the initial state in which the node $\left[\mathbf{b}\right]_1$ has not yet been connected to any node in group~$1$ of $\mathbf{a}$. This $\mathscr{D}$-vector $D_2$ is used to define the following quantities $D_{\mathrm{left}}^{(1)}$, $D_{\mathrm{now}}^{(1)}$, and $D_{\mathrm{right}}^{(1)}$ according to \eqref{eq:definition of left and now} and \eqref{eq:definition of right}, all of which can be determined in the initial state and are used to characterize the feasible values of $\mathscr{F}(1)$:
    \begin{equation}\label{eq:D2 left and now}
        D_{\mathrm{left}}^{(1)} = \sum_{h=1}^{{\alpha_1}-1} \left[D_2\right]_h, \quad
        D_{\mathrm{now}}^{(1)} = \left[D_2\right]_{\alpha_1},
    \end{equation}
    and
    \begin{equation}\label{eq:D2 right}
        D_{\mathrm{right}}^{(1)} = \min\left\{\min_{r\in \{\alpha_1+1,\, \alpha_1+2,\, \dots,\, n\}} \sum_{h=\alpha_1 + 1}^{r} \left[D_2\right]_h, 0\right\},
    \end{equation}
    where $\alpha_1$ denotes the degree of the nodes in group~$1$ of $\mathbf{a}$.

    \item $D_3 = \mathscr{D}\!\left(\widehat{\mathbf{z}}^{[2]}, \mathbf{b}^{[2]}\right) \in \mathbb{Z}^{n-1}$. Here, the vector $\widehat{\mathbf{z}}^{[2]}$ is obtained from $\mathscr{Z}\!\left(\widehat{\mathbf{a}}^{[2]},|\mathbf{b}^{[2]}|\right)$ by \eqref{eq:z vector}, with the updated degree sequence $\widehat{\mathbf{a}}^{[2]}$ and $\mathbf{b}^{[2]}$ induced by the collection $\{\widehat{\mathscr{F}}(k)\}_{k=1}^{p}$ generated under the standard greedy algorithm, namely allocating the degree of the node $\left[\mathbf{b}\right]_1$ sequentially to the groups $p,p-1,\dots,1$ in descending order of degree and fully exhausting each higher-degree group before moving on to a lower-degree one. According to Lemma~\ref{lemma:k-step_greedy_algorithm}, the updated degree sequences $\widehat{\mathbf{a}}^{[2]}$ and $\mathbf{b}^{[2]}$ produced by the standard greedy algorithm (i.e., the $1$-step greedy algorithm), remain feasible for bipartite graph generation. Since condition \eqref{eq:D n&s} is necessary for the updated degree sequences to remain feasible for bipartite graph generation, we have that
    \begin{equation}\label{eq:1D3_0}
        \sum_{h=1}^{r} \left[D_3\right]_h = \sum_{h=1}^{r} \left[\mathscr{D} \!\left(\widehat{\mathbf{z}}^{[2]}, \mathbf{b}^{[2]}\right)\right]_h \geq 0 \quad \text{for all } r\in \left\{1,2,\dots,n-1\right\}.
    \end{equation}
\end{itemize}

The component-wise relations among the three $\mathscr{D}$-vectors are established as follows. According to Remark~\ref{F(K)-note}, the quantity $\mathscr{F} \!\left(k\right)$ also represents a reduction in the $\alpha_{k}$-th component of $\mathbf{z}$. This reduction determines the $\alpha_{k}$-th component of the updated vector $\mathbf{z}^{[2]}$, which in turn determines the $\alpha_{k}$-th component of the corresponding $\mathscr{D}$-vector $\mathscr{D}\left(\mathbf{z}^{[2]}, \mathbf{b}^{[2]}\right)$. Since the reductions induced by $\{\mathscr{F}(k)\}_{k=1}^{p}$ begin at the $\alpha_1$-th component, the first $\alpha_1-1$ components remain unaffected. Therefore, for the three $\mathscr{D}$-vectors $D_1$, $D_2$, and $D_3$, their first $\alpha_1-1$ components coincide, and we obtain
\begin{equation}\label{eq:1equal D_left}
    \sum_{h=1}^{r} \left[D_1\right]_h = \sum_{h=1}^{r} \left[D_2\right]_h = \sum_{h=1}^{r} \left[D_3\right]_h \quad \text{for all } r\in \left\{1, 2, \dots, \alpha_1-1\right\}.
\end{equation}
Consider the $\alpha_1$-th component of the $\mathscr{D}$-vectors $D_1$ and $D_2$. For $D_1$, we note that the quantity $\mathscr{F}(1) = \overline{\mathscr{F}}(1)$, which induces a reduction of $\overline{\mathscr{F}}(1)$ in the $\alpha_1$-th component, whereas $\mathscr{F}(1)=0$ for $D_2$. Therefore, relative to the $\alpha_1$-th component of $D_2$, that of $D_1$ is reduced by $\overline{\mathscr{F}}(1)$. Combining this with \eqref{eq:D2 left and now}, we obtain
\begin{equation}\label{eq:1equal D_now12}
    \left[D_1\right]_{\alpha_1} = \left[D_2\right]_{\alpha_1} - \overline{\mathscr{F}}(1) = D_{\mathrm{now}}^{(1)} - \overline{\mathscr{F}}(1).
\end{equation}
Consider the remaining components (excluding the first $\alpha_1$ components) of the $\mathscr{D}$-vectors $D_1$ and $D_2$. For $D_1$, the quantity $\overline{\mathscr{F}}(k)$ for $k \in \{2, 3, \dots, p\}$, generated under the greedy backward strategy, induces a reduction of $\overline{\mathscr{F}}(k)$ in the corresponding $\alpha_k$-th component, whereas $\mathscr{F}(k) = 0$ for $D_2$. Therefore, relative to the $\alpha_k$-th component of $D_2$, that of $D_1$ is reduced by $\overline{\mathscr{F}}(k)$ for each $k \in \{2, 3, \dots, p\}$. As a result, for all $h \in \{\alpha_1+1, \alpha_1+2, \dots, n-1\}$, the $h$-th component of $D_1$ is less than or equal to the corresponding component of $D_2$. Consequently, we obtain
\begin{equation}\label{eq:1equal D_right12}
    \sum_{h=\alpha_1 + 1}^{r} \left[D_1\right]_h \leq \sum_{h=\alpha_1 + 1}^{r} \left[D_2\right]_h \quad \text{for all } r\in \left\{\alpha_1+1,\alpha_1+2,\dots,n-1\right\}.
\end{equation}

Based on the three $\mathscr{D}$-vectors defined above and their component-wise relations established in \eqref{eq:1equal D_left}--\eqref{eq:1equal D_right12}, we now derive the feasible values of $\overline{\mathscr{F}}(1)$ satisfying \eqref{eq:F1_initial_value} by analyzing the condition~\eqref{eq:D n&s} for the $\mathscr{D}$-vector $D_1$:
\begin{equation}\label{eq:1D1-ns}
    \sum_{h=1}^{r} \left[D_1\right]_h \geq 0 \quad \text{for all } r\in \left\{1,2,\dots,n-1\right\}.
\end{equation}

Combining \eqref{eq:1D3_0} with \eqref{eq:1equal D_left}, we obtain
\begin{equation*}
    \sum_{h=1}^{r} \left[D_1\right]_h = \sum_{h=1}^{r} \left[D_3\right]_h \geq 0 \quad \text{for all } r\in \left\{1,2,\dots,\alpha_1-1\right\}.
\end{equation*}
Thus, it suffices to focus on the condition \eqref{eq:1D1-ns} for all $r\in \{\alpha_1, \alpha_1+1, \dots, n-1\}$, i.e.,
\begin{equation*}
    \sum_{h=1}^{r} \left[D_1\right]_h \geq 0 \quad \text{for all } r\in \left\{\alpha_1,\alpha_1+1,\dots,n-1\right\}.
\end{equation*}
We divide the above condition into two cases: $r=\alpha_1$ and $r\in\{\alpha_1+1,\alpha_1+2,\dots,n-1\}$, i.e.,
\begin{equation}\label{eq:1two-D1-initial}
    \begin{cases}
        \sum_{h=1}^{\alpha_1-1}\left[D_1\right]_h + \left[D_1\right]_{\alpha_1} \geq 0, & \text{for } r = \alpha_1, \\
        \sum_{h=1}^{\alpha_1-1}\left[D_1\right]_h + \left[D_1\right]_{\alpha_1} + \sum_{h=\alpha_1+1}^{r} \left[D_1\right]_h \geq 0, & \text{for all } r \in \{\alpha_1+1, \alpha_1+2, \dots, n-1\}.
    \end{cases}
\end{equation}
By \eqref{eq:D2 left and now} and \eqref{eq:1equal D_left}, we obtain that $\sum_{h=1}^{\alpha_1-1}\left[D_1\right]_h = \sum_{h=1}^{\alpha_1-1}\left[D_2\right]_h = D_{\mathrm{left}}^{(1)}$. Combining this with \eqref{eq:1equal D_now12}, the condition~\eqref{eq:1two-D1-initial} is equivalent to
\begin{equation}\label{eq:1two-D1}
    \begin{cases}
        D_{\mathrm{left}}^{(1)} + D_{\mathrm{now}}^{(1)} - \overline{\mathscr{F}}\!\left(1\right) \geq 0, & \text{for } r = \alpha_1, \\
        D_{\mathrm{left}}^{(1)} + D_{\mathrm{now}}^{(1)} - \overline{\mathscr{F}}\!\left(1\right) + \sum_{h=\alpha_1+1}^{r} \left[D_1\right]_h \geq 0, & \text{for all } r \in \{\alpha_1+1, \alpha_1+2, \dots, n-1\}.
    \end{cases}
\end{equation}

To further analyze condition~\eqref{eq:1two-D1}, we define $\mu$ as follows. Let $\mu$ be the smallest index at which $\sum_{h=\alpha_1+1}^{r} \left[D_1\right]_h$ attains its minimum over $r\in \{\alpha_1+1, \alpha_1+2, \dots, n-1\}$, i.e.,
\begin{equation*}
    \mu = \min \left\{ r \, \big| \, r \in \mathop{\arg\min}\limits_{r} \sum_{h=\alpha_1+1}^{r} \left[D_1\right]_h \right\}, 
    \quad \text{for all } r \in \left\{\alpha_1+1, \alpha_1+2, \dots, n-1 \right\}.
\end{equation*}
If $\sum_{h=\alpha_1+1}^{\mu} \left[D_1\right]_h > 0$, then $\sum_{h=\alpha_1+1}^{r} \left[D_1\right]_h > 0$ for all $r \in \{\alpha_1+1,\alpha_1+2,\dots,n-1\}$. In this case, the condition \eqref{eq:1two-D1} reduces to
\begin{equation*}
    D_{\mathrm{left}}^{(1)} + D_{\mathrm{now}}^{(1)} - \overline{\mathscr{F}}(1) \ge 0.
\end{equation*}
If, on the other hand, $\sum_{h=\alpha_1+1}^{\mu} \left[D_1\right]_h \le 0$, then the most restrictive inequality in \eqref{eq:1two-D1} is attained at $r=\mu$. In this case, the condition \eqref{eq:1two-D1} reduces to
\begin{equation*}
    D_{\mathrm{left}}^{(1)} + D_{\mathrm{now}}^{(1)} - \overline{\mathscr{F}}(1)
+ \sum_{h=\alpha_1+1}^{\mu} \left[D_1\right]_h \ge 0.
\end{equation*}
Therefore, the condition \eqref{eq:1two-D1} is equivalent to
\begin{equation}\label{eq:1two-D1-mu}
    \begin{cases}
    D_{\mathrm{left}}^{(1)} + D_{\mathrm{now}}^{(1)} - \overline{\mathscr{F}}(1) \ge 0,
    & \text{if } \sum_{h=\alpha_1+1}^{\mu} [D_1]_h > 0, \\[0.5ex]
    D_{\mathrm{left}}^{(1)} + D_{\mathrm{now}}^{(1)} - \overline{\mathscr{F}}(1)
    + \sum_{h=\alpha_1+1}^{\mu} [D_1]_h \ge 0,
    & \text{if } \sum_{h=\alpha_1+1}^{\mu} [D_1]_h \le 0.
    \end{cases}
\end{equation}

We now derive the feasible values of $\overline{\mathscr{F}}(1)$ by analyzing the two cases in \eqref{eq:1two-D1-mu}. To this end, we define $\nu$ as the smallest index at which $\sum_{h=\alpha_1+1}^{r} \left[D_2\right]_h$ attains its minimum over $r\in \{\alpha_1+1, \alpha_1+2, \dots, n-1\}$, i.e.,
\begin{equation*}
    \nu = \min \left\{ r \, \big| \, r \in \mathop{\arg\min}\limits_{r} \sum_{h=\alpha_1+1}^{r} \left[D_2\right]_h \right\}, 
    \quad \text{for all } r \in \left\{\alpha_1+1, \alpha_1+2, \dots, n-1 \right\}.
\end{equation*}
\begin{itemize}
    \item $\sum_{h=\alpha_1+1}^{\mu} \left[D_1\right]_h > 0$.

    By $\eqref{eq:1equal D_right12}$, we have that $\sum_{h=\alpha_1+1}^{\nu}\left[D_2\right]_h \ge \sum_{h=\alpha_1+1}^{\nu}\left[D_1\right]_h$. Combining this with the fact that $\sum_{h=\alpha_1+1}^{\mu} \left[D_1\right]_h > 0$, we obtain
    \begin{equation*}
        \sum_{h=\alpha_1+1}^{\nu} \left[D_2\right]_h \geq \sum_{h=\alpha_1+1}^{\nu} \left[D_1\right]_h \geq \sum_{h=\alpha_1+1}^{\mu} \left[D_1\right]_h > 0.
    \end{equation*}
    Based on the definition of $\nu$ and the fact that $\left[D_2\right]_n = \left[\mathscr{D}(\mathbf{z}^{(1)},\mathbf{b}^{[2]})\right]_n = \left[\mathbf{z}^{(1)}\right]_n \geq 0$, we have that $\min_{r} \sum_{h=\alpha_1 + 1}^{r} \left[D_2\right]_h = \sum_{h=\alpha_1 + 1}^{\nu} \left[D_2\right]_h$. Combining this with \eqref{eq:D2 right} and the fact that $\sum_{h=\alpha_1+1}^{\nu} \left[D_2\right]_h > 0$, we obtain
    \begin{equation*}
        D_{\mathrm{right}}^{(1)} = \min\left\{\min_{r} \sum_{h=\alpha_1 + 1}^{r} \left[D_2\right]_h, 0\right\} = \min\left\{\sum_{h=\alpha_1 + 1}^{\nu} \left[D_2\right]_h, 0\right\} = 0.
    \end{equation*}
    Therefore, the condition \eqref{eq:1two-D1-mu} in this case is equivalent to
    \begin{equation*}
        \overline{\mathscr{F}}(1) \leq D_{\mathrm{left}}^{(1)} + D_{\mathrm{now}}^{(1)} = D_{\mathrm{left}}^{(1)} + D_{\mathrm{now}}^{(1)} + D_{\mathrm{right}}^{(1)}.
    \end{equation*}

    \item $\sum_{h=\alpha_1+1}^{\mu} \left[D_1\right]_h \leq 0$.

    By \eqref{eq:1equal D_right12} and the fact that $\sum_{h=\alpha_1+1}^{\mu} \left[D_1\right]_h \leq \sum_{h=\alpha_1+1}^{\nu} \left[D_1\right]_h$, we obtain
    \begin{equation}\label{eq:1-mu-nu}
        \sum_{h=\alpha_1+1}^{\mu} \left[D_1\right]_h \leq \sum_{h=\alpha_1+1}^{\nu} \left[D_1\right]_h \leq \sum_{h=\alpha_1+1}^{\nu} \left[D_2\right]_h.
    \end{equation}
    By \eqref{eq:D2 right} and the fact that $\min_{r} \sum_{h=\alpha_1 + 1}^{r} \left[D_2\right]_h = \sum_{h=\alpha_1 + 1}^{\nu} \left[D_2\right]_h$, we obtain
    \begin{equation}\label{eq:1D-right}
        D_{\mathrm{right}}^{(1)} = \min\left\{\min_{r} \sum_{h=\alpha_1 + 1}^{r} \left[D_2\right]_h, 0\right\} = \min\left\{\sum_{h=\alpha_1 + 1}^{\nu} \left[D_2\right]_h, 0\right\}.
    \end{equation}
    Based on \eqref{eq:1-mu-nu}, \eqref{eq:1D-right}, and the fact that $\sum_{h=\alpha_1+1}^{\mu} \left[D_1\right]_h \leq 0$, we have that $\sum_{h=\alpha_1+1}^{\mu} \left[D_1\right]_h \leq D_{\mathrm{right}}^{(1)}$. Therefore, the condition \eqref{eq:1two-D1-mu} in this case is equivalent to
    \begin{equation*}
        \overline{\mathscr{F}}\!\left(1\right) \leq D_{\mathrm{left}}^{(1)} + D_{\mathrm{now}}^{(1)} + \sum_{h=\alpha_1+1}^{\mu} \left[D_1\right]_h \leq D_{\mathrm{left}}^{(1)} + D_{\mathrm{now}}^{(1)} + D_{\mathrm{right}}^{(1)}.
    \end{equation*}
\end{itemize}

Based on the analysis of the above two cases, we have that the condition \eqref{eq:1two-D1-mu} is equivalent to
\begin{equation*}
    \overline{\mathscr{F}}\!\left(1\right) \leq D_{\mathrm{left}}^{(1)} + D_{\mathrm{now}}^{(1)} + D_{\mathrm{right}}^{(1)}.
\end{equation*}
Combining this inequality with \eqref{eq:F1_initial_value}, we obtain
\begin{equation}\label{eq:F1_mid_value}
    \overline{\mathscr{F}}\!\left(1\right) \in \left[\max \bigl\{\left[\mathbf{b}\right]_1^{(1)} - M^{(1)},\, 0\bigr\}, \min \bigl\{m_1,\, \left[\mathbf{b}\right]_1^{(1)},\, D_{\mathrm{left}}^{(1)} + D_{\mathrm{now}}^{(1)} + D_{\mathrm{right}}^{(1)}\bigr\}\right].
\end{equation}

The following analysis shows that $D_{\mathrm{left}}^{(1)} + D_{\mathrm{now}}^{(1)} + D_{\mathrm{right}}^{(1)} \leq \left[\mathbf{b}\right]_1^{(1)}$, which simplifies the upper bound in \eqref{eq:F1_mid_value}. By definition of $D_{\mathrm{right}}^{(1)}$ in \eqref{eq:D2 right}, we have that
\begin{equation*}
    D_{\mathrm{right}}^{(1)} = \min\left\{\min_{r} \sum_{h=\alpha_1 + 1}^{r} \left[D_2\right]_h, 0\right\} \leq \sum_{h=\alpha_1+1}^{n} \left[D_2\right]_h.
\end{equation*}
Combining this with the definitions of $D_{\mathrm{left}}^{(1)}$ and $D_{\mathrm{now}}^{(1)}$ in \eqref{eq:D2 left and now}, we obtain
\begin{equation}\label{eq:1-DD}
    D_{\mathrm{left}}^{(1)} + D_{\mathrm{now}}^{(1)} + D_{\mathrm{right}}^{(1)} \leq \sum_{h=1}^{n} \left[D_2\right]_h.
\end{equation}
Recall that $D_2 = \mathscr{D}\!\left(\mathbf{z}^{(1)}, \mathbf{b}^{[2]}\right)$. Combining this with the $\mathscr{D}$-operation (Definition~\ref{def:D-vector}), we have
\begin{equation}\label{eq:1D2-b1}
    \sum_{h=1}^{n}\left[D_2\right]_h = \sum_{h=1}^{n}\mathscr{D}\!\left(\mathbf{z}^{(1)}, \mathbf{b}^{[2]}\right) = \sum_{h=1}^{n} \left[\mathbf{z}^{(1)}\right]_h - \sum_{h=1}^{n-1} \left[\mathbf{b}^{[2]}\right]_h = \sum_{h=1}^{m} \left[\mathbf{a}\right]_h - \sum_{h=1}^{n-1} \left[\mathbf{b}^{[2]}\right]_h = \left[\mathbf{b}\right]_1^{(1)}.
\end{equation}
Based on \eqref{eq:1-DD} and \eqref{eq:1D2-b1}, we obtain
\begin{equation*}
    D_{\mathrm{left}}^{(1)} + D_{\mathrm{now}}^{(1)} + D_{\mathrm{right}}^{(1)} \leq \sum_{h=1}^{n} \left[D_2\right]_h = \left[\mathbf{b}\right]_1^{(1)}.
\end{equation*}
Combining this inequality with \eqref{eq:F1_mid_value}, we obtain
\begin{equation*}
    \overline{\mathscr{F}}\!\left(1\right) \in \left[\max \bigl\{\left[\mathbf{b}\right]_1^{(1)} - M^{(1)},\, 0\bigr\}, \min \bigl\{m_1,\, D_{\mathrm{left}}^{(1)} + D_{\mathrm{now}}^{(1)} + D_{\mathrm{right}}^{(1)}\bigr\}\right].
\end{equation*}

Therefore, under the assumption that a bipartite graph exists for the prescribed degree sequences $\mathbf{a}$ and $\mathbf{b}$, after the node $\left[\mathbf{b}\right]_1$ is connected to $\overline{\mathscr{F}}(1)$ nodes in group~$1$ of $\mathbf{a}$, the resulting updated degree sequences remain feasible for bipartite graph generation if and only if the stated interval condition
holds. Hence, we conclude the proof of the lemma.\Halmos
\end{proof}

{\em Induction step: $k \in \{2,3,\dots,p\}$.} Fix any $k \in \{2,3,\dots,p\}$ and assume that $\mathcal{P}(k')$ holds for all
$k' \in \{1,2,\dots,k-1\}$. We establish $\mathcal{P}(k)$ in the following lemma.
\begin{lemma}\label{lemma:k-interval}
    Suppose that a bipartite graph exists for the prescribed degree sequences $\mathbf{a}$ and $\mathbf{b}$, and that after the node $\left[\mathbf{b}\right]_1$ has been connected to nodes in groups~$1$ through $k-1$ of $\mathbf{a}$, the resulting updated degree sequences remain feasible for bipartite graph generation. Then, after the node $\left[\mathbf{b}\right]_1$ is connected to $\mathscr{F}(k)$ nodes in group~$k$, the resulting updated degree sequences remain feasible if and only if the number of connected nodes $\mathscr{F}(k)$ satisfies
    \begin{equation*}
        \mathscr{F} \!\left(k\right) \in \left[\max \bigl\{ \bigl[\mathbf{b}\bigr]_1^{(k)} - M^{(k)},\, 0 \bigr\},\, \min \bigl\{m_k,\, D_{\mathrm{left}}^{(k)} + D_{\mathrm{now}}^{(k)} + D_{\mathrm{right}}^{(k)} \bigr\}\right].
    \end{equation*}
\end{lemma}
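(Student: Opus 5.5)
I will mirror the two-step structure used for Lemma~\ref{lemma:k=1-interval}, with the $k$-step greedy algorithm replacing the standard greedy algorithm. By the induction hypothesis, the values $\widetilde{\mathscr{F}}(1),\dots,\widetilde{\mathscr{F}}(k-1)$ have been fixed so that some completion over groups $k,\dots,p$ is feasible. The current state is therefore $\mathbf{a}^{(k)}$ with remaining degree $[\mathbf{b}]_1^{(k)}$.

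\textbf{Step 1: counting range.} Since $\mathscr{F}(k)$ cannot exceed $m_k$ or the remaining degree, and the groups $k+1,\dots,p$ must absorb whatever is left, we get
\[
\mathscr{F}(k)\in\bigl[\max\{[\mathbf{b}]_1^{(k)}-M^{(k)},0\},\ \min\{m_k,[\mathbf{b}]_1^{(k)}\}\bigr].
\]
For any value in this range, a split of the remaining degree over groups $k+1,\dots,p$ exists.

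\textbf{Step 2: feasibility via greedy completion.} For fixed $\mathscr{F}(k)$, I will compare all completions $\{\mathscr{F}(k')\}_{k'>k}$. The greedy backward completion over groups $p,\dots,k+1$ maximizes every prefix sum of $\mathbf{z}^{[2]}$; this is the same argument that yields \eqref{eq:greedy_Fk}--\eqref{eq:greedy_z}. Hence feasibility for the given $\mathscr{F}(k)$ is equivalent to \eqref{eq:D n&s} holding for this greedy completion.

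I then introduce three vectors. $D_1$ is built from the greedy completion with the chosen $\overline{\mathscr{F}}(k)$. $D_2=\mathscr{D}(\mathbf{z}^{(k)},\mathbf{b}^{[2]})$ is the vector defining $D^{(k)}_{\mathrm{left}}$, $D^{(k)}_{\mathrm{now}}$ and $D^{(k)}_{\mathrm{right}}$. $D_3$ is built from the $k$-step greedy algorithm, which is feasible by Lemma~\ref{lemma:k-step_greedy_algorithm}, so its prefix sums are nonnegative.

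Since groups $1,\dots,k-1$ were handled identically and Remark~\ref{F(K)-note} shows that $\mathscr{F}(k')$ only touches column $\alpha_{k'}$, I expect the following relations:
\begin{itemize}
\item $D_1$, $D_2$ and $D_3$ agree on components $1,\dots,\alpha_k-1$.
\item $[D_1]_{\alpha_k}=D^{(k)}_{\mathrm{now}}-\overline{\mathscr{F}}(k)$.
\item $[D_1]_h\le[D_2]_h$ for every $h>\alpha_k$.
\end{itemize}
The prefix conditions for $r<\alpha_k$ then follow from $D_3$. The remaining conditions split as $r=\alpha_k$ and $r>\alpha_k$.

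For the bound, I use the minimizers $\mu$ (for $D_1$) and $\nu$ (for $D_2$) with the same two-case argument as in the base case. This gives $\overline{\mathscr{F}}(k)\le D^{(k)}_{\mathrm{left}}+D^{(k)}_{\mathrm{now}}+D^{(k)}_{\mathrm{right}}$. Finally, summing all of $D_2$ gives $[\mathbf{b}]_1^{(k)}$, after which the $[\mathbf{b}]_1^{(k)}$ term in the upper bound can be dropped.

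\textbf{Main obstacle.} The two-case argument as written only proves necessity. In the case $\sum_{h=\alpha_k+1}^{\mu}[D_1]_h\le0$, bounding this sum by $D^{(k)}_{\mathrm{right}}$ from above gives $\overline{\mathscr{F}}(k)\le D^{(k)}_{\mathrm{left}}+D^{(k)}_{\mathrm{now}}+D^{(k)}_{\mathrm{right}}$ as a consequence of feasibility. Sufficiency needs the converse, roughly that $\min_r\sum_{h=\alpha_k+1}^{r}[D_1]_h$ is not much below $D^{(k)}_{\mathrm{right}}$.

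I would try to prove this as follows. The greedy reductions in columns $\alpha_{k'}$ with $k'>k$ change $D_1$ relative to $D_2$, but the same columns are reduced in $D_3$, whose prefix sums are nonnegative. Comparing $D_1$ with $D_3$, the only differences lie in column $\alpha_k$ and in how the leftover degree is distributed, so the prefix sums of $D_1$ over $r>\alpha_k$ should equal those of $D_3$ minus a term controlled by $\overline{\mathscr{F}}(k)$ and $D^{(k)}_{\mathrm{right}}$. Making this exact is the step I expect to need the most care. It also requires a check that the left, now and right decomposition is relative to $\mathbf{z}^{(k)}$ rather than $\mathbf{z}$, so that the already fixed reductions in columns $\alpha_1,\dots,\alpha_{k-1}$ are accounted for in $D^{(k)}_{\mathrm{left}}$ and not double counted.
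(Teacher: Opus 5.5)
You follow the paper's route exactly: the counting range, the greedy completion, the vectors $D_1,D_2,D_3$, and the $\mu/\nu$ split. Your diagnosis is also correct. When $\sum_{h=\alpha_k+1}^{\mu}[D_1]_h\le 0$, the chain only shows that feasibility implies $\overline{\mathscr{F}}(k)\le D^{(k)}_{\mathrm{left}}+D^{(k)}_{\mathrm{now}}+D^{(k)}_{\mathrm{right}}$. The paper's proof of this lemma just invokes ``the same analysis as in the base case,'' whose second case contains exactly this one-directional step, so it does not supply the converse either.

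Your proposal therefore leaves the ``if'' direction unproved, and the repair you sketch aims at the wrong target. The tail of $D_1$ can genuinely fall below $D^{(k)}_{\mathrm{right}}$ for a feasible choice. Take $\mathbf{a}=(1,1,2,2)$, $\mathbf{b}=(3,1,1,1)$, $k=1$ and $\overline{\mathscr{F}}(1)=2$. Then $D_1=(1,0,-1)$ and $D_2=(3,1,-1,0)$, so $\sum_{h=2}^{\mu}[D_1]_h=-1<0=D^{(1)}_{\mathrm{right}}$, even though this choice is feasible. What must be shown is not a comparison of tails. It is that any such dip comes from a part of the greedy count that does not depend on $\mathscr{F}(k)$. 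Likewise, $D_1$ and $D_3$ are not related by subtracting a correction term.

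The missing ingredient is an exact formula for the prefix sums of $D_1$. Put $N_{>r}=\sum_{k':\alpha_{k'}>r}m_{k'}$. For $r\ge\alpha_k$, every group of degree $>r$ has index $>k$, and the backward greedy fills those groups first. So the number of connections it places in groups $k,\dots,p$ of degree at most $r$ is $[\mathbf{b}]_1^{(k)}-\min\{[\mathbf{b}]_1^{(k)}-\overline{\mathscr{F}}(k),N_{>r}\}=\max\{\overline{\mathscr{F}}(k),[\mathbf{b}]_1^{(k)}-N_{>r}\}$. Hence
\[
\sum_{h=1}^{r}[D_1]_h=\min\Bigl\{\sum_{h=1}^{r}[D_2]_h-\overline{\mathscr{F}}(k),\ \sum_{h=1}^{r}[D_2]_h-[\mathbf{b}]_1^{(k)}+N_{>r}\Bigr\},\qquad r\in\{\alpha_k,\dots,n-1\}.
\]
The same identity holds for $D_3$ with its own value $\widehat{\mathscr{F}}(k)=\max\{[\mathbf{b}]_1^{(k)}-M^{(k)},0\}$. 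The prefix sums of $D_3$ are nonnegative by the $k$-step greedy lemma, so the second entry, which does not depend on $\mathscr{F}(k)$, is nonnegative. Therefore, for each such $r$, $\sum_{h\le r}[D_1]_h\ge 0$ holds if and only if $\overline{\mathscr{F}}(k)\le\sum_{h\le r}[D_2]_h$.

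The case $r<\alpha_k$ is handled by $D_3$, as you already note. Since $\sum_{h\le n}[D_2]_h=[\mathbf{b}]_1^{(k)}\ge\overline{\mathscr{F}}(k)$, all the conditions together are equivalent to $\overline{\mathscr{F}}(k)\le\min_{r\in\{\alpha_k,\dots,n\}}\sum_{h\le r}[D_2]_h=D^{(k)}_{\mathrm{left}}+D^{(k)}_{\mathrm{now}}+D^{(k)}_{\mathrm{right}}$. This gives both directions at once and makes $\mu$, $\nu$ and the comparison $[D_1]_h\le[D_2]_h$ unnecessary.

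Your worry about $\mathbf{z}^{(k)}$ versus $\mathbf{z}$ also disappears. Since $\alpha_{k'}<\alpha_k$ for $k'<k$, the already fixed reductions sit in columns before $\alpha_k$. They are counted once, inside $D^{(k)}_{\mathrm{left}}$.
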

\begin{proof}{Proof of Lemma~\ref{lemma:k-interval}.}
Under the induction hypothesis, there exists a feasible collection $\{\widetilde{\mathscr{F}}(k')\}_{k'=1}^{k-1}$ satisfying the interval conditions prescribed by $\mathcal{P}(k')$ for all
$k' \in \{1,2,\dots,k-1\}$, such that the resulting updated degree sequences remain feasible after the node $\left[\mathbf{b}\right]_1$ has been connected to nodes in groups~$1$ through $k-1$ of $\mathbf{a}$. The proof of $\mathcal{P}(k)$ follows the same two-step structure as in the base case.

We first derive an interval of $\mathscr{F}(k)$ for which there exists an integer collection $\{\mathscr{F}(k')\}_{k'=k+1}^{p}$ with $0 \leq \mathscr{F}(k') \leq m_{k'}$ such that the remaining degree of the node $\left[\mathbf{b}\right]_1$ can be allocated across the remaining groups of $\mathbf{a}$. By the same analysis as in the base case, we obtain
\begin{equation}\label{eq:Fk_initial_value}
    \mathscr{F} \!\left(k\right) \in \left[\max\bigl\{\left[\mathbf{b}\right]_{1}^{(k)} - M^{(k)}, 0\bigr\},\, \min \bigl\{m_k,\left[\mathbf{b}\right]_1^{(k)}\bigr\}\right].
\end{equation}

We then characterize the subset of the interval in \eqref{eq:Fk_initial_value} for which there exists a feasible collection $\{\widetilde{\mathscr{F}}(k')\}_{k'=k+1}^{p}$ such that the resulting updated degree sequences $\mathbf{a}^{[2]}$ and $\mathbf{b}^{[2]}$ remain feasible. Applying Lemma~\ref{lemma:necessary and sufficient condition} together with the $k$-step greedy algorithm, and using the same analysis as in the base case on the updated degree sequences after processing groups~$1$ through $k-1$, we obtain
\begin{equation*}
    \mathscr{F}\!\left(k\right) \leq D_{\mathrm{left}}^{(k)} + D_{\mathrm{now}}^{(k)} + D_{\mathrm{right}}^{(k)}.
\end{equation*}
Combining this inequality with \eqref{eq:Fk_initial_value}, we obtain
\begin{equation}\label{eq:Fk_mid_value}
    \mathscr{F}\!\left(k\right) \in \left[\max \bigl\{\left[\mathbf{b}\right]_1^{(k)} - M^{(k)},\, 0\bigr\}, \min \bigl\{m_k,\, \left[\mathbf{b}\right]_1^{(k)},\, D_{\mathrm{left}}^{(k)} + D_{\mathrm{now}}^{(k)} + D_{\mathrm{right}}^{(k)}\bigr\}\right].
\end{equation}

By the same analysis as in the base case, we further have that $D_{\mathrm{left}}^{(k)} + D_{\mathrm{now}}^{(k)} + D_{\mathrm{right}}^{(k)} \leq \left[\mathbf{b}\right]_1^{(k)}$. Combining this inequality with \eqref{eq:Fk_mid_value}, we obtain
\begin{equation*}
    \mathscr{F}\!\left(k\right) \in \left[\max \bigl\{\left[\mathbf{b}\right]_1^{(k)} - M^{(k)},\, 0\bigr\}, \min \bigl\{m_k,\, D_{\mathrm{left}}^{(k)} + D_{\mathrm{now}}^{(k)} + D_{\mathrm{right}}^{(k)}\bigr\}\right].
\end{equation*}

Therefore, under the induction hypothesis, the resulting updated degree sequences remain feasible after $\left[\mathbf{b}\right]_1$ is connected to $\mathscr{F}(k)$ nodes in group~$k$ if and only if the stated interval condition
holds. Hence, we conclude the proof of the lemma.\Halmos
\end{proof}

{\em Conclusion.} By mathematical induction, $\mathcal{P}(k)$ holds for all $k \in \{1,2,\dots,p\}$. Therefore, after the node $\left[\mathbf{b}\right]_1$ has been connected to nodes in groups~$1$ through $p$ of $\mathbf{a}$, the resulting updated degree sequences $\mathbf{a}^{[2]}$ and $\mathbf{b}^{[2]}$ remain feasible for bipartite graph generation if and only if for each group $k$ of $\mathbf{a}$, the number of connected nodes $\mathscr{F}(k)$ satisfies
\begin{equation*}
    \mathscr{F} \!\left(k\right) \in \left[\max \bigl\{ \bigl[\mathbf{b}\bigr]_1^{(k)} - M^{(k)},\, 0 \bigr\},\, \min \bigl\{m_k,\, D_{\mathrm{left}}^{(k)} + D_{\mathrm{now}}^{(k)} + D_{\mathrm{right}}^{(k)} \bigr\}\right].
\end{equation*}
Hence, we conclude the proof of the theorem.\Halmos
\end{proof}

\section{Proof of Proposition \ref{prop:N(a,b,F(k))}}\label{appx:T2-proof}
\begin{proof}{Proof.}
\eqref{eq:N_K-N_Ki} follows from the fact that $\mathscr{F}\!\left(k\right) \in [\ell_k,u_k]$ and the definition of $\mathscr{F}_i\!\left(k\right)$. \eqref{eq:N(a,b,k)} follows from the fact that the process moves to group $k+1$ in $\mathbf{a}^{[j]}$ when the node $\left[\mathbf{b}\right]_j$ has completed its connections to the nodes in group~$k$ for $k<p$, and moves to group $1$ in $\mathbf{a}^{[j+1]}$ when $\left[\mathbf{b}\right]_j$ has completed its connections to the nodes in group~$p$. Hence, we conclude the proof of the proposition.\Halmos
\end{proof}

\section{Pseudocode of the EBGS algorithm}\label{appx:alg-EBGS}

The pseudocode of the EBGS algorithm is presented in Algorithm~\ref{alg:EBGS}.

\begin{algorithm}[htpb]
\caption{Efficient Bipartite Graph Sampling (EBGS) Algorithm}
\label{alg:EBGS}
{\renewcommand{\baselinestretch}{0.75}\selectfont
\begin{algorithmic}[1]
\Require Degree sequences $\mathbf{a} \in \mathbb{N}_{+}^{m}$ and $\mathbf{b} \in \mathbb{N}_{+}^{n}$, an initially empty bipartite graph $\mathbf{G}\!\left(\mathbf{a}, \mathbf{b}\right)$
\Ensure A bipartite graph $\mathbf{G}\!\left(\mathbf{a}, \mathbf{b}\right)$

\For{$j = 1$ \textbf{to} $n$}
  \State Group $\mathbf{a}^{[j]}$ into $p$ groups $(\alpha_k, m_k)$ with $\alpha_1 < \alpha_2 < \cdots < \alpha_p$

  \For{$k = 1$ \textbf{to} $p$}
    \State Compute the feasible condition $[\ell_k,u_k]$ via \eqref{eq:Fk}
    \State Obtain the approximate weight vector $\mathbf{w}^{\mathrm{EB}}$ via \eqref{eq:weight-for-efficient}
    \State Sample $\mathscr{F} \!\left(k\right) \in [\ell_k,u_k]$ according to $\mathbf{w}^{\mathrm{EB}}$
    \State Uniformly select $\mathscr{F} \!\left(k\right)$ distinct nodes in group $k$ and connect them to $\left[\mathbf{b}\right]_j$ in $\mathbf{G}\!\left(\mathbf{a}, \mathbf{b}\right)$
  \EndFor
\EndFor

\State \textbf{return} $\mathbf{G}\!\left(\mathbf{a}, \mathbf{b}\right)$
\end{algorithmic}
}
\end{algorithm}

\section{Importance Sampling for the EBGS algorithm}\label{appx:IS-EBGS}

We apply importance sampling to correct for the nonuniformity induced by the EBGS algorithm when the objective is to compute expectations with respect to the uniform distribution on $\mathcal{G}\!\left(\mathbf{a},\mathbf{b}\right)$, i.e.,
\begin{equation*}
    \theta \triangleq \mathbb{E}_{\mathbf{w}^{\mathrm{UB}}} \!\Big[f\big(\mathbf{G}\!\left(\mathbf{a},\mathbf{b}\right)\big)\Big],
\end{equation*}
where $f(\cdot)$ is a function of interest. Computing expectations with respect to the uniform distribution has many applications in practice. For instance, as discussed in \cite{Glasserman2016}, in the context of interbank networks one can evaluate various measures of systemic risk $f(\cdot)$ on each sampled network and then average across multiple draws to obtain the expected risk measures over $\mathcal{G}\!\left(\mathbf{a},\mathbf{b}\right)$.

Specifically, the importance weight associated with $\mathbf{G}\!\left(\mathbf{a},\mathbf{b}\right) \in \mathcal{G}\!\left(\mathbf{a},\mathbf{b}\right)$ is given by
\begin{equation*}
    w_{\mathbf{G}} = \frac{1}{\left|\mathcal{G}(\mathbf{a},\mathbf{b})\right| p_{\mathbf{G}}},
\end{equation*}
where $p_{\mathbf{G}}$ is the sampling probability of $\mathbf{G}\!\left(\mathbf{a},\mathbf{b}\right)$ under the EBGS algorithm, expressed as
\begin{equation*}
    p_{\mathbf{G}} = \prod_{j=1}^n \prod_{k=1}^{p} \frac{\left[\mathbf{w}^{\mathrm{EB}}\right]_i}{\left(\sum_{s=1}^{\mu_k-\ell_k+1}\left[\mathbf{w}^{\mathrm{EB}}\right]_s\right)\cdot\binom{m_k}{\mathscr{F}_i \!\left(k\right)}}.
\end{equation*}
This expression for $p_{\mathbf{G}}$ follows from the fact that the normalized weight represents the probability of selecting $\mathscr{F}_i \!\left(k\right)$ within the interval, while the factor $1/\binom{m_k}{\mathscr{F}_i \!\left(k\right)}$ accounts for the uniform selection of a configuration among the $\binom{m_k}{\mathscr{F}_i \!\left(k\right)}$ possible configurations. Consequently, to estimate $\theta$ using the EBGS algorithm, we generate independent samples $\mathbf{G}_1\!\left(\mathbf{a},\mathbf{b}\right),\ldots,\mathbf{G}_N\!\left(\mathbf{a},\mathbf{b}\right)$ and construct the estimator
\begin{equation*}
    \widetilde{\theta} = \frac{1}{N} \sum_{t=1}^{N} w_{\mathbf{G}} f\big(\mathbf{G}_t\!\left(\mathbf{a},\mathbf{b}\right)\big).
\end{equation*}
However, in large-scale problems, computing $\left|\mathcal{G}\!\left(\mathbf{a},\mathbf{b}\right)\right|$ by Proposition~\ref{prop:N(a,b,F(k))} may be infeasible due to resource constraints. In such cases, a modified estimator (as in \citealt{Blitzstein2011}) can be employed:
\begin{equation*}
    \widehat{\theta} = \frac{\sum_{t=1}^{N} w_{\mathbf{G}} f\big(\mathbf{G}_t\!\left(\mathbf{a},\mathbf{b}\right)\big)}{\sum_{t=1}^{N}w_{\mathbf{G}}},
\end{equation*}
where the unknown factor $\left|\mathcal{G}\!\left(\mathbf{a},\mathbf{b}\right)\right|$ cancels out. As a ratio estimator, $\widehat{\theta}$ is biased but consistent as $N \to \infty$ \citep{Glasserman2023}.

\section{Technical details for the directed graph setting}\label{appx:directed-details}

This section collects technical details for the directed graph setting, including the self-loop exclusion constraint, the proofs of the main theoretical results, and the pseudocode of the proposed directed graph algorithms.

\subsection{Self-loop exclusion constraint for directed graphs}\label{appx:exclude-self-loops}

To exclude self-loops, each $\left[\mathbf{b}\right]_j$ must not be connected to its counterpart node in $\mathbf{a}^{[j]}$. Specifically, for notational convenience, let $\left[\mathbf{a}\right]_j$ denote the counterpart node of $\left[\mathbf{b}\right]_j$ in $\mathbf{a}^{[j]}$, and let $k_j$ be the index of the group containing $\left[\mathbf{a}\right]_j$. Since $\left[\mathbf{b}\right]_j$ cannot connect to $\left[\mathbf{a}\right]_j$, the lower bound of $\mathscr{F}\!\left(k_j\right)$, denoted by $\ell_{k_j}$, must satisfy $\ell_{k_j} \leq m_{k_j} - 1$, where $m_{k_j}$ is the number of nodes in group $k_j$. Otherwise, if $\ell_{k_j} = m_{k_j}$, then $\left[\mathbf{b}\right]_j$ would be forced to connect to all nodes in group $k_j$, including $\left[\mathbf{a}\right]_j$, thereby creating a self-loop. Recall from Theorem~\ref{thm:interval} that
\begin{equation*}
    \ell_{k_j} = \max \left\{\left[\mathbf{b}\right]_j^{(k_j)} - M^{(k_j)}, 0\right\}.
\end{equation*}
Therefore, the self-loop exclusion constraint can be expressed as
\begin{equation*}
    \begin{split}
        \left[\mathbf{b}\right]_j^{(k_j)} - M^{(k_j)} &\leq m_{k_j} - 1,\\
       \left[\mathbf{b}\right]_j - \Delta - M^{(k_j)} &\leq m_{k_j} - 1,\\
        \left[\mathbf{b}\right]_j - M^{(k_j)} - m_{k_j} + 1 &\leq \Delta,
    \end{split}
\end{equation*}
where $\Delta  = {\textstyle \sum_{k=1}^{k_j-1}\mathscr{F}\!\left(k\right)}$ denotes the total number of nodes in the first $k_j-1$ groups of $\mathbf{a}^{[j]}$ that have already been connected to $\left[\mathbf{b}\right]_j$. In principle, one could attempt to enforce the above inequality by retrospectively adjusting $\Delta$ whenever it is violated. However, doing so would require backtracking over previously constructed partial connections, leading to significant computational inefficiency.

\subsection{Proof of Theorem \ref{thm:directed-interval}}\label{appx:directed-interval-proof}

\begin{proof}{Proof.}
For the augmented degree sequences, since $\left[\mathbf{b}\right]_j$ must connect to $\left[\mathbf{a}\right]_j$, the upper bound of $\mathscr{F}\!\left(k_j\right)$, denoted by $u_{k_j}$, must satisfy $u_{k_j} \geq 1$. Otherwise, if $u_{k_j} = 0$, then $\left[\mathbf{b}\right]_j$ would be forced not to connect to any node in group $k_j$, including $\left[\mathbf{a}\right]_j$, thereby preventing the creation of a self-loop. Recall from Theorem~\ref{thm:interval} that $u_{k_j} = \min \left\{D_{\mathrm{left}}^{(k_j)} + D_{\mathrm{now}}^{(k_j)} + D_{\mathrm{right}}^{(k_j)}, m_{k_j}\right\}$, thus the self-loop connection constraint can be expressed as
\begin{equation}\label{eq:directed_delta}
    \begin{split}
        D_{\mathrm{left}}^{(k_j)} + D_{\mathrm{now}}^{(k_j)} + D_{\mathrm{right}}^{(k_j)} &\geq 1,\\
        D_{\mathrm{left}}^{(1,k_j)} - \Delta + D_{\mathrm{now}}^{(1,k_j)} + D_{\mathrm{right}}^{(1,k_j)}  &\geq 1,\\
        D_{\mathrm{left}}^{(1,k_j)} + D_{\mathrm{now}}^{(1,k_j)} + D_{\mathrm{right}}^{(1,k_j)} - 1 &\geq \Delta,
    \end{split}
\end{equation}
where $\Delta  = {\textstyle \sum_{k=1}^{k_j-1}\mathscr{F}\!\left(k\right)}$ denotes the total number of nodes in the first $k_j-1$ groups of $\mathbf{a}^{[j]}$ already  connected to $\left[\mathbf{b}\right]_j$,
\begin{equation*}
    D_{\mathrm{left}}^{(1,k_j)} = \sum_{h=1}^{{\alpha_{k_j}}-1} \left[\mathscr{D} \!\left(\mathbf{z}^{(1)}, \mathbf{b}^{[j+1]}\right)\right]_h, \quad D_{\mathrm{now}}^{(1,k_j)}= \left[\mathscr{D} \!\left(\mathbf{z}^{(1)}, \mathbf{b}^{[j+1]}\right)\right]_{\alpha_{k_j}},
\end{equation*}
and
\begin{equation*}
    D_{\mathrm{right}}^{(1,k_j)}= \min\left\{\min_{r} \sum_{h=\alpha_{k_j} + 1}^{r} \left[\mathscr{D} \!\left(\mathbf{z}^{(1)}, \mathbf{b}^{[j+1]}\right)\right]_h, 0\right\}.
\end{equation*}
The second inequality follows from the fact that $D_{\mathrm{left}}^{(k_j)} = D_{\mathrm{left}}^{(1,k_j)} - \Delta$, $D_{\mathrm{now}}^{(k_j)} = D_{\mathrm{now}}^{(1,k_j)}$ and $D_{\mathrm{right}}^{(k_j)} = D_{\mathrm{right}}^{(1,k_j)}$. Both $D_{\mathrm{left}}^{(1,k_j)}$ and $D_{\mathrm{left}}^{(k_j)}$ consider the sum of the first $\alpha_{k_j}-1$ components of $\mathbf{z}^{(1)}$ and $\mathbf{z}^{(k)}$, respectively; hence, they differ by $\Delta$, which represents the total reduction in these components from $\mathbf{z}^{(1)}$ to $\mathbf{z}^{(k)}$. In contrast, $D_{\mathrm{now}}^{(1,k_j)}$ and $D_{\mathrm{now}}^{(k_j)}$ consider the $\alpha_{k_j}$-th component and therefore remain unaffected by the changes in the preceding components, while $D_{\mathrm{right}}^{(1,k_j)}$ and $D_{\mathrm{right}}^{(k_j)}$ are similar.

To enforce the self-loop connection constraint for the augmented degree sequences, we first compute $u_{\Delta} = D_{\mathrm{left}}^{(1,k_j)} + D_{\mathrm{now}}^{(1,k_j)} + D_{\mathrm{right}}^{(1,k_j)} - 1$ before $\left[\mathbf{b}\right]_j$ starts connecting to the nodes in $\mathbf{a}^{[j]}$. This quantity represents the upper limit on the total number of nodes that $\left[\mathbf{b}\right]_j$ can connect to within the first $k_j-1$ groups. Then, we incorporate a dynamically updated upper limit $u_{\Delta}^{(k)}$ into the upper bound of each $\mathscr{F}\!\left(k\right)$ for $k=1,\ldots,k_j-1$, where $u_{\Delta}^{(k)} = u_{\Delta} - \sum_{k'=1}^{k-1}\mathscr{F}\!\left(k'\right)$ and $u_{\Delta}^{(1)} = u_{\Delta}$. Accordingly, the upper bound of $\mathscr{F}\!\left(k\right)$ for $k=1,\ldots,k_j-1$ is modified as
\begin{equation}\label{eq:self-loop}
    u_k = \min \left\{D_{\mathrm{left}}^{(k)} + D_{\mathrm{now}}^{(k)} + D_{\mathrm{right}}^{(k)}, m_{k}, u_{\Delta}^{(k)}\right\}.
\end{equation}
Note that $u_{\Delta}^{(k)}$ is greater than or equal to the lower bound $\ell_k$ of $\mathscr{F}\!\left(k\right)$ for $k=1,\ldots,k_j-1$, that is, $u_{\Delta}^{(k)} \geq \max \left\{\left[\mathbf{b}\right]_j^{(k)} - M^{(k)}, 0\right\}$, because if $\ell_k = \left[\mathbf{b}\right]_j^{(k)} - M^{(k)}$, then $\left[\mathbf{b}\right]_j$ would be forced to connect to all nodes in group $k_j$, including $\left[\mathbf{a}\right]_j$; in addition, $u_{\Delta}^{(k)} \geq 0$ holds trivially. Therefore, the proposed dynamic adjustment guarantees that the upper bound of $\mathscr{F}\!\left(k_j\right)$ satisfies $u_{k_j} \geq 1$. At the same time, the lower bound of $\mathscr{F}\!\left(k_j\right)$ is updated as
\begin{equation}\label{eq:self-loop2}
    \ell_{k_j} = \max \left\{\left[\mathbf{b}\right]_j^{(k_j)} - M^{(k_j)}, 1\right\}.
\end{equation}
Both \eqref{eq:self-loop} and \eqref{eq:self-loop2} ensure that $\left[\mathbf{b}\right]_j$ forms a self-loop with $\left[\mathbf{a}\right]_j$ under the augmented degree sequences, or equivalently, that $\left[\mathbf{b}\right]_j$ does not connect to $\left[\mathbf{a}\right]_j$ under the original degree sequences.

To enforce the single-degree connection constraint, we first count the number of degree-one nodes in group $1$ that have not yet formed a self-loop, denoted by $n_{\mathrm{one}}$, and incorporate this quantity into the upper bound of $\mathscr{F}\!\left(1\right)$, which is then updated as
\begin{equation}\label{eq:single-degree}
    u_1 =
    \begin{cases}
        \min \left\{D_{\mathrm{left}}^{(1)} + D_{\mathrm{now}}^{(1)} + D_{\mathrm{right}}^{(1)},\, m_{1} - n_{\mathrm{one}} + 1\right\}, & \text{if } k_j = 1, \\
        \min \left\{D_{\mathrm{left}}^{(1)} + D_{\mathrm{now}}^{(1)} + D_{\mathrm{right}}^{(1)},\, m_{1} - n_{\mathrm{one}},\, u_{\Delta}^{(1)}
        \right\}, & \text{if } k_j > 1.
    \end{cases}
\end{equation}

Hence, we conclude the theorem by \eqref{eq:self-loop}, \eqref{eq:self-loop2}, and \eqref{eq:single-degree}.\Halmos
\end{proof}

\subsection{Pseudocode of the DGE algorithm}\label{appx:alg-DGE}

The pseudocode of the DGE algorithm is presented in Algorithm~\ref{alg:DGE}.

\begin{algorithm}[htpb]
\caption{Directed Graph Enumeration (DGE) Algorithm}
\label{alg:DGE}
{\renewcommand{\baselinestretch}{0.625}\selectfont
\begin{algorithmic}[1]
\Require Degree sequences $\mathbf{a} \in \mathbb{N}_{+}^{m}$ and $\mathbf{b} \in \mathbb{N}_{+}^{n}$
\Ensure The set $\mathcal{G} \!\left(\mathbf{a}, \mathbf{b}\right)$ of all directed graphs consistent with the degree sequences $\mathbf{a}$ and $\mathbf{b}$

\State Initialize $\mathcal{G}(\mathbf{a},\mathbf{b}) \gets \emptyset$, $\widetilde{\mathbf{a}} \gets \mathbf{a} + \mathbf{1}$, $\widetilde{\mathbf{b}} \gets \mathbf{b} + \mathbf{1}$, an empty bipartite graph $\mathbf{G}(\widetilde{\mathbf{a}}, \widetilde{\mathbf{b}})$, two empty queues $\mathcal{Q}_{\mathrm{outer}}$ and $\mathcal{Q}_{\mathrm{inner}}$
\State Enqueue the pair $(\mathbf{G}(\widetilde{\mathbf{a}}, \widetilde{\mathbf{b}}), 1)$ into $\mathcal{Q}_{\mathrm{outer}}$ \Comment{Node index $j=1$}

\While{$\mathcal{Q}_{\mathrm{outer}} \neq \emptyset$} \Comment{Outer breadth-first search over $n$ nodes}
  \State Dequeue the first pair $(\mathbf{G}(\widetilde{\mathbf{a}}, \widetilde{\mathbf{b}}), j)$ from $\mathcal{Q}_{\mathrm{outer}}$ and record the node index $j$

  \If{$j > n$}
    
    \State Remove all self-loops from $\mathbf{G}(\widetilde{\mathbf{a}}, \widetilde{\mathbf{b}})$ to obtain a directed graph consistent with $\mathbf{a}$ and $\mathbf{b}$
    \State Add the resulting directed graph to $\mathcal{G} \!\left(\mathbf{a}, \mathbf{b}\right)$
    \State \textbf{continue}
  \EndIf
  \State Group $\widetilde{\mathbf{a}}^{[j]}$ into $p$ groups $(\alpha_k, m_k)$ with $\alpha_1 < \cdots < \alpha_p$
  \State Enqueue the pair $(\mathbf{G}(\widetilde{\mathbf{a}}, \widetilde{\mathbf{b}}), 1)$ into $\mathcal{Q}_{\mathrm{inner}}$ \Comment{Group index $k=1$}

  \While{$\mathcal{Q}_{\mathrm{inner}} \neq \emptyset$} \Comment{Inner breadth-first search over $p$ groups for each node}
    \State Dequeue the first pair $(\mathbf{G}(\widetilde{\mathbf{a}}, \widetilde{\mathbf{b}}), k)$ from $\mathcal{Q}_{\mathrm{inner}}$ and record the group index $k$
    \If{$k > p$}
      \If{\textsc{Feasible}$(\mathbf{G}(\widetilde{\mathbf{a}}, \widetilde{\mathbf{b}}), j)$}
      \Comment{The feasibility verification step}
        \State Enqueue the pair $(\mathbf{G}(\widetilde{\mathbf{a}}, \widetilde{\mathbf{b}}), j + 1)$ into $\mathcal{Q}_{\mathrm{outer}}$
      \EndIf
    \State \textbf{continue}
    \EndIf
    
    \State Compute the interval $[\ell_k, u_k]$ via \eqref{eq:directed-interval-1}--\eqref{eq:directed-interval-3}
    \For{$\mathscr{F}\!\left(k\right) = \ell_k,\,\dots,u_k$}
      \If{$k=k_j$}
        \State Generate all $\binom{m_k}{\mathscr{F}\!\left(k\right)}$ connection configurations with self-loop
      \Else
        \State Generate all $\binom{m_k}{\mathscr{F}\!\left(k\right)}$ connection configurations
      \EndIf
      
      \For{each connection configuration}
        \State Copy $\mathbf{G}(\widetilde{\mathbf{a}}, \widetilde{\mathbf{b}})$ and denote it by $\mathbf{G}'(\widetilde{\mathbf{a}}, \widetilde{\mathbf{b}})$
        \State Append the connection configuration to $\mathbf{G}'(\widetilde{\mathbf{a}}, \widetilde{\mathbf{b}})$
        \State Enqueue the pair $(\mathbf{G}'(\widetilde{\mathbf{a}}, \widetilde{\mathbf{b}}), k+1)$ into $\mathcal{Q}_{\mathrm{inner}}$
      \EndFor
    \EndFor
  \EndWhile
\EndWhile

\State \textbf{return} $\mathcal{G} \!\left(\mathbf{a}, \mathbf{b}\right)$
\end{algorithmic}
}
\end{algorithm}

\subsection{Proof of Proposition \ref{prop:ND(a,b,F(k))}}\label{appx:T3-proof}

\begin{proof}{Proof.}
\eqref{eq:ND_K-N_Ki} follows from the fact that $\mathscr{F}\!\left(k\right) \in [\ell_k,u_k]$ and there are $\binom{m_k}{\mathscr{F}_i\!\left(k\right)}$ connection configurations for $\mathscr{F}_i\!\left(k\right)$. \eqref{eq:ND(a,b,k)} follows from the fact that the process moves to group $k+1$ in $\mathbf{a}^{[j]}$ when the node $\left[\mathbf{b}\right]_j$ has completed its connections to the nodes in group~$k$ for $k<p$, and moves to group $1$ in $\mathbf{a}^{[j+1]}$ when $\left[\mathbf{b}\right]_j$ has completed its connections to the nodes in group~$p$. Hence, we conclude the proof of the proposition.\Halmos
\end{proof}

\subsection{Pseudocode of the UDGS algorithm}\label{appx:alg-UDGS}

The pseudocode of the UDGS algorithm is presented in Algorithm~\ref{alg:UDGS}.

\begin{algorithm}[htpb]
\caption{Uniform Directed Graph Sampling (UDGS) Algorithm}
\label{alg:UDGS}
{\renewcommand{\baselinestretch}{0.625}\selectfont
\begin{algorithmic}[1]
\Require Degree sequences $\mathbf{a} \in \mathbb{N}_{+}^{m}$ and $\mathbf{b} \in \mathbb{N}_{+}^{n}$
\Ensure A directed graph consistent with $\mathbf{a}$ and $\mathbf{b}$

\State Initialize $\widetilde{\mathbf{a}} \gets \mathbf{a} + \mathbf{1}$, $\widetilde{\mathbf{b}} \gets \mathbf{b} + \mathbf{1}$, an empty bipartite graph $\mathbf{G}(\widetilde{\mathbf{a}}, \widetilde{\mathbf{b}})$
\For{$j = 1$ \textbf{to} $n$}
  \State Group $\widetilde{\mathbf{a}}^{[j]}$ into $p$ groups $(\alpha_k, m_k)$ with $\alpha_1 < \alpha_2 < \cdots < \alpha_p$

  \For{$k = 1$ \textbf{to} $p$}
    \State Compute the interval $[\ell_k,u_k]$ via \eqref{eq:directed-interval-1}--\eqref{eq:directed-interval-3}
    \State Generate all connection configurations for each $\mathscr{F} \!\left(k\right) \in [\ell_k,u_k]$
    \State Obtain the exact weight vector $\mathbf{w}^{\mathrm{UD}}$ via \eqref{eq:directed_weight-for-uniform}
    \State Sample one connection configuration according to $\mathbf{w}^{\mathrm{UD}}$
    \State Realize the sampled connection configuration for the node $[\widetilde{\mathbf{b}}]_j$ in $\mathbf{G}(\widetilde{\mathbf{a}}, \widetilde{\mathbf{b}})$
  \EndFor
\EndFor

\State Remove all self-loops from $\mathbf{G}(\widetilde{\mathbf{a}}, \widetilde{\mathbf{b}})$ to obtain a directed graph consistent with $\mathbf{a}$ and $\mathbf{b}$

\State \textbf{return} the resulting directed graph
\end{algorithmic}
}
\end{algorithm}

\subsection{Pseudocode of the EDGS algorithm}\label{appx:alg-EDGS}

The pseudocode of the EDGS algorithm is presented in Algorithm~\ref{alg:EDGS}.

\begin{algorithm}[htpb]
\caption{Efficient Directed Graph Sampling (EDGS) Algorithm}
\label{alg:EDGS}
{\renewcommand{\baselinestretch}{0.75}\selectfont
\begin{algorithmic}[1]
\Require Degree sequences $\mathbf{a} \in \mathbb{N}_{+}^{m}$ and $\mathbf{b} \in \mathbb{N}_{+}^{n}$
\Ensure A directed graph consistent with $\mathbf{a}$ and $\mathbf{b}$

\State Initialize $\widetilde{\mathbf{a}} \gets \mathbf{a} + \mathbf{1}$, $\widetilde{\mathbf{b}} \gets \mathbf{b} + \mathbf{1}$, an empty bipartite graph $\mathbf{G}(\widetilde{\mathbf{a}}, \widetilde{\mathbf{b}})$

\For{$j = 1$ \textbf{to} $n$}
  \State Group $\widetilde{\mathbf{a}}^{[j]}$ into $p$ groups $(\alpha_k, m_k)$ with $\alpha_1 < \alpha_2 < \cdots < \alpha_p$

  \For{$k = 1$ \textbf{to} $p$}
    \State Compute the interval $[\ell_k,u_k]$ via \eqref{eq:directed-interval-1}--\eqref{eq:directed-interval-3}
    \State Obtain the approximate weight vector $\mathbf{w}^{\mathrm{ED}}$ via \eqref{eq:weight-for-efficient}
    \State Sample $\mathscr{F} \!\left(k\right) \in [\ell_k,u_k]$ according to $\mathbf{w}^{\mathrm{ED}}$
    \If{$k=k_j$}
      \State Connect node $[\widetilde{\mathbf{b}}]_j$ with its counterpart node to form a self-loop
      \State Uniformly select $\mathscr{F} \!\left(k\right)-1$ distinct remaining nodes in group $k$ and connect them to the node $[\widetilde{\mathbf{b}}]_j$ in $\mathbf{G}(\widetilde{\mathbf{a}}, \widetilde{\mathbf{b}})$
    \Else
      \State Uniformly select $\mathscr{F} \!\left(k\right)$ distinct nodes in group $k$ and connect them to the node $[\widetilde{\mathbf{b}}]_j$ in $\mathbf{G}(\widetilde{\mathbf{a}}, \widetilde{\mathbf{b}})$
  \EndIf
  \EndFor
  \If{\textbf{not} \textsc{Feasible}$(\mathbf{G}(\widetilde{\mathbf{a}}, \widetilde{\mathbf{b}}), j)$}
  \Comment{The feasibility verification step}
    \State 
    Resample from $j=1$
  \EndIf
\EndFor

\State Remove all self-loops from $\mathbf{G}(\widetilde{\mathbf{a}}, \widetilde{\mathbf{b}})$ to obtain a directed graph consistent with $\mathbf{a}$ and $\mathbf{b}$

\State \textbf{return} the resulting directed graph
\end{algorithmic}
}
\end{algorithm}

\section{Pseudocode of the undirected graph algorithms}\label{appx:alg-undirected}

This section presents the pseudocode of the proposed undirected graph algorithms.

\subsection{Pseudocode of the UGE algorithm}\label{appx:alg-UGE}

The pseudocode of the UGE algorithm is presented in Algorithm~\ref{alg:UGE}.

\begin{algorithm}[htpb]
\caption{Undirected Graph Enumeration (UGE) Algorithm}
\label{alg:UGE}
{\renewcommand{\baselinestretch}{0.85}\selectfont
\begin{algorithmic}[1]
\Require Degree sequences $\mathbf{a} = \mathbf{b} \in \mathbb{N}_{+}^{m}$
\Ensure The set $\mathcal{G} \!\left(\mathbf{a}, \mathbf{b}\right)$ of all undirected graphs consistent with the degree sequences $\mathbf{a}$ and $\mathbf{b}$

\State Initialize $\mathcal{G}(\mathbf{a},\mathbf{b}) \gets \emptyset$, $\widetilde{\mathbf{a}} \gets \mathbf{a} + \mathbf{1}$, $\widetilde{\mathbf{b}} \gets \mathbf{b} + \mathbf{1}$, an empty bipartite graph $\mathbf{G}(\widetilde{\mathbf{a}}, \widetilde{\mathbf{b}})$, two empty queues $\mathcal{Q}_{\mathrm{outer}}$ and $\mathcal{Q}_{\mathrm{inner}}$
\State Enqueue the pair $(\mathbf{G}(\widetilde{\mathbf{a}}, \widetilde{\mathbf{b}}), 1)$ into $\mathcal{Q}_{\mathrm{outer}}$ \Comment{Node index $j=1$}

\While{$\mathcal{Q}_{\mathrm{outer}} \neq \emptyset$} \Comment{Outer breadth-first search over $n$ nodes}
  \State Dequeue the first pair $(\mathbf{G}(\widetilde{\mathbf{a}}, \widetilde{\mathbf{b}}), j)$ from $\mathcal{Q}_{\mathrm{outer}}$ and record the node index $j$

  \If{$j > m$}
    
    \State Remove all self-loops from $\mathbf{G}(\widetilde{\mathbf{a}}, \widetilde{\mathbf{b}})$ to obtain an undirected graph satisfying $\mathbf{a}$ and $\mathbf{b}$
    \State Add the resulting undirected graph to $\mathcal{G} \!\left(\mathbf{a}, \mathbf{b}\right)$
    \State \textbf{continue}
  \EndIf
  \State Find an index $j^\star$ of a node with the largest remaining degree in $\widetilde{\mathbf{b}}^{[j]}$
  \State Swap the indices $j$ and $j^\star$ in $\widetilde{\mathbf{b}}^{[j]}$
  \State Group $\widetilde{\mathbf{a}}^{[j]}$ into $p$ groups $(\alpha_k, m_k)$ with $\alpha_1 < \cdots < \alpha_p$
  \State Enqueue the pair $(\mathbf{G}(\widetilde{\mathbf{a}}, \widetilde{\mathbf{b}}), 1)$ into $\mathcal{Q}_{\mathrm{inner}}$ \Comment{Group index $k=1$}

  \While{$\mathcal{Q}_{\mathrm{inner}} \neq \emptyset$} \Comment{Inner breadth-first search over $p$ groups for each node}
    \State Dequeue the first pair $(\mathbf{G}(\widetilde{\mathbf{a}}, \widetilde{\mathbf{b}}), k)$ from $\mathcal{Q}_{\mathrm{inner}}$ and record the group index $k$
    \If{$k > p$}
      \State Establish symmetric connections for the node $\left[\widetilde{\mathbf{a}}\right]_j$ corresponding to those of $[\widetilde{\mathbf{b}}]_j$
      \If{\textsc{Feasible}$(\mathbf{G}(\widetilde{\mathbf{a}}, \widetilde{\mathbf{b}}), j)$}
      \Comment{The feasibility verification step}
        \State Enqueue the pair $(\mathbf{G}(\widetilde{\mathbf{a}}, \widetilde{\mathbf{b}}), j + 1)$ into $\mathcal{Q}_{\mathrm{outer}}$
      \EndIf
    \State \textbf{continue}
    \EndIf
    
    \State Compute the interval $[\ell_k, u_k]$ via \eqref{eq:directed-interval-1}--\eqref{eq:directed-interval-3}
    \For{$\mathscr{F}\!\left(k\right) = \ell_k,\,\dots,u_k$}
      \If{$k=k_j$}
        \State Generate all $\binom{m_k}{\mathscr{F}\!\left(k\right)}$ connection configurations with self-loop
      \Else
        \State Generate all $\binom{m_k}{\mathscr{F}\!\left(k\right)}$ connection configurations
      \EndIf
      
      \For{each connection configuration}
        \State Copy $\mathbf{G}(\widetilde{\mathbf{a}}, \widetilde{\mathbf{b}})$ and denote it by $\mathbf{G}'(\widetilde{\mathbf{a}}, \widetilde{\mathbf{b}})$
        \State Append the connection configuration to $\mathbf{G}'(\widetilde{\mathbf{a}}, \widetilde{\mathbf{b}})$
        \State Enqueue the pair $(\mathbf{G}'(\widetilde{\mathbf{a}}, \widetilde{\mathbf{b}}), k+1)$ into $\mathcal{Q}_{\mathrm{inner}}$
      \EndFor
    \EndFor
  \EndWhile
\EndWhile

\State \textbf{return} $\mathcal{G} \!\left(\mathbf{a}, \mathbf{b}\right)$
\end{algorithmic}
}
\end{algorithm}

\subsection{Pseudocode of the UUGS algorithm}\label{appx:alg-UUGS}

The pseudocode of the UUGS algorithm is presented in Algorithm~\ref{alg:UUGS}.

\begin{algorithm}[htpb]
\caption{Uniform Undirected Graph Sampling (UUGS) Algorithm}
\label{alg:UUGS}
{\renewcommand{\baselinestretch}{0.7}\selectfont
\begin{algorithmic}[1]
\Require Degree sequences $\mathbf{a} = \mathbf{b} \in \mathbb{N}_{+}^{m}$
\Ensure A undirected graph consistent with the degree sequences $\mathbf{a}$ and $\mathbf{b}$

\State Initialize $\widetilde{\mathbf{a}} \gets \mathbf{a} + \mathbf{1}$, $\widetilde{\mathbf{b}} \gets \mathbf{b} + \mathbf{1}$, an empty bipartite graph $\mathbf{G}(\widetilde{\mathbf{a}}, \widetilde{\mathbf{b}})$
\For{$j = 1$ \textbf{to} $m$}
  \State Find an index $j^\star$ of a node with the largest remaining degree in $\widetilde{\mathbf{b}}^{[j]}$
  \State Swap the indices $j$ and $j^\star$ in $\widetilde{\mathbf{b}}^{[j]}$
  \State Group $\widetilde{\mathbf{a}}^{[j]}$ into $p$ groups $(\alpha_k, m_k)$ with $\alpha_1 < \alpha_2 < \cdots < \alpha_p$

  \For{$k = 1$ \textbf{to} $p$}
    \State Compute the interval $[\ell_k,u_k]$ via \eqref{eq:directed-interval-1}--\eqref{eq:directed-interval-3}
    \State Generate all connection configurations for each $\mathscr{F} \!\left(k\right) \in [\ell_k,u_k]$
    \State Obtain the exact weight vector $\mathbf{w}^{\mathrm{UU}}$ via \eqref{eq:directed_weight-for-uniform} and the symmetric connection step
    \State Sample one connection configuration according to $\mathbf{w}^{\mathrm{UU}}$
    \State Realize the sampled connection configuration for the node $[\widetilde{\mathbf{b}}]_j$ in $\mathbf{G}(\widetilde{\mathbf{a}}, \widetilde{\mathbf{b}})$
  \EndFor
  \State Establish symmetric connections for the node $\left[\widetilde{\mathbf{a}}\right]_j$ corresponding to those of $[\widetilde{\mathbf{b}}]_j$
\EndFor

\State Remove all self-loops from $\mathbf{G}(\widetilde{\mathbf{a}}, \widetilde{\mathbf{b}})$ to obtain a directed graph consistent with $\mathbf{a}$ and $\mathbf{b}$

\State \textbf{return} the resulting directed graph
\end{algorithmic}
}
\end{algorithm}

\subsection{Pseudocode of the EUGS algorithm}\label{appx:alg-EUGS}

The pseudocode of the EUGS algorithm is presented in Algorithm~\ref{alg:EUGS}.

\begin{algorithm}[htpb]
\caption{Efficient Undirected Graph Sampling (EUGS) Algorithm}
\label{alg:EUGS}
{\renewcommand{\baselinestretch}{0.7}\selectfont
\begin{algorithmic}[1]
\Require Degree sequences $\mathbf{a} = \mathbf{b} \in \mathbb{N}_{+}^{m}$
\Ensure A undirected graph consistent with the degree sequences $\mathbf{a}$ and $\mathbf{b}$

\State Initialize $\widetilde{\mathbf{a}} \gets \mathbf{a} + \mathbf{1}$, $\widetilde{\mathbf{b}} \gets \mathbf{b} + \mathbf{1}$, an empty bipartite graph $\mathbf{G}(\widetilde{\mathbf{a}}, \widetilde{\mathbf{b}})$

\For{$j = 1$ \textbf{to} $m$}
  \State Find an index $j^\star$ of a node with the largest remaining degree in $\widetilde{\mathbf{b}}^{[j]}$
  \State Swap the indices $j$ and $j^\star$ in $\widetilde{\mathbf{b}}^{[j]}$
  \State Group $\widetilde{\mathbf{a}}^{[j]}$ into $p$ groups $(\alpha_k, m_k)$ with $\alpha_1 < \alpha_2 < \cdots < \alpha_p$

  \For{$k = 1$ \textbf{to} $p$}
    \State Compute the interval $[\ell_k,u_k]$ via \eqref{eq:directed-interval-1}--\eqref{eq:directed-interval-3}
    \State Obtain the approximate weight vector $\mathbf{w}^{\mathrm{EU}}$ via \eqref{eq:weight-for-efficient}
    \State Sample $\mathscr{F} \!\left(k\right) \in [\ell_k,u_k]$ according to $\mathbf{w}^{\mathrm{EU}}$
    \If{$k=k_j$}
      \State Connect node $[\widetilde{\mathbf{b}}]_j$ with its counterpart node to form a self-loop
      \State Uniformly select $\mathscr{F} \!\left(k\right)-1$ distinct remaining nodes in group $k$ and connect them to the node $[\widetilde{\mathbf{b}}]_j$ in $\mathbf{G}(\widetilde{\mathbf{a}}, \widetilde{\mathbf{b}})$
    \Else
      \State Uniformly select $\mathscr{F} \!\left(k\right)$ distinct nodes in group $k$ and connect them to the node $[\widetilde{\mathbf{b}}]_j$ in $\mathbf{G}(\widetilde{\mathbf{a}}, \widetilde{\mathbf{b}})$
  \EndIf
  \EndFor
  \State Establish symmetric connections for the node $\left[\widetilde{\mathbf{a}}\right]_j$ corresponding to those of $[\widetilde{\mathbf{b}}]_j$
  \If{\textbf{not} \textsc{Feasible}$(\mathbf{G}(\widetilde{\mathbf{a}}, \widetilde{\mathbf{b}}), j)$}
  \Comment{The feasibility verification step}
    \State 
    Resample from $j=1$
  \EndIf
\EndFor

\State Remove all self-loops from $\mathbf{G}(\widetilde{\mathbf{a}}, \widetilde{\mathbf{b}})$ to obtain an undirected graph consistent with $\mathbf{a}$ and $\mathbf{b}$

\State \textbf{return} the resulting undirected graph
\end{algorithmic}
}
\end{algorithm}

\section{Pseudocode of the brute-force algorithm}\label{appx:alg-brute-force}

The pseudocode of the brute-force algorithm is presented in Algorithm~\ref{alg:brute-force}.

\begin{algorithm}[htpb]
\caption{Brute-Force Algorithm}
\label{alg:brute-force}
{\renewcommand{\baselinestretch}{0.8}\selectfont
\begin{algorithmic}[1]
\Require Degree sequences $\mathbf{a} \in \mathbb{N}_{+}^{m}$ and $\mathbf{b} \in \mathbb{N}_{+}^{n}$
\Ensure The set $\mathcal{G} \!\left(\mathbf{a}, \mathbf{b}\right)$ of all bipartite graphs consistent with the degree sequences $\mathbf{a}$ and $\mathbf{b}$

\State Initialize $\mathcal{G}(\mathbf{a},\mathbf{b}) \gets \emptyset$

\If{${\textstyle \prod_{i=1}^{m}\binom{n}{[\mathbf{a}]_i}} < {\textstyle \prod_{j=1}^{n}\binom{m}{[\mathbf{b}]_j}}$}
  \State Set $\mathbf{d}^{(1)} \gets \mathbf{a}$, $\mathbf{d}^{(2)} \gets \mathbf{b}$ \Comment{First satisfy $\mathbf{a}$, then filter by $\mathbf{b}$}
\Else
  \State Set $\mathbf{d}^{(1)} \gets \mathbf{b}$, $\mathbf{d}^{(2)} \gets \mathbf{a}$ \Comment{First satisfy $\mathbf{b}$, then filter by $\mathbf{a}$}
\EndIf

\For{$t = 1$ \textbf{to} $\lvert \mathbf{d}^{(1)} \rvert$} \Comment{Construct all connection configurations that satisfy $\mathbf{d}^{(1)}$}

  \State Initialize $\mathcal{C}_t \gets \emptyset$
  \State Generate $\binom{\lvert \mathbf{d}^{(2)} \rvert}{\left[\mathbf{d}^{(1)}\right]_t}$ connection configurations \Comment{Choose $\left[\mathbf{d}^{(1)}\right]_t$ from $\lvert \mathbf{d}^{(2)} \rvert$ nodes in $\mathbf{d}^{(2)}$}
  \State Append these connection configurations to $\mathcal{C}_t$
\EndFor
\State Let $\mathcal{C}_T \gets \mathcal{C}_1 \times \mathcal{C}_2 \times \cdots \times \mathcal{C}_{\lvert \mathbf{d}^{(1)} \rvert}$ \Comment{Cartesian product of $\mathcal{C}_1,\, \mathcal{C}_2,\, \dots,\, \mathcal{C}_{\lvert \mathbf{d}^{(1)} \rvert}$}

\For{$\mathbf{C}_{\mathrm{one}} \in \mathcal{C}_{T}$}
  \State Initialize an empty bipartite graph $\mathbf{G}\!\left(\mathbf{a}, \mathbf{b}\right)$
  \State Append the connection configuration $\mathbf{C}_{\mathrm{one}}$ to $\mathbf{G}\!\left(\mathbf{a}, \mathbf{b}\right)$ 
  \If{$\mathbf{G}\!\left(\mathbf{a}, \mathbf{b}\right)$ \textbf{satisfies} $\mathbf{d}^{(2)}$}
    \State Append $\mathbf{G}\!\left(\mathbf{a}, \mathbf{b}\right)$ to $\mathcal{G}(\mathbf{a},\mathbf{b})$
  \EndIf
\EndFor

\State \textbf{return} $\mathcal{G}(\mathbf{a},\mathbf{b})$
\end{algorithmic}
}
\end{algorithm}

\end{document}